\newtheorem{definition}{Definition}
\newtheorem{observation}{Observation} [section]
\newtheorem{theorem}{Theorem}[section]
\newtheorem{example}{Example}[section] 
\newtheorem{remark}{Remark}[section] 
\newtheorem{proposition}{Proposition}[section] 
\newcommand{\ra}{\rightarrow} 
\newcommand{\callingp}[1]{\callingpn(#1)}
\newcommand{\runningp}[1]{\runningpn(#1)}
\newcommand{\donep}[1]{\donepn(#1)}
\newcommand{\callingpn}{\texttt{act}}
\newcommand{\runningpn}{\texttt{run}}
\newcommand{\donepn}{{\texttt{end}}}
\newcommand{\dropp}{\texttt{drop}}
\newcommand{\wrapp}[1]{\texttt{wrap}(#1)}
\newcommand{\consumep}{\texttt{consume}}
\newcommand{\gainp}{\texttt{gain}}
\newcommand{\fieldp}{\texttt{field}}
\newcommand{\splitp}[2]{\texttt{split}\langle#1\rangle\{#2\}}
\newcommand{\aeminium}{\AE\-mi\-nium}
\newcommand{\To}{\Rightarrow}
\newcommand{\ot}{\otimes}
\newcommand{\voidp}{\texttt{void}}
\newcommand{\thisp}{\texttt{this}}
\newcommand{\pe}[1]{\texttt{end}(#1)}
\newcommand{\syncp}[1]{\texttt{sync}(#1)}
\newcommand{\countp}[1]{\texttt{ct}(#1)}
\newcommand{\dgp}[1]{\texttt{dg}(#1)}
\newcommand{\pr}[1]{\texttt{ref}(#1)}
\newcommand{\gparam}[1]{\texttt{gparam}(#1)}
\newcommand{\pu}[1]{\texttt{unq}(#1)}
\newcommand{\perset}{\texttt{PER}}
\newcommand{\gperset}{\texttt{GPER}}
\newcommand{\restenv}{\texttt{r\_env}}
\newcommand{\unqc}{\texttt{unq}}
\newcommand{\zeroc}{{\bf 0}}
\newcommand{\shrc}{\texttt{shr}}
\newcommand{\nogroup}{\texttt{ndg}}
\newcommand{\nostatement}{\texttt{nst}}
\newcommand{\envp}[1]{\texttt{env}(#1)}
\newcommand{\newgroup}[1]{\texttt{group}\langle#1 \rangle}
\newcommand{\okc}{\texttt{ok}}
\newcommand{\immc}{\texttt{imm}}
\newcommand{\nonec}{\texttt{none}}
\newcommand{\atomicc}{\texttt{atm}}
\newcommand{\concc}{\texttt{conc}}
\newcommand{\pn}[1]{\texttt{none}(#1)}
\newcommand{\localvar}[2]{\texttt{let}\ #1\ \texttt{in}\ #2 \ \texttt{end}}
\def\cD{\mathcal{D}}
\newcommand{\nonep}{\texttt{none}}
\newcommand\zero{0}
\newcommand\one{1}
\newcommand\bang{\mathop{!}}
\newcommand\lolli{\multimap}
\newcommand\with{\binampersand}
\newcommand{\init}{\mathsf{init}}
\newcommand{\prom}{\mathsf{prom}}
\newcommand\lra{\longrightarrow}
\newcommand\thetool{Alcove}
\long\def\comment[#1]{}
\newcommand{\tabs}{\ \ \ \  }
\newcommand{\vx}{\widetilde{x}}
\newcommand{\vy}{\widetilde{y}}
\newcommand{\vt}{\widetilde{t}}
\newcommand{\fv}{{\it fv}}
\newcommand{\absp}[3]{\forall #1 (#2 \to #3)}
\newcommand{\localp}[2]{\exists #1(#2)}
\newcommand{\bangp}[1]{ !\,#1}
\newcommand{\conf}[3]{\langle #1; #2;#3\rangle}
\newcommand{\redirex}{\longrightarrow^{*}}
\newcommand{\rTell}{\rm R_{TELL}}
\newcommand{\rChoice}{\rm R_{CHOICE}}
\newcommand{\rLocal}{\rm R_{LOC}}
\newcommand{\rCall}{\rm R_{CALL}}
\newcommand{\defsymbol}{\stackrel{\textup{\texttt{def}}}  {=}}
\newcommand{\defsymboldelta}{\stackrel{\Delta}  {=}}
\newcommand{\ccp}{\texttt{ccp}}
\newcommand{\lcc}{\texttt{lcc}}
\newcommand{\os}{[\![}
\newcommand{\cs}{]\!]}
\newcommand{\nilp}{\it{nil}}
\newcommand{\lr}[1]{\langle #1 \rangle}
\newcommand{\entails}{\vdash}
\newcommand{\notentails}{\not \entails}
\long\def\comment#1{}
\newcommand{\redi}{\longrightarrow }
\newcommand{\true}{\one}
\def\cC{\mathcal{C}}
\def\cD{\mathcal{D}}
\def\cL{\mathcal{L}}
\def\cV{\mathcal{V}}
\def\cS{\mathcal{S}}
\newcommand{\red}[1]{#1}
\newcommand{\comments}[1]{ }
\newcommand{\hide}[1]{}
\newcommand{\defrule}[2]{
  \renewcommand{\theequation}{${\rm #1}$}
  \begin{equation}
  \scriptsize{
  \boxed{#2}
  }
  \end{equation}
}
\begin{document}
\long\def\comment#1{}

\lstset{language=Java,
          morekeywords={unq,none,imm,shr,let,in,end,attr,attrs,atomic,ng,then,ask,ok,group},
          basicstyle=\ttfamily\scriptsize,breaklines=true,
          numbers=left, numberstyle=\tiny,numbersep=5pt,
          xleftmargin=20pt}


\title{A Concurrent Constraint Programming Interpretation of    Access Permissions}

\author[Olarte, Pimentel and Rueda]{CARLOS OLARTE\\
ECT, Universidade Federal Rio Grande do Norte, Brasil\\
E-mail: carlos.olarte@gmail.com
\and 
ELAINE PIMENTEL \\
DMAT, Universidade Federal Rio Grande do Norte, Brasil\\
E-mail: elaine.pimentel@gmail.com
\and
CAMILO RUEDA\\
DECC, Pontificia Universidad Javeriana Cali, Colombia\\
E-mail: camilo.rueda@gmail.com
}

\pagerange{\pageref{firstpage}--\pageref{lastpage}}
\volume{\textbf{10} (3):}
\jdate{June 2012}
\setcounter{page}{1}
\pubyear{2012}

\maketitle

\begin{abstract}

A recent trend in object oriented (OO) programming languages is the
use of Access Permissions (APs) as an abstraction for controlling concurrent
executions of programs. The use of AP source code annotations defines
a protocol specifying how object references can access the mutable
state of objects. Although the use of APs simplifies the task of
writing concurrent code, an unsystematic use of them can lead to
subtle problems. This paper presents a declarative interpretation of
APs as Linear Concurrent Constraint Programs 
(\lcc). We represent APs as constraints (i.e., formulas in
logic) in an underlying constraint system whose entailment relation
models the transformation rules of APs. Moreover, we use processes in \lcc\
to model the dependencies imposed by APs, thus allowing the
faithful representation of their flow in the program. We verify
relevant properties about AP programs by taking advantage of the 
interpretation of \lcc\ processes as formulas in Girard's intuitionistic
linear logic (ILL). Properties include deadlock detection, program
correctness (whether programs adhere to their AP specifications or not),
and the ability of methods to run concurrently. 
By relying on a focusing discipline for ILL, we  provide a complexity  measure for proofs of the above mentioned properties. 
The effectiveness of our verification techniques
is demonstrated by implementing  the Alcove tool that includes an
animator and a verifier. The former executes the  \lcc\  model, observing the flow of
APs and quickly finding inconsistencies of the APs vis-{\`a}-vis the
implementation. The latter is an automatic theorem prover based on
ILL. This paper is under consideration for publication in Theory and Practice of Logic Programming (TPLP).


 \end{abstract}
\begin{keywords}
Access Permissions, Concurrent Constraint Programming, Linear Logic, Focusing 
\end{keywords}

\section{Introduction}
Reasoning about concurrent programs is much harder than reasoning
about sequential ones. Programmers often find themselves overwhelmed
by the many subtle cases of thread interactions they must be aware of
to decide whether a concurrent program is correct or not. In order to
 ensure
program reliability, the programmer  needs also  to  figure out the right level of thread atomicity  to avoid race conditions, cope with mutual
exclusion requirements  and guarantee deadlock freeness. 

All these problems are  aggravated when software designers write programs
using an object oriented (OO) language and use OO strategies to design
their programs.
In an OO program, objects can have multiple references (called
aliases) that can modify local content concurrently. This
significantly increases the complexity of the design of sound
concurrent programs. For instance, 
%
 data race conditions arise when two object references read and write
concurrently from/to an object memory location.
To cope with data races, one could simply place each object access
within an atomic block, but this would affect negatively program
performance.
A better strategy could be to lock just the objects
that are shared among threads. However, it then becomes hard to
estimate which objects should be shared and which locations should be
protected by locks  just by looking at the program text.

Languages like \aeminium\ \cite{CBD:2009},  Plaid
\cite{DBLP:conf/oopsla/SunshineNSAT11} and Mezzo \cite{Pottier:2013:PPM:2544174.2500598} propose a strategy to design
sound and reliable concurrent programs based on the concept of
\emph{access permissions} (APs) \cite{Boyland:2001}. 
APs are abstractions about the aliased access to an object content and they are  annotated in the source code. 
They permit a direct control of the access to the mutable state of an
object. Making explicit the access to a shared mutable state
facilitates verification and enables parallelization of code. For
instance, a \emph{unique} AP, describing the case when only one
reference to a given object exists, enforces absence of interference
and simplifies verification;  a \emph{shared} AP,
describing the case when an object may be accessed and modified by
multiple references, allows for concurrent executions but makes
verification
trickier. 

Although  APs greatly help   to
devise static strategies for correct concurrent sharing of objects,
the interactions resulting from dynamic bindings (e.g., aliasing of variables) might still lead to
subtle difficulties. Indeed, it may happen that apparently correct
permission assignments in simple programs lead to deadlocks.

We propose a Linear Concurrent Constraint Programming
(\lcc)~\cite{fages01linear} approach for the verification of AP
annotated programs. Concurrent Constraint Programming (\ccp)
\cite{semantic-ccp,cp-book} is a simple model
 for concurrency that extends and subsumes both Concurrent Logic Programming and Constraint Logic Programming. 
Agents in \ccp\ interact by \emph{telling} constraints (i.e., formulas in
logic) into a shared \emph{store} of partial information and
synchronize by \emph{asking} if a given information can be deduced
from the store. In \lcc, constraints are formulas in Girard's
intuitionistic linear logic (ILL) \cite{Girard87} and \emph{ask}
agents are allowed to \emph{consume} tokens of information from the
store.  

We interpret AP programs as \lcc\ agents that \emph{produce}
and \emph{consume} APs when evolving. We use constraints to keep
information about APs, object references, object fields, and method
calls. Moreover, the   constraint entailment relation allows us to  verify compliance of methods
and arguments to their AP  signatures. The constraint system
specifies also how the APs can be transformed during the execution of the program. 

We are  able to verify 
AP programs by  exploiting  the declarative view of \lcc\ agents  as  formulas in ILL. 
The proposed program verification includes:
 \begin{inparaenum} 
 \item [(1)] deadlock detection;
 \item [(2)] whether it is possible for methods to be executed
   concurrently or not; and
 \item [(3)] whether  annotations adhere to the intended semantics
   associated with the flow of APs  or not.
\end{inparaenum} 

The key for this successful specification and analysis 
of AP annotations as ILL formulas
is the use of a linear logic's {\em focusing} discipline  \cite{DBLP:journals/logcom/Andreoli92}. 
In fact, by using focusing we can identify which actions need 
to interact 
with the environment or not, either for choosing a path to follow, 
or for waiting for 
a guard  to be available (e.g., the possession of an AP on a given object). 
This gives a method for
 measuring the complexity of focused ILL (ILLF) proofs in terms of  actions, hence
establishing an upper bound of the complexity for verifying the above mentioned properties. 
Moreover, as shown in~\cite{TCS16}, focusing guarantees that the  interpretation  of $\lcc$\ processes  as  ILL formulas is {\em adequate} at the highest level (full completeness of derivations): one step of computation (in \lcc)  corresponds  to one step of logical reasoning. Hence, our 
encodings of 
AP annotations as ILL formulas is faithful w.r.t. the proposed \lcc\ model. 

The contributions of  this work  are three-fold: 
\begin{inparaenum}
\item [(1)] the definition of a logical  semantics for APs, 
\item [(2)] provision of a 
  procedure for the verification of the above mentioned properties
 as well as a complexity analysis for it, and 
%
%
\item [(3)] the implementation of the verification approach as the
  \thetool\ tool  (\url{http://subsell.logic.at/alcove2/}).
\end{inparaenum} 
 The logical  structure we impose on APs thus allows us to formally  reason about the behavior of AP based programs and give a declarative account of the meaning of these annotations. 
 It is worth noticing that we are not considering a specific
AP based language. Instead, we give a logical meaning to the
machinery of APs and type states (see Section \ref{sec:con}) present in different  languages. This allows us to provide
static analyses independent of the runtime system at hand. For concreteness,
 we borrow the AP model of \aeminium~\cite{CBD:2009}, a  concurrent OO programming language  based on the idea of APs.

The paper is organized as follows. Section \ref{sec:aem} presents the
syntax of the AP based language used here and
Section~\ref{subsec:lcc-def} recalls \lcc. Section~\ref{sec:encoding}
presents the interpretation of AP programs as \lcc\ agents. We also show how the
proposed model is a runnable specification that allows observing the
flow of a program's permissions. We implemented this model as the Alcove 
LCC Animator explained in
Section~\ref{sec:tool}. Section~\ref{sec:verification} describes our
approach to program verification and its implementation as the Alcove LL
prover. It also presents a complexity analysis of the proposed
verification.  Two compelling examples  of our framework are
described in Section \ref{sec:app}: the verification of a critical
zone management system and a concurrent producer-consumer system.
Section~\ref{sec:conclusion} concludes the paper.

A preliminary short version of this paper was published in
\cite{DBLP:conf/ppdp/OlartePRC12}. The present paper gives many more
examples and explanations and provides precise technical details. In
particular, in this paper we identify the fragment of \lcc\ (and ILL)
required for the specification of AP programs and we show that this
fragment allows for efficient verification techniques.
Moreover, the
language (and analyses) considered here take into account \emph{Data
  Group Permissions} \cite{DBLP:conf/oopsla/Leino98,CBD:2009}, a
powerful abstraction that adds application level dependencies without
sacrificing concurrency (see Section \ref{sec:data-groups}). 


\label{sec:preliminaries}
\section{Access Permissions in Object Oriented Programs}\label{sec:aem}
\label{sec:aeminium}
\red{We start with an intuitive description of access permissions and data group access permissions. In Section \ref{sec:langAP}, we give a formal account of them. }

Access permissions (APs) are abstractions describing how objects are
accessed. Assume a variable $x$ that points to the object $o$. A
 \emph{unique} permission to the reference $x$ guarantees that $x$ is the sole reference to
object $o$. A \emph{shared} permission provides $x$ with reading
and modifying access to  $o$, which allows other references to
$o$ (called \emph{aliases}) to exist and to read from it or to modify it. The
\emph{immutable} permission  provides $x$ with read-only access to
$o$, and allows any other reference to  $o$ to exist and to read
from it. 
\begin{figure}
\begin{subfigure}{.75\textwidth}
\vspace{-5cm}
\begin{lstlisting}
class stats {...} // Definition of statistics 
class collection { // Collection of elements
 collection() none(this) => unq(this) {...}//constructor
 sort() unq(this) => unq(this) {...}
 print() imm(this) => imm(this) {...}
 compStats(stats s) imm(this),unq(s) => imm(this),unq(s) {...} 
 removeDuplicates() unq(this) => unq(this){...}}
main() {
  let collection c, stats s in
    c := new collection() 
    s := new stats()
    c.sort()
    c.print()
    c.compStats(s)
    c.removeDuplicates() 
  end}
\end{lstlisting}
\end{subfigure}
\begin{subfigure}[b]{.23\textwidth}
\hspace{-1.0cm}\includegraphics[scale=0.55]{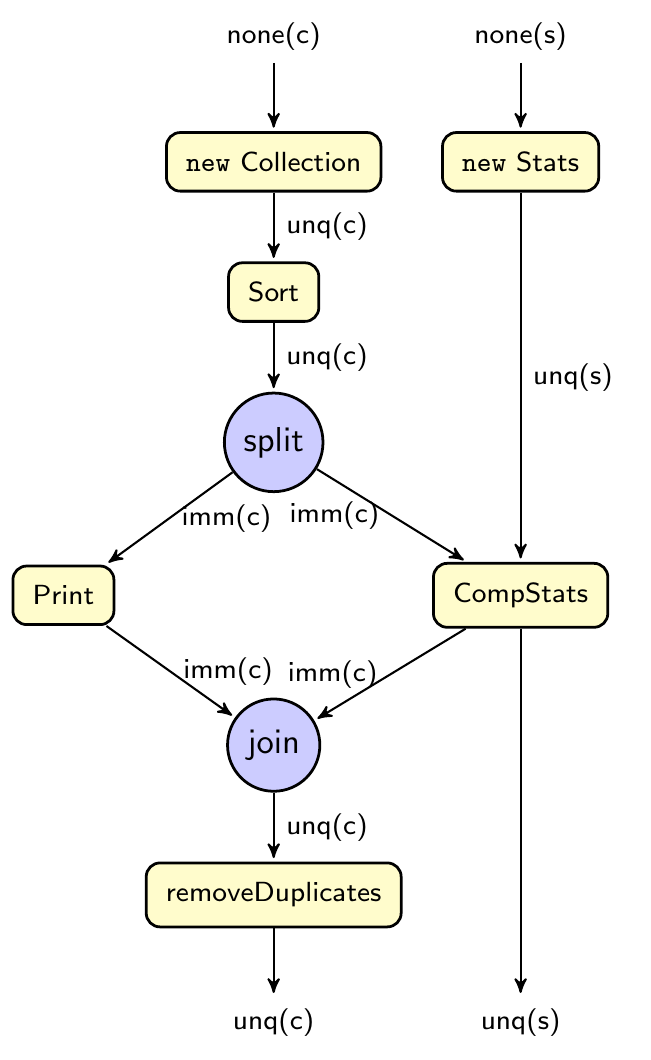}
\end{subfigure}
\caption{Example of an AP annotated program and \red{its permission flow graph.}\label{fig:ae-code}}
\end{figure} 
Let us use a simple example to explain  APs and the  \emph{concurrency-by-default} behavior \cite{CBD:2009} they offer. 
Figure \ref{fig:ae-code} shows a program, taken and slightly modified
from \cite{CBD:2009}, that operates over a collection of
elements. Starting at line 8, the program creates an object of type
$collection$ at line 10 and an object of type $stats$ at line 11. The
program sorts the collection $c$ at line 12, and prints it at line
13. It computes some statistics at line 14, and removes duplicates
from the collection at line 15. Lines 3-7 declare the signatures for
the methods. The constructor  
 builds a
unique reference to a new collection at line 3. 
Methods $sort$ and
$removeDuplicates$ modify the content of the collection and  they 
require a {unique} reference to it. Method $compStats$
requires and returns an {immutable} (read-only) AP to the collection
$c$ and a unique AP to the parameter $s$.

\red{Given the AP signature of theses methods, the AP dataflow  is
computed where: (1) conflicting accesses are ordered according to the lexical order of the program; and (2)  non-conflicting instructions  can be executed concurrently. 
For instance, methods in lines 12 and 13 cannot be executed concurrently (since  $sort$ requires a unique permission ) and methods in lines 13 and 14 can be executed concurrently (since both methods require  an immutable permission on $c$). Finally, the method in line $15$ cannot be executed concurrently with $compStats$ since $removeDuplicates$ requires a unique AP. Hence, what we observe is that the 
{unique} permission returned by the constructor is consumed by the call of method $sort$.  Once this method terminates, the
unique permission can be split into two {immutable}
permissions, and methods $print$ and $compStats$ can be executed
concurrently. Once both methods have finished, the
{immutable} APs are {joined back} into a {unique}
AP, and the method $removeDuplicates$ can be executed.}


\subsection{Share Permissions and Data Groups}
\label{sec:data-groups}
As we just showed, unique permissions can be split into several immutable APs to allow 
multiple references to \emph{read}, 
simultaneously,   the state of an object. Therefore,
from the AP annotations, the programming language can  determine, automatically,  the instructions that can be executed concurrently and those that need to be executed sequentially (Figure \ref{fig:ae-code}). 

In the  case of   \emph{share} APs, 
  several references can \emph{modify} concurrently the state of the same object. \red{Hence, the programmer needs additional  control structures to  make explicit the parts of the code that can be executed concurrently. }
 \red{Consider for instance the following excerpt of code: }
\begin{lstlisting}
let Subject s, Observer o1, Observer o2 in
  s := new Subject()
  o1 := new Observer(s) // Requires a share permission on s
  o2 := new Observer(s)
  s.update() // Requires a share permission on s
  s.update() 
\end{lstlisting}

where the  \red{constructor $Observer$ as well as the method $update$ require a share permission on $s$. 
Assume also that the intended behaviour of the program is that the method $update$  should be executed only after the instantiation of the \emph{Observer} objects.  }

\red{If we were to handle share permissions as we did with immutable permissions in the previous section, once the new instance of $Subject$ is created in line 2, the unique permission $s$ has on it can be split into several share permissions. Hence, statements in lines 3 to 6 could be executed concurrently. This means that  a possible  run of the program may execute the method $update$ in lines 5 and 6 before instantiating the $Observers$ in lines 3 and 4, which does 
not comply with the intended behavior of the program.}
%

Higher-level dependencies in AP  programs can be defined by means
of \emph{Data Groups} (DGs) \cite{DBLP:conf/oopsla/Leino98} as in
\cite{CBD:2009,DBLP:journals/toplas/StorkNSMFMA14}. Intuitively, a DG represents a collection of objects
and it controls the flow of share permissions on them.  For that, two
kinds of \emph{Data Group Access Permissions} (DGAPs) are defined: an
\emph{atomic} permission  provides exclusive access to the
DG, much like a unique AP for objects. Then, working on an atomic DGAP
leads to the sequentialization of the code; on the contrary, a 
\emph{concurrent} DGAP 
 allows other DGAPs to coexist on the same DG. Therefore,
a concurrent DGAP  allows for the parallel execution of the code.

\begin{figure}
\hspace{-1.8cm}
\begin{subfigure}{.62\textwidth}
\vspace{-3cm}
\begin{lstlisting}
class Subject <dg>{
   Subject() none(this) => unq(this) 
   update() shr : dg(this) => shr : dg(this)}
class Observer<dg>{
   Observer(Subject<dg> s) none(this),shr:dg(s)=> unq(this),shr:dg(s)}
main(){
  group<g>
  let Subject s, Observer o1, Observer o2 in
   split(g){
      s := new Subject<g>()
      o1 := new Observer<g>(s) 
      o2 := new Observer<g>(s) }
   split(g){ 
      s.update() 
      s.update()}
  }
\end{lstlisting}
\end{subfigure}
\begin{subfigure}{.25\textwidth}
\includegraphics[scale=0.55]{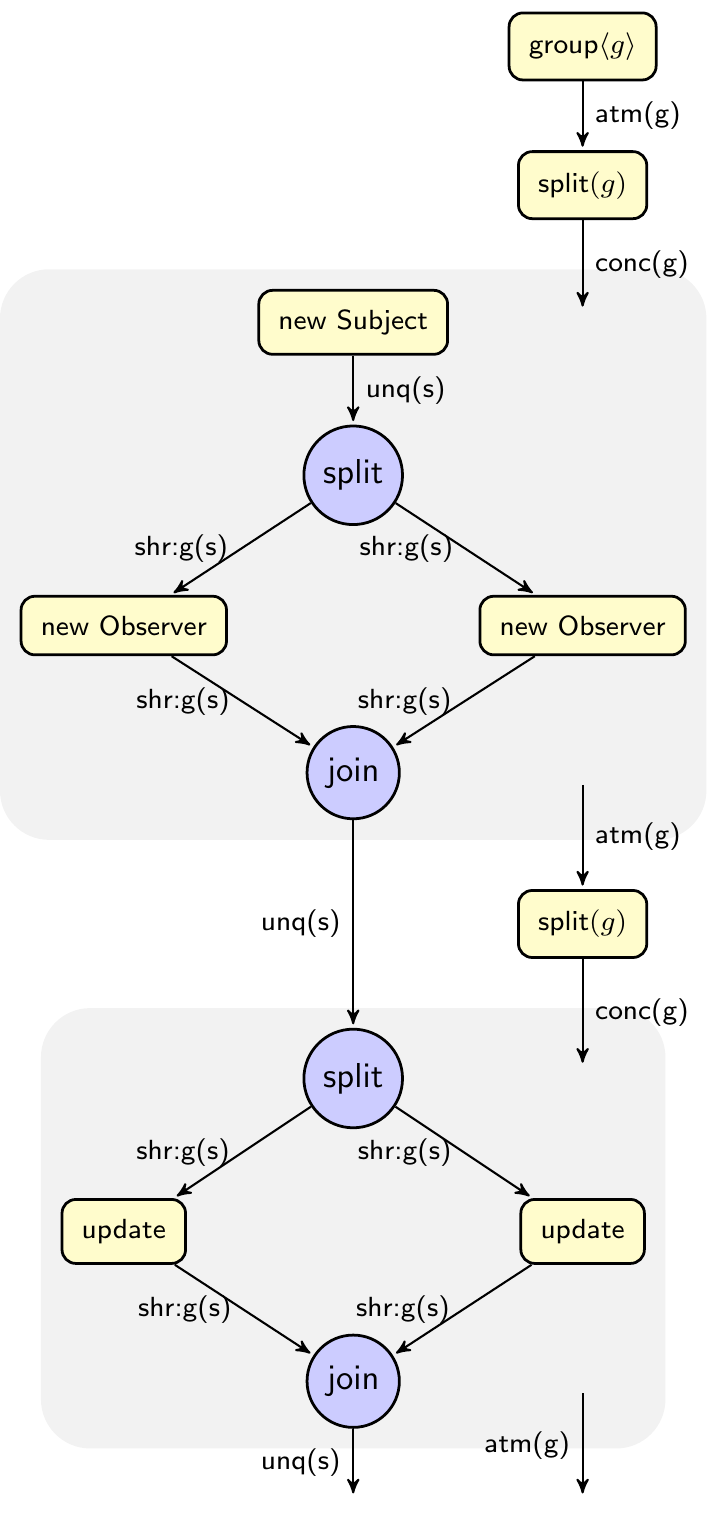}
\end{subfigure}
\caption{Subjects and Observers with data groups. \label{fig:s-o-DG}}
\end{figure}

Consider the code in Figure
\ref{fig:s-o-DG}, taken and slightly modified (syntactically) from \cite{CBD:2009}.
 In
line 1, the class $Subject$ is declared with a DG parameter $dg$. This
 parameter is similar to a type parameter (template) in modern
OO programming languages.  The method $update$ (line 3)
requires a share permission on the DG $dg$ to be invoked. The class
$Observer$ (line 4) is also defined with a DG parameter $dg$ and its
constructor requires an object of type $Subject\lr{dg}$. 

The statement $\newgroup{g}$ (line 7)
creates a DG and assigns to it an atomic DGAP. The
$\texttt{split}(g)$ instruction 
\emph{consumes}  such atomic  permission on $g$ and splits it into
three concurrent DGAP, one per each statement in the block (lines
10,11 and 12). According to the data dependencies, statements in lines 11 and
12 can be executed concurrently once $s$ is instantiated in line 10.
When the instructions in lines 11 and 12 have terminated, the  
concurrent DGAPs on $g$ are joined back to an atomic  permission on
$g$. When this happens, 
the $\texttt{split}$
block in line 13 consumes such permission and splits it into  two concurrent DGAPs so that   statements
in lines 14 and 15 can be executed concurrently. Hence, DGAPs enforce  a   \emph{data group dependency} between the two
blocks (in gray in the figure) of code.


\subsection{An Access Permission Based Language}\label{sec:langAP}
This section formalizes the syntax we have used in the previous
examples.  Our language is based on the core calculus $\mu$\aeminium
~\cite{DBLP:journals/toplas/StorkNSMFMA14} for \aeminium\ \cite{CBD:2009}, an
\red{ OO programming language where concurrent behavior arises when methods require non-conflicting APs, as exemplified in the last section}. As a core calculus, it focuses on the
mechanisms to control the flow of  permissions and it abstracts away from other implementation details of the language (e.g., control structures). 
\red{Unlike $\mu$\aeminium, our core calculus  abstracts away also  from:  (1) specific implementation details  to guarantee atomicity (e.g., the \texttt{inatomic} statement  used in the intermediate code of \aeminium\  to keep track of entered atomic blocks), (2)   mechanisms to define internal data groups, i.e., DGs in our language  can only be  declared in the main function, and (3), we do not consider class inheritance.  }

\begin{figure}
{
$
\footnotesize
\begin{array}{lrcl}
\mathrm{(group \ decl.)} & G & ::= & \newgroup{ g_1,\cdots,g_n }\\
\mathrm{(permissions)} & p & ::= & \texttt{unq}  \ \vert \ \texttt{shr}:g  \ \vert \ \texttt{imm} \  \vert \ \texttt{none}  \\
\mathrm{(programs)} & P & ::= & \langle \widetilde{CL} \  main \rangle\\
\mathrm{(class \ decl.)} & CL & ::= & \mathtt{class} \ c\ \langle \widetilde{g} \rangle\  \{\   F  \  \widetilde{\it CTR}  \  \widetilde{M} \ \} \\ 
\mathrm{(types)} & T & ::= & c \langle\widetilde{g}\rangle\\
\mathrm{(field \ decl.)} & F & ::= & \cdot \mid \mathtt{attr}\  \widetilde{T \ a}\\
\mathrm{(method \ decl.)} & {\it CTR} & ::= &  \ c(\widetilde{T \ x})\ \pn{\thisp}, \widetilde{p(x)} \ \ \To \ \  p(\thisp), \widetilde{p(x)} \ \{ s\}\\
 & M & ::= &  \ m(\widetilde{T \ x})\ p(\thisp),\widetilde{p(x)}\ \ \To \ \ p(\thisp) ,\widetilde{p(x)} \ \{ s \}\\

\mathrm{(main)} & main & ::= & \mathtt{main}   \ \{ 
G \ s \}\\
\mathrm{(references)} & r & ::= &  x \ | \ x.a \ | \ \thisp \mid \thisp.a\\
\mathrm{(righthand\ side)} & rhs & ::= &  r \mid null\\

\mathrm{(statements)} & s & ::= & \localvar{\widetilde{T \ x}}{s} \  \vert \  r\langle g  \rangle := rhs \  \vert \ \ r.m  (\widetilde{r}) \  \vert \ \\
& & &  r := \mathtt{new} \ c \langle \widetilde{g}\rangle (\widetilde{r}) \ | \  \splitp{\widetilde{g}}{\widetilde{s}} \ | \   \{\widetilde{s}\}\\
\end{array}
$
}
\caption{Syntax of AP annotated programs. 
 $c$, $m$, $a$, $x$, $g$ ranges, respectively,  over name of classes, methods, fields, variables and data groups. $\widetilde{x}$  denotes a possibly empty sequence of variables $x_1,....,x_n$. This notation is similarly used for other syntactic categories.
\label{fig-syntax}}
\end{figure}
 Programs are built from the   syntax in Fig. \ref{fig-syntax}. 
\red{DGs are declared with $\newgroup{ g_1,\cdots,g_n }$. APs on objects can be unique,  immutable or none (for null references), and  share on the  DG $g$ ($\shrc:g$)}. 
A program consists of a series of class definitions ($\widetilde{CL}$)
and a \emph{main} body. A class can be parametrized in zero or more DG names
($\lr{\widetilde{g}}$) and it contains zero or several fields
($F$), \red{constructors ($\widetilde{\it CTR})$} and methods ($\widetilde{M}$).

A class field (attribute) is declared by using a valid type and a field
name. Note that a type is simply an identifier of a class with its
respective DG parameters.
  
Constructors and methods specify the required permissions for the caller ($p(\thisp)$)
and the arguments ($\widetilde{p(x)}$) as well as the permissions
restored to the environment.

\red{A reference $r$ can be a variable, the self reference $\thisp$ or a  field selection as in $x.a$}.   As for the statements, we have the following:
\begin{itemize}[-]

\item The $\texttt{let}$ constructor allows us to create local variables. 

\item After the assignment $r\langle g\rangle := rhs$,  both $r$ and $rhs$ 
point to the same object (or null) as follows: (1) if $rhs$ has a unique permission on an object $o$, then after the assignment, $r$ and $rhs$
have a share permission $\shrc\!:\!g$ on $o$. This explains the need of ``$\langle g \rangle$'' in the syntax. As syntactic sugar one could decree that 
$r_l:=r_r$ means $r_l\lr{\texttt{default}}:=r_r$ where 
$\texttt{default}$ is a predefined DG; (2) if $rhs$ has a $\shrc\!:\!g$ (resp $\immc$) permission on $o$, then $r$ and $rhs$ end with a $\shrc\!:\!g$ (resp $\immc$) permission on $o$. Finally, (4) if $rhs=null$ or it is a null reference, then $r$ and $rhs$ end with a $\nonec$ permission. 
We note that in assignments, the  right and left hand sides are references. We do not lose generality  since it is possible to unfold more general expressions by using local variables.

\item An object's method can be invoked by using a reference to it with the right number of parameters as in $r.m(\widetilde{r})$. We assume that in a call to a method (or constructor), the
actual parameters are \emph{references} (i.e., variables, including
$\texttt{this}$, and attributes) and not arbitrary expressions. Since
we have parameters by reference, we assume that the returned type is
$\voidp$ and we omit it in the signature.

\item A new instance of a given class is created by   $r := \texttt{new}\ c\langle \widetilde{g} \rangle(\widetilde{r})$ where we specify the required DG parameters $\widetilde{g}$ and actual parameters ($\widetilde{r}$) of the constructor. 

\item For each $g_i$,  $\splitp{g_1,...,g_n}{s_1 \cdots s_m}$  consumes an atomic or a concurrent DGAP on $g_i$. Then, it splits each of such DGAPs into $m$ concurrent DGAPs (one per each statement in the block). Once the statements in the block have finished their execution, the  concurrent DGAPs created are consumed and the original  DGAPs are restored.

\item Finally, we can compose several statements in a block $\{s_1,..,s_n\}$ where the concurrent execution of statements is allowed according to the data dependencies imposed by the APs and the DGAPs:
once $s_i$ has successfully consumed its needed permissions, the execution of $s_{i+1}$ may start concurrently. Moreover, if $s_i$ cannot   acquire the needed permissions, it must wait until such permissions are released by the preceding statements. 
\end{itemize}

\begin{remark}[Circular Recursive Definitions]\label{rem-rec-def}
\red{The AP language in Fig. \ref{fig-syntax} allows us to write recursive  methods. However, the language lacks  control structures (e.g. \emph{if-then-else} statements)  to specify base cases in recursive definitions. 
This language must be understood as a core language to specify the AP mechanisms and not as  a complete OO programming language implementing the usual data and control structures. Hence, we shall assume that  there are no circular recursive definitions in the source program.
Due to the lack of control structures, this is an unavoidable restriction to guarantee  termination of the analyzes in Sections \ref{sec:encoding}  
and \ref{sec:verif}. }
\end{remark}

\paragraph{Dependencies and Execution.} Recall that blocks of sentences are enclosed by curly brackets. Hence, we say that a sentence $s$ occurs in a block if 
$s$ is inside the braces of that block. 
\begin{definition}[Conflicts and Dependencies]\label{def:conf1}
Let $s_i$ and $s_j$ be statements that occur in the same block. 
We say that $s_i$ and $s_j$ are in \emph{conflict} if both statements use an object $o$ in conflicting modes, i.e., either (1) $s_{i}$ or $s_j$ require a unique  permission on $o$, (2) $s_i$ requires a share permission on $o$
and $s_j$ requires an immutable permission on $o$, 
 or (3) $s_i$ and $s_j$ require a share permission on different data groups. 
Two blocks $ \splitp{\widetilde{g}}{s_1,...,s_n}$ 
and $\splitp{\widetilde{g'}}{s'_1,...,s'_m}$ occurring in the same block are in conflict if $\widetilde{g} \cap \widetilde{g'} \neq \emptyset$. 
 \end{definition}

\red{The semantics of AP programs share with other semantics for OO languages (see e.g., \cite{DBLP:journals/toplas/IgarashiPW01})
the rules and contexts to keep track of  references, objects
as well as lookup tables to identify class names, fields and methods with their respective definitions. 
Additionally, in the case of AP programs, the semantics relies on an evaluation context to keep track of  the DGs created as well as the available APs in the system. This context plays an important role in the semantic rules: a statement $s$ is executed only if the evaluation context possesses all the permissions required by $s$. Moreover, in order to allow  parallel executions, once $s$ consumes the needed permissions, the next statement (in the lexical order of the program) is enabled for execution. 
For instance,   in a block of statements $s_1,...,s_n$, the execution starts by  enabling  $s_1$. Each enabled statement $s_i$ checks whether the required permissions are available. If this is the case, such permissions are consumed, $s_i$ starts its execution and the next statement $s_{i+1}$ is enabled. When $s_i$ terminates its execution,  the consumed permissions are restored to the environment. On the other side, if the permissions required by  $s_i$ are not available, $s_i$ must wait   until   the needed permissions are released/produced by the preceding statements. Hence,   non-conflicting blocks lead  to concurrent executions
and conflicting blocks are sequentialized according to the lexical order of the program  (as in  Figures \ref{fig:ae-code} and \ref{fig:s-o-DG}). }

\red{The reader may refer to  \cite[Sections 3.2 and 3.3]{DBLP:journals/toplas/StorkNSMFMA14} for the semantic rules of $\mu$\aeminium\ that can be easily adapted to the sublanguage in Figure \ref{fig-syntax}. 
In Section \ref{sec:encoding} we  give a precise definition of the needed evaluation contexts by  using constraints (i.e., formulas in logic) and the state transformation by means of concurrent processes consuming and producing those constraints. }

\section{Linear Concurrent Constraint Programming}
\label{subsec:lcc-def}

Concurrent Constraint Programming (\ccp)~\cite{DBLP:conf/popl/SaraswatR90,semantic-ccp,cp-book} (see a survey in \cite{olarte-constraints}) is a
model for concurrency that combines the traditional operational view
of process calculi with a {declarative} view based on
logic. This allows \ccp\ to benefit from the large set of
reasoning techniques of both process calculi and logic. 

Agents
 in \ccp\ \emph{interact} with each other by \emph{telling}
and \emph{asking} information represented as \emph{constraints} to a
global store. The type of constraints is parametric in a
\emph{constraint system} \cite{semantic-ccp} that specifies the basic constraints  that agents can tell and ask during execution. Such systems can be specified as Scott information systems  as in  \cite{DBLP:conf/popl/SaraswatR90,semantic-ccp} or they can be specified in a suitable fragment of logic  (see e.g., \cite{fages01linear,NPV02}).

The basic constructs (processes) in \ccp\ are:  (1) the
\emph{tell} agent $c$, which adds the constraint $c$ to the store,
thus making it available to the other processes. Once a constraint is added, it cannot be removed from the store (i.e., the store grows
monotonically).  And (2), the \emph{ask}
process $c\to P$, that queries if $c$ can be deduced from the
information in the current store; if so, the agent behaves like $P$,
otherwise, it remains blocked until more information is added to the
store. In this way, ask processes define a simple and powerful synchronization mechanism
based on entailment of constraints. 

Linear Concurrent Constraint Programming (\lcc)~\cite{fages01linear} is a
\ccp-based calculus that considers constraint systems built from a
fragment of Girard's intuitionistic linear logic
(ILL)~\cite{Girard87}. The move to a \emph{linear discipline} permits
ask agents to \emph{consume} information (i.e., constraints) from the
{store}. 

\begin{definition}[Linear Constraint Systems \cite{fages01linear}]\label{def:csystem}
A linear constraint system is a pair $(\cC, \entails)$ where $\cC$ is a set of formulas (linear constraints) built from a signature $\Sigma$ (a set of function and relation symbols), a denumerable set of variables $\cV$ and the following ILL operators: 
\emph{multiplicative} conjunction ($\ot$) and its neutral element
($\one$), the existential  quantifier ($\exists$) and the exponential bang  ($\bang$). 
We shall use $c,c',d,d'$, etc, to denote elements of $\cC$. Moreover, let $\Delta$ be a set of non-logical axioms of the form $\forall \vx. [c \lolli c']$ where all free variables in $c$ and $c'$ are in $\vx$. We say that $d$ \emph{entails} $c$, written as $d \entails c$, iff the sequent $\bang\Delta, d \lra c$ is provable in ILL (Figure \ref{fig:rules-ill}).
\end{definition}

\begin{figure}
\resizebox{\textwidth}{!}{
$
\begin{array}{cccccccc}
\infer[\init]{c \lra c}{}
&\ 
\infer[\top_R]{\Gamma \lra \top}{}
&\  
\infer[1_L]{\Gamma,1 \lra c}{\Gamma \lra c}
&\  
\infer[1_R]{\lra 1}{}
 \\\\\
\infer[\otimes_L]{\Gamma, c_1 \otimes c_2 \lra c}{\Gamma,c_1,c_2 \lra c} 
&\  
\infer[\otimes_R]{\Gamma_1,\Gamma_2 \lra c_1 \otimes c_2}{\Gamma_1 \lra c_1 
\  \  \Gamma_2 \lra c_2}
&\  
\infer[\with_{Li}]{\Gamma, c_1 \with c_2 \lra c}{\Gamma,c_i \lra c} 
&\  
\infer[\with_R]{\Gamma \lra c_1 \with c_2}{\Gamma \lra c_1 \  \  \Gamma\lra c_2}
 \\\\\
\infer[\lolli_L]{\Gamma_1, \Gamma_2, c_1\lolli c_2 \lra c}{\Gamma_1 \lra c_1\  \  \Gamma_2,c_2 \lra c} 
&\  
\infer[\lolli_R]{\Gamma \lra c_1 \lolli c_2}{\Gamma, c_1 \lra c_2 }
&\  
\infer[\oplus_L]{\Gamma, c_1 \oplus c_2 \lra c}{\Gamma,c_1 \lra c
\  \  \Gamma,c_2\lra c} 
&\  
\infer[\oplus_{Ri}]{\Gamma \lra c_1 \oplus c_2}{\Gamma \lra c_i}
\\\\
\infer[\exists_L]{\Gamma,\exists x .c \lra d}{\Gamma,c \lra d \  \ x\notin fv(\Gamma,d)}
&\  
\infer[\exists_R]{\Gamma \lra \exists x.c}{\Gamma \lra c[t/x]}
&\  
\infer[\forall_L]{\Gamma,\forall x .c \lra d}{\Gamma,c[t/x] \lra d}
&\  
\infer[\forall_R]{\Gamma \lra \forall x.c}{\Gamma \lra c \  \ x\notin fv(\Gamma)}
\\\\
\infer[W]{\Gamma,\bangp c \lra d}{\Gamma \lra d}
&\  
\infer[C]{\Gamma,\bangp c \lra d}{\Gamma,\bangp c,\bangp c \lra d}
&\  
\infer[D]{\Gamma,\bangp c \lra d}{\Gamma,c \lra d}
&\  
\infer[\prom]{\bangp \Gamma \lra \bangp d}{\bangp \Gamma \lra d}\\
\end{array}
$
}
\caption{Rules for 
  Intuitionistic Linear Logic (ILL). $fv(c)$ (resp. $fv(\Gamma)$) denotes the set of
  free variables of formula $c$ (resp. multiset $\Gamma$). $\Gamma,\Delta$ denote multisets of formulas. \label{fig:rules-ill}}
\end{figure}
The connective $\otimes$ allows us to conjoin information in the store and $\one$ denotes the empty store. As usual, existential quantification is used to hide information. The exponential  $\bang c$ represents the arbitrary duplication
 of the resource $c$. The entailment  $d \entails c$ means that the information $c$ can be deduced from the information represented by  constraint $d$, \red{possibly using the axioms in the theory $\Delta$. This theory  gives meaning to (uninterpreted) predicates. For instance, if $R$ is a transitive relation, $\Delta$ may contain the axiom $\forall x,y,z. [R(x,y) \otimes R(y,z) \lolli R(x,z)]$. }
 
We assume that  ``$\bangp$'' has a tighter binding than $\otimes$  and so, we understand $\bangp c_1 \otimes c_2$ as $(\bangp c_1) \otimes c_2$. For the rest of the operators we shall explicitly use parenthesis to avoid ambiguities. \red{Given a finite set of indexes $I=\{1,...,n\}$,  we shall use $\bigotimes\limits_{i\in I}F_i$ to denote the formula $F_1 \otimes \cdots \otimes F_n$. }

We note that, according to Definition \ref{def:csystem}, constraints are built from the ILL fragment  $\otimes$, $1$, $\exists$ ,$\bang$. We decided to include all ILL connectives in Figure \ref{fig:rules-ill} (linear implication $\lolli$, 
additive  conjunction $\with$ and disjunction $\oplus$, \red{the universal quantifier $\forall$} and the unit $\top$)
since those connectives will be used to encode \lcc\ processes in Section \ref{sec:verif}. 


\subsection{The Language of Processes} \label{sec:processes}
Similar to other \ccp-based calculi, \lcc, in addition to tell and ask agents, provides constructs for parallel composition, hiding of variables, non-deterministic choices and process definitions and calls. More precisely:  

\begin{definition}[\lcc\ agents \cite{fages01linear}]\label{def:syntax-lcc} 
Agents in \lcc\ are built from constraints as follows: 
$$
P,Q,... ::= c \ | \  \red{\sum_{i\in I}\absp{\vx_i}{c_i}{P_i}}  \ | \ P \parallel Q  \ |  \   \exists \widetilde{x} (P)    \ | \ p(\widetilde{x})
$$
A \lcc\ program takes the form $\cD.P$ where $\cD$  is a set of process definitions  of the form $
p(\widetilde{y}) \defsymboldelta P
$ 
where all free variables of $P$ are in the set of pairwise distinct
variables $\widetilde{y}$. We assume $\cD$ to have a unique process definition for every process name. 
\end{definition}

Let us give some intuitions about the above constructs. The tell agent $c$  adds  constraint $c$ to the current store $d$ producing the new store $d\ot c$.

Consider   the  guarded choice
$Q= \sum\limits_{i\in I} \forall \widetilde{x_i} (c_i \to
  P_i)$ where $I$ is a finite set of indexes. 
 Let $j\in I$, $d$ be the current store and  
$\theta$ be the substitution $[\widetilde{t}/\widetilde{x_j}]$ \red{where $\widetilde{t}$ is a sequence of terms}.
 If  $d\entails d' \ot c_j\theta$ for some $d'$, then $Q$ 
  evolves into $P_j[\widetilde{t}/\widetilde{x_j}]$ and consumes
  $c_j\theta$. 
  If none of the guards   $c_i$  can be deduced
from $d$, the  process $Q$ blocks until more information is added to the
store. \red{Moreover, if many guards can be deduced, one of the alternatives is non-deterministically chosen for execution.  }
To simplify the notation, 
we shall omit ``$\sum\limits_{i\in I}$''
  in $\sum\limits_{i\in I}\forall
\widetilde{x_i}(c_i\to P_i)$
   when $I$ is a singleton; 
if the sequence of variables $\widetilde{x}$ is empty, we shall write $c \to P$ instead of 
$\forall \widetilde{x}(c\to P)$; \red{moreover, 
if $| I| =2$, we shall use ``$+$'' instead of ``$\sum$'' as in $c_1 \to P_1 + c_2 \to P_2$. }
  \\

The \red{interleaved} parallel composition of $P$ and $Q$ is denoted by  $P \parallel Q$. We shall use $\Pi_{i\in I} P_i$ to denote the parallel composition $P_1 \parallel \cdots \parallel P_n$, where $I=\{1,2,...,n\}$. If $I=\emptyset$, then  $\Pi_{i\in I} P_i = 1$ . 

 The agent $\exists \vx(P)$ behaves like  $P$
and binds the variables $\widetilde{x}$ to be local to it.
The processes $\localp{\vx}{P}$ and $\absp{\vx}{c}{P}$, as well as the constraint $\exists \vx(c)$, 
bind the variables $\vx$  in $P$ and $c$.  We shall use   $\fv(P)$ and $\fv(c)$ to denote, respectively, the  set of free variables of  $P$ and $c$.

Finally, given  a process declaration of the form
$
p(\widetilde{y}) \defsymboldelta P
$,   $p(\widetilde{x})$ 
evolves into $P[\widetilde{x}/\widetilde{y}]$. 


\subsection{Operational Semantics}
Before giving a formal definition of the operational semantics of \lcc\ agents, let us give an example of how processes evolve. For that, we shall use 
$\langle P;c\rangle \redi \langle P';c'\rangle$ to denote that the agent
$P$ under store $c$ evolves into the agent $P'$ producing the store
$c'$. This notation will be precisely defined shortly. 

\begin{example}[Consuming Permissions]\label{ex:sos-lcc}
  Let us assume a constraint system with  predicates $\texttt{ref}/3$, $\texttt{ct}/2$; constant symbols  $\unqc$, $\immc$, $\nonec$, $0$, $\nilp$; function symbol $s$ (successor); and equipped with the axiom:
\[
\red{ \Delta = \forall x, o. [\pr{x,o,\immc} \ot \countp{o,s(0)} \lolli \pr{x,o,\unqc}\ot \countp{o,s(0)}]}
 \]  
Informally,  $\Delta$ says that an $\immc$ permission can be \emph{upgraded} to $\unqc$ if there is only one reference pointing to $o$. 

Consider now  the  processes 
\[
\scriptsize
\begin{array}{lll}
P_1 & = & \pr{x,o_x, \immc} \ot \countp{o_x,s(0)} \\
P_2 & = &\pr{y,o_y, \immc} \ot \pr{z,o_y, \immc}   \ot\countp{o_y,s(s(0))} \\
Q & = & \red{\forall o (\pr{x,o,\unqc} \otimes \countp{o,s(0)} \to Q' )} \\
Q' & = &   \red{\pr{x,\nilp,\nonec} \ot \countp{o,0}}\\
R & = & \forall o (\pr{y,o,\unqc}  \to R' )\\
\end{array}
\]
  Roughly, $P_1$ adds to the store the information required to
state that $x$ points to $o_x$ with permission $\immc$ and that there is exactly  one reference to $o_x$. $P_2$ states that there are two references ($y$ and $z$) pointing to the same object $o_y$.  Process $Q$, in order to evolve,  requires  $x$ to have a unique permission on $o_x$. \red{Finally, $R$ is asking whether $y$ has a unique permission on a given object $o$ to  execute $R'$ (not specified here). }

\red{Starting from the configuration $\langle P_1\parallel P_2 \parallel Q \parallel R ; \one \rangle$, we observe the  derivation below}:

\resizebox{\textwidth}{!}
{
$
\begin{array}{lll}
{\scriptstyle\bf (1)} & & \langle P_1\parallel P_2 \parallel Q \parallel R ; \one \rangle  \\
{\scriptstyle\bf (2)}& \redi&  \langle P_1 \parallel Q \parallel R ; \one \ot 
\pr{y,o_y, \immc} \ot \pr{z,o_y, \immc}   \ot\countp{o_y,s(s(0))}
 \rangle    \\
{\scriptstyle\bf (3)}& \redi& \langle  Q \parallel R ; \one \ot 
\pr{y,o_y, \immc} \ot \pr{z,o_y, \immc}   \ot\countp{o_y,s(s(0))} \ot \pr{x,o_x, \immc} \ot \countp{o_x,s(0)}   \\
{\scriptstyle\bf (4)}&\redi&  \langle Q'[o_x/o]  \parallel R ; \pr{y,o_y, \immc} \ot \pr{z,o_y, \immc}   \ot\countp{o_y,s(s(0))}    \rangle \\
{\scriptstyle\bf (5)}&\redi&  \langle  R ; \pr{y,o_y, \immc} \ot \pr{z,o_y, \immc}   \ot\countp{o_y,s(s(0))} \ot \pr{x,\nilp,\nonec} \ot \countp{o_x,0}    \rangle 
 \end{array}
$
}

\ \\ From the initial store $\one$, neither $Q$ nor $R$ can  deduce
their guards and they remain blocked (line 1). Tell processes  $P_1$ (line 3) and $P_2$ (line 2) evolve by adding information to the store. 
\red{Let $d$ (resp. $d'$) be the store in the configuration of line 3 (resp. line 4) and    $c$ be the guard of  the ask agent $Q$. We note that 
  $d \entails d' \otimes c [o_x / o]$. Recall that checking this entailment amounts to prove in ILL the sequent $!\Delta , d \lra d' \otimes c[o_x/o]$ (Definition \ref{def:csystem}). In this case, the axiom $\Delta$ allows us to  transform
$\pr{x,o_x,\immc} \ot \countp{o_x,s(0)}$ into $\pr{x,o_x,\unqc}\ot \countp{o_x,s(0)}$. Hence,   $Q$ reduces to the tell agent $Q'[o_x/o]$ and  consumes part of the store leading to the store  $d'$ in line 4.} In line 5, $Q'[o_x/o]$ adds more information to the store, namely,  $x$ points to $null$ and there are no references pointing to $o_x$. Note that $R$ remains blocked since the guard $\pr{y,o,\unqc}$ cannot be entailed.


\end{example}

\begin{figure}
\resizebox{.98\textwidth}{!}{
$
\begin{array}{c}
\infer[\rTell]{\conf{X}{\Gamma,c}{d}  \redi  \conf{X}{\Gamma}{d\ot c}}{
}
\qquad
\infer[\rChoice]{\conf{X}{\Gamma,\sum_{i\in I}\absp{\vx_i}{c_i}{P_i}}{d} \redi
\conf{X\cup \vy}{\Gamma,P_i[\vt/\vx]}{d'}
}{
d \entails \exists \vy (d' \otimes c_i[\vt/\vx_i]), ~~\vy \cap \fv(X,\Gamma,d) = \emptyset, \ mgc(d',\vt)
}\\\\

\infer[\rLocal]{\conf{X}{\Gamma,\localp{\widetilde{y}}{P}}{d} \redi \conf{X \cup \widetilde{y}}{\Gamma,P}{d}}{\vy \cap X = \vy \cap \fv(\Gamma,d) =   \emptyset} 
\qquad
\infer[\rCall]{\conf{X}{\Gamma,p(\widetilde{x})}{d} \redi 
\conf{X}{\Gamma,P[\widetilde{x}/\widetilde{y}]}{d}
}{p(\widetilde{y}) \defsymboldelta P \mbox{\ is a process definition}}
\end{array}
$
}
\caption{Operational semantics of \lcc. $\fv(\Gamma,d)$ means $\fv(\Gamma) \cup \fv(d)$. $\fv(X,\Gamma,d)$
means $\fv(\Gamma,d) \cup X$.  The notion of  most general choice ($mgc(d',\vt)$)  is 
 in Definition~\ref{def:mgc}.
\label{fig:sos}}
\end{figure}

\paragraph{Operational Semantics} 
Let us extend the processes-store configurations used in Example \ref{ex:sos-lcc}  to consider configurations of the form $\langle X;\Gamma;c \rangle$. Here $X$ is the set of local (\emph{hidden}) variables in $\Gamma$ and $c$, $\Gamma$ is a multiset of processes of the form $P_1,...,P_n$ representing the parallel composition $P_1 \parallel \cdots \parallel P_n$, and  $c$ represents the current store. In what follows, we shall indistinguishably use the notation of multiset as parallel composition of processes. 

The transition relation $\redi$ defined on  configurations is the least relation satisfying the rules in Figure \ref{fig:sos}.  \red{We shall use 
  $\redi^*$ to denote the reflexive and transitive closure of $\redi$. }
It is easy to see that  rules $\rTell$, $\rLocal$  and $\rCall$  realize the behavioral intuition given in the previous section. Let us explain the Rule $\rChoice$. 
Recall that the process 
$\sum\limits_{i\in I}\forall
\widetilde{x_i}(c_i\to P_i)$
executes $P_j[\vt/\vx_j]$ if $c_j[\vt/\vx_i]$ can be deduced from the current store $d$, i.e., $d\entails d' \otimes c_j[\vt/\vx]$. Moreover, the constraint $c_j[\vt/\vx]$ is consumed from $d$ leading to the new store $d'$. 
Hence,  $d'$ must be the most general choice  in the following sense:

\begin{definition}[Most general choice ($mgc$) \cite{DBLP:conf/ppdp/Martinez10,DBLP:journals/tplp/Haemmerle11}]\label{def:mgc}
Consider the entailment $d \entails \exists \vy (e \otimes c[\vt/\vx])$
and assume that \red{$\vy\cap fv(d) = \emptyset$}.   Assume also that  $d \entails \exists \vy (e' \otimes c[\vt'/\vx])$ for  an arbitrary $e'$ and $\vt'$. 
We say that 
  $e$ and $\vt$ are the most general choices, notation $mgc(e,\vt)$, whenever  $e' \entails e $ implies $e\entails e'$ and $c[\vt/\vx ] \entails c[\vt'/\vx ]$.
\end{definition}

\red{The $mgc$ requirement in rule $\rChoice$ 
prevents from  an unwanted weakening of the store. For instance, 
consider the ask agent $Q = c\to P$. We know  that  $\bang c$ entails $c \otimes 1$ (i.e., $\bang c \entails c \otimes 1$). Hence, without the $mgc$ condition,  $Q$  may consume $\bang c$  leading to the store $\one$. This is not satisfactory since $Q$ did not consume the \emph{minimal information} required to entail its guard. In this particular case, we have to consider the entailment $\bang c \entails \bang c \otimes c$ where $Q$ can  entail its guard and the store remains the same. For  further details, please refer to  \cite{DBLP:journals/tplp/Haemmerle11}. }


\paragraph{Sequential Composition}
In the subsequent sections we shall use the  derived operator $P;Q$ that delays the execution of $Q$ until \red{$P$ signals its termination.} This operator can be encoded in \lcc\ as follows. Let $z$ be a variable  that does not
occur  in $P$ nor in $Q$ and let $\syncp{\cdot}$ be an uninterpreted predicate symbol that does not occur in the program. 
The process $P;Q$  is defined as
$
\exists z (\cC\os P \cs_z \parallel \syncp{z} \to Q)
$
 where $\os \cdot \cs_z$ is  in Figure \ref{fig:seq-comp}.  Intuitively, $\cC \os P\cs_z$ adds the constraint $\syncp{z}$ \red{to signal the termination of $P$}.   Then, the ask agent $\syncp{z}\to Q$ reduces to $Q$.  Note that in a parallel composition $P\parallel R$, one has to wait for the termination of  both $P$ and $R$ before adding the constraint  $\syncp{z}$. 
\red{For that,   $\cC\os P \parallel R\cs_z$ creates  fresh variables $w_1$ and $w_2$  to
signal the termination of, respectively, $P$ and $R$. Then, 
it adds $\syncp{z}$  only when both $\syncp{w_1}$ and $\syncp{w_2}$ can be deduced}. 
Assume now a process definition of the form 
$p(\vy)\defsymbol P$. We require the process $P$ to emit the constraint $\syncp{z}$ to synchronize with the call $p(\vx)$. We then add an extra parameter to the process  definition ($
p(\vy,z)\defsymboldelta \cC \os P \cs_z
$). Hence  the variable $z$ is  passed as a parameter and used by $\cC\os P \cs_z$ to synchronize with the call $p(\vx,z)$. 

\begin{figure}
\resizebox{\textwidth}{!}{
$
\begin{array}{rll l rll}
\cC\os c\cs_z &=& c \ot \syncp{z} &\quad & 
\cC\os \sum\limits_{i\in I} \forall \vx_i(c_i \to P_i)\cs_z &=& \sum\limits_{i\in I} \cC\os \forall \vx_i(c_i \to P_i)\cs_z \\

\cC\os \forall \vy (c\to P)\cs_z &=&  \forall \vy (c\to \cC\os P\cs_z ) & \quad &
\cC\os\exists y(P) \cs_z &=& \exists y(\cC\os P\cs_z) \\

\multicolumn{7}{l}{\cC\os P_1 \parallel \ldots\parallel P_n\cs_z = \exists w_1\ldots w_n ( \cC\os P_1 \cs_{w_1} \ \parallel \  \ldots\parallel
\cC\os P_n \cs_{w_n}  \parallel \bigotimes\limits_{i\in 1..n}\syncp{w_i}\to \syncp{z})}
\\

%
%

\cC\os p(\widetilde{x}) \cs_z &=& p(\widetilde{x},z) & \quad &
\cC\os p(\vy) \defsymboldelta P \cs_z &= &
 p(\vy,z) \defsymboldelta\cC\os  P \cs_z
\end{array}
$
}
\caption{Definition of the sequential composition $P;Q$ \label{fig:seq-comp}}
\end{figure}
\section{AP Programs as LCC Processes}
\label{sec:encoding}
This section  presents an  interpretation of Access Permissions (APs) and Data Group Access Permissions (DGAPs) as processes in   \lcc. 
We thus endow AP  programs 
with a declarative semantics 
which is adequate  to verify  relevant properties as we show later. 
We start defining the constraint system 
we shall use. Constants, predicate symbols and non-logical axioms are  depicted in Figure \ref{fig:signature} and explained below. 

\red{We shall use $c$, $m$, $a$, $g$, $o$ to range, respectively,  over name of classes, methods, fields, DGs and objects in the source AP language. For variables, we shall use $x,y$ and $u$. 
 We may also use primed and subindexed version of these letters. We shall  use the same letters in our encodings. Hence, if $x$ occurs in a constraint   (see e.g., predicate $\pr{\cdot}$ below), it should be understood as the representation of a variable $x$ in the source language. Finally, we shall use $z,w$ (possible primed or subindexed)  to represent identifiers of statements in the source program. Those variables will appear in the scope of constraints used for  synchronization in the model as, e.g.,  in the constraint $\syncp{\cdot}$. 
}\\

\begin{figure}
\begin{center}
\scriptsize{
\begin{tabular}{| l | l |}
\cline{1-2}
\multicolumn{2}{|c|}{\bf Constant Symbols} \\
\cline{1-2}
$\perset \!\!=\!\{\unqc,\! \shrc, \immc, \nonec\}$& \!\!\!\!\!\!Types of access permissions. \\ \cline{1-2} 
$\gperset=\{ \atomicc, \concc\}$ &  \!\!\!\!\!\!Types of data group access permissions. \\ \cline{1-2} 
$\nogroup$ & \!\!\!\!\!\!Absence of data group. \\ \cline{1-2} 
$\nostatement$ & \!\!\!\!\!\!Absence of statement. \\ \cline{1-2} 
$\nilp$ & \!\!\!\!\!\!Null reference \\ \cline{1-2}
$c\_a$ & \!\!\!\!\!\!For each field $a$ of a class $c$.  \\ \cline{1-2}
$c\_g_i$ & \!\!\!\!\!\!For each group parameter in the class definition $\mathtt{class} \ c\ \langle {g_1,...,g_n} \rangle$  \\
 \cline{1-2}
$g_1,...,g_n$ & \!\!\!\!\!\!For each DG in $\newgroup{ g_1,\cdots,g_n }$  \\ \cline{1-2} 
 \cline{1-2}
\multicolumn{2}{|c|}{\bf Predicate Symbols} \\
\cline{1-2}
$\pr{x,o,p,g}$ & \!\!\!\!\!\!$x$ points to object $o$ with permission $p\in \perset$ and belongs to the data group $g$ \\ \cline{1-2}
$\fieldp(u,o, a)$ & \!\!\!\!\!\!$u$ is the reference to field $a$ of object $o$. \\ \cline{1-2}
$\gparam{c\_g,o, gp}$ & \!\!\!\!\!\!The group parameter $g$ of the object $o$ was instantiated with the data group $gp$. \\ \cline{1-2}
 $\syncp{z}$ & \!\!\!\!\!\!Synchronizing on variable $z$. \\ \cline{1-2} 
  $\red{\callingp{z}}$ & \!\!\!\!\!\!Activate/start statement $z$. \\ \cline{1-2} 
  $\red{\runningp{z}}$ & \!\!\!\!\!\!Statement $z$ is being executed. \\ \cline{1-2} 
  $\red{\donep{z}}$ & \!\!\!\!\!\!End of statement $z$. \\ \cline{1-2} 
 $\countp{o,n}$ & \!\!\!\!\!\!There are $n$ references pointing to object  $o$. \\ \cline{1-2}
 $\dgp{g,p,z}$ &\!\!\!\!\!\!\! Statement $z$ has a data group permission of type   $p\in\gperset$ on the data group $g$.  \\ \cline{1-2}
 \multicolumn{2}{|c|}{\bf Axioms} \\ \cline{1-2}
$\texttt{downgrade}_1$ &  \!\!\!\!\!$\red{\forall x,o,g.[\pr{x,o,\unqc,\nogroup}   \lolli \pr{x,o,\shrc,g}]}$ \\ \cline{1-2}
$\texttt{downgrade}_2$ &  \!\!\!\!\!$\red{\forall x,o. [\pr{x,o,\unqc,\nogroup}  \lolli  \pr{x,o,\immc,\nogroup}]}$ \\\cline{1-2}
  $\texttt{upgrade}_1$ &  \!\!\!\!\!\!\! $\red{\forall x, o, g.[\pr{x,o,\shrc,g} \ot \countp{o,s(\zeroc)} \lolli   \pr{x,o,\unqc,\nogroup} \ot \countp{o,s(\zeroc)}]} $   \\ \cline{1-2}
$\texttt{upgrade}_2$ & \!\!\!\!\!\!$\red{\forall x,o.[ \pr{x,o,\immc,\nogroup} \ot \countp{o,s(\zeroc)}   \lolli   \pr{x,o,\unqc,\nogroup} \ot \countp{o,s(\zeroc)} ]   }$ \\ 
\cline{1-2}
\end{tabular}
}
\end{center}
\caption{Constraint system for Access Permissions. $\zeroc$ denotes the constant ``zero'' and $s(\cdot)$ successor. 
\label{fig:signature}}
\end{figure}

\noindent{\bf Permissions and constants:} Constant symbols in sets $\perset$ and $\gperset$   represent the kind of APs and DGAPs  available in the language.
Since \emph{none}, \emph{unique} and \emph{immutable}  AP are not associated to any data group, we shall use the constant $\nogroup$ to denote ``no-group''. 
Recall that the $\texttt{split}$ command splits a DGAP  into several DGAPs, one per each statement in the block. Then, we require to specify in our model the statement to which the concurrent permission is attached to (see predicate $\dgp{\cdot}$ below).
Since atomic DGAPs  are not attached to any  particular statement in the program, we
use the constant $\nostatement$ to denote  ``no-statement''.
 The constant $\nilp$ is used to denote a null reference. 
Assume  a class $c$ with  an attribute $a$ and a data group parameter $g$. We use the constant symbol $c\_a$   to make reference to  $a$ (see predicate $\fieldp(\cdot)$  below) and a  constant symbol $c\_g$ to make reference to   $g$ (see   $\gparam{\cdot}$ below). We also consider the constant symbols $g_1,...,g_n$ to give meaning to the statement $\newgroup{ g_1,\cdots,g_n }$.
\\

\noindent{\bf References and Fields:} We  use the predicate symbol $\pr{x,o,p,g}$ to represent that the variable $x$ is pointing to object $o$ and it has a permission $p$ on it. The last parameter of this predicate  is used to give meaning to share permissions of the form $\shrc:g$. As we already explained, $g=\nogroup$ when $p\neq \shrc$. 
The predicate  $\fieldp(x_u,o, a)$ associates  the variable $x_u$ to the field $a$  of object $o$. 
Once an object of a given class $c$ with data group parameters is instantiated, the predicate $\gparam{c\_g,o,gp}$ dictates that the group parameter $g$ of the object $o$ was instantiated with the DG  $gp$.
The predicate  $\syncp{z}$ 
is  used in  the definition of  $P;Q$ as explained in the previous section. 
\red{Constraints $\callingp{z}$, $\runningp{z}$ and  $\donep{z}$ 
represent, respectively, that statement $z$ has been called,
it is currently being executed or it has finished. We shall use those constraints as witnesses for verification purposes. }
The number of references (alias) pointing to a given object are modeled with the predicate $\countp{o,n}$. 
Given a data group $g$, the predicate $\dgp{g,p,z}$ dictates that the statement $z$ has   a   data group permission $p \in \gperset$  on  $g$. If $p=\atomicc$  then $z=\nostatement$.
\\

\noindent{\bf Non-logical axioms:} The entailment of the constraint system  allows us to formalize when a given AP can be transformed into another.
Assume that $x$ has a unique permission on $o$. 
Unique permissions can be downgraded to share or immutable permissions as dictated by axioms $\texttt{downgrade}_1$ and 
$\texttt{downgrade}_2$ respectively. 
Axiom $\texttt{upgrade}_1$
(resp. $\texttt{upgrade}_2$) builds a unique permission from a share (resp. immutable) permission. For that,  $x$ needs to be the unique reference with share or immutable permission to the pointed object. Conversions from share permissions into immutable and vice versa require  to first upgrade the permission to unique and then applying the appropriate downgrade axiom. 

\subsection{Modeling Statements.}\label{subsec:enc:stm}
Given an AP annotated program, we shall build a $\lcc$\ program $\mathcal{D}.P$ where  $\mathcal{D}$ includes 
process definitions for each method and constructor of the AP program
(Section \ref{sec:defs} below) 
 and a process definition to encode assignments ($\texttt{assg} $ in Figure \ref{ref:rule-assg}).   The process $P$ represents the encoding of the main body of the AP program where each   statement $s$  is encoded as a \lcc\ process $ \cS\os s
\cs_z^{G}$  that models its behavior. 

The process $ \cS\os s
\cs_z^{G}$
  adheres to the following schema.   
The \lcc\ variable $z$ is used to represent the statement $s$ in the model. 
We assume (by renaming variables if
necessary) that $z$  does not occur in $s$. 
\red{The encoding  uses constrains to signal  three possible states in the execution of $s$. When the program control reaches the statement $s$, the encoding adds the constraint $\callingp{z}$ to signal that $s$ is ready to be executed. When the needed permissions for $s$ are successfully acquired,   $\callingp{z}$ is consumed and  constraints $\syncp{z}$ and $\runningp{z}$ are added. The first one is used to 
synchronize with the rest of the model. More precisely,
the encoding of the next instruction in the program  waits  for constraint $\syncp{z}$ to be posted before starting its execution. In this way, we model the   \emph{data dependencies} resulting from the flow of APs. Constraint $\runningp{z}$ signals that $s$ is currently being executed. Once $s$ has finished and the consumed permissions are restored, the encoding consumes $\runningp{z}$ and adds the constraint $\pe{z}$. }

The  $G$ in 
$\cS\os s\cs_z^{G}$
stands for the set of DGs on which $s$ must have a concurrent DGAP to be executed.
Recall that such permissions are assigned by a $\texttt{split}$ command. Then, we use $G$ to control which DGAPs must be consumed and restored by   $s$. 

In the following we define
$\cS\os s \cs_z^G$ for each kind of statement in the syntax in Figure
\ref{fig-syntax}. For that, the following shorthand will be useful  ($\defsymbol$ must be understood as a shorthand  and not as process definition): 

\defrule{wrap}{
{
\wrapp{P,\{g_1,\cdots,g_n\},z} \defsymbol 
\red{\callingp{z};}(\bigotimes\limits_{i\in 1..n} \dgp{g_i,\concc,z} \to \true );  P \parallel ( \pe{z} \to \bigotimes\limits_{i\in 1..n} \dgp{g_i,\concc,z}))
}
}

Assume that $s$ is an statement and   $P = \cS \os s\cs_z^G$. The process    $\wrapp{P,G,z}$  first consumes all the concurrent DGAPs available for $s$, i.e., those in the set $G$. 
If $G=\emptyset$, then $\bigotimes\limits_{i\in 1..n} \dgp{g_i,\concc,z}$ is defined as $\true$. 
Observe that  once $s$ has terminated (i.e., the constraint $ \pe{z}$ is added to the store) such permissions are restored.\\

\noindent{\bf Assignments. }
\red{We have different cases for the assignment $r\langle g\rangle:= rhs$ depending whether $r$ and $rhs$ are variables or field selections. 
Let us start with the case when} both  are variables as in  $x\lr{g}:=y$ and $x$ is syntactically different from $y$. We  have: 

\defrule{R_{ALIAS}}{
\cS\os x \langle gt \rangle:=y\cs_z^G  \ =  \wrapp{\texttt{assg}(x,y,z,gt),G,z }
}

\noindent
where $\texttt{assg}$ is defined in Figure \ref{ref:rule-assg}. 
\begin{figure} 
\resizebox{\textwidth}{!}{
$
\begin{array}{llll}
  &\texttt{assg}(x,y,z,gt)&  \defsymboldelta  &  \dropp(x) ;   \gainp(x,y,gt); \red{\callingp{z} \to \runningp{z} ; \runningp{z} \to \syncp{z} \otimes !\pe{z}}\\
& \dropp(x)&    \defsymbol    &  \forall o,n,g ((\pr{x,\nilp,\nonec,\nogroup} \to \true) \ +   \sum\limits_{p\in\perset \setminus\{\nonec\}}  \pr{x,o,p,g}\ot \countp{o,s(n)}\to \countp{o,n}) \\
 & \gainp(x,y,gt)&  \defsymbol    &\pr{y,\nilp,\nonec,\nogroup}  \to  \pr{x,\nilp,\nonec,\nogroup}  \ot \pr{y,\nilp,\nonec,\nogroup}\\

  &\multicolumn{3}{l}{ + \ \ \forall o,n((\pr{y,o,\unqc,\nogroup} \ot \countp{o,s(\zeroc)}  \to \pr{y,o,\shrc,gt} \ot   \pr{x,o,\shrc,gt} \ot \countp{o,s(s(\zeroc))} )}  \\
  &\multicolumn{3}{l}{ \tabs\tabs \  +\  (\pr{y,o,\shrc,gt} \ot \countp{o,n} \to \pr{y,o,\shrc,gt} \ot   \pr{x,o,\shrc,gt} \ot \countp{o,s(n)} )}  \\
    &\multicolumn{3}{l}{\tabs\tabs \  +\  (\pr{y,o,\immc,\nogroup} \ot \countp{o,n}  \to   \pr{y,o,\immc,\nogroup} \ot   \pr{x,o,\immc,\nogroup} \ot \countp{o,s(n)}  ))}\end{array}
$}
\caption{Auxiliary definitions for Rule {$\rm R_{ALIAS}$} \label{ref:rule-assg}}
\end{figure}
The variable $x$ loses its
permission to the pointed object $o$, and the object $o$ has one less
reference pointing to it (Definition $\dropp$). Thereafter, $x$
and $y$ point to the same object and the permission of $y$ is split
between $x$ and $y$ as explained in Section \ref{sec:langAP} (Definition $\gainp$).
Finally, once the permission to $y$ is split, the constraints
$\syncp{z}$ and $!\pe{z}$ are added to the store to, respectively,  synchronize with the rest of the
program and mark the termination of the statement. \red{Note in $\texttt{assg}$ the use of the constraints $\callingp{\cdot},\runningp{\cdot}$ and $\donep{\cdot}$. Initially,  constraint $\callingp{z}$ is added (by $\texttt{wrap}$). When the permissions on $x$ and $y$ are split (after  $\texttt{drop}$ 
and $\texttt{gain}$),  $\callingp{z}$ is consumed to produce $\runningp{z}$. Finally, $\runningp{z}$ is consumed to produce $\donep{z}$.  }

\red{Now consider the case $\cS\os x.a\langle  g \rangle:=y\cs_z^G$. }
If the variable $x$ points to the object $o$ of class $c$, then the
field $a$ of $o$ can be accessed via the variable $u$ whenever the constraint 
$\fieldp(u,o,c\_a)$ holds. Intuitively, $u$ points to $x.a$ and then, a
constraint $\pr{u,o',p,g}$ dictates that  $x.a$   points  to $o'$ with permission $p$. As we shall show later, the model of constructors adds the constraint $\bangp \fieldp(u,o,c\_a)$ to establish the connection between objects and their fields. 
The
model of the assignment $\cS\os x.a\langle  g \rangle := y\cs_z^G$ is thus obtained from that
of $\cS\os u\langle  g \rangle := y\cs_z^G$: 
\defrule{\rm R_{ALIAS_F}}{
\begin{array}{l}
\cS\os x.a\langle  g \rangle:=y\cs_z^G =  \forall u,o,p,g (  \pr{x,o,p,g}\ot \fieldp(u,o,c\_a) 
  \to   (\pr{x,o,p,g}; \cS\os u\langle  g \rangle:=y \cs_z^G))
\end{array}
}
\red{The cases 
$ x.a\langle  g  \rangle := y.a' $ and $  x\langle  g \rangle := y.a $
are similar.} \\

\noindent{\bf Let.}  Local variables in the AP program   are defined as  local  variables in \lcc:
\defrule{\rm R_{LOC}}{
 \cS\os \localvar{\widetilde{T\ x}}{s}\cs_z^G   =   \exists \widetilde{x} (\bigotimes\limits_{i\in 1.. |\widetilde{x}|}\pr{x_i,\nilp,\nonec,\nogroup}  ;\cS\os s \cs_z^G \parallel \texttt{GC})  
}

\noindent
where $
\scriptsize{
\texttt{GC} \defsymbol \pe{z} \to \prod_{i\in 1..|\widetilde{x}| }\texttt{drop}(x_i)
}
$. 
Observe that the freshly created   variables point  to $\nilp$ with no permissions. Once $s$ ends its execution, the local variables are destroyed (definition $
\texttt{GC}$). We note also   that, in this case, we do not add the constraint $\syncp{z}$ nor $\pe{z}$. The reason is that the creation of the local variable can be considered as 
``instantaneous'' and then, the process $\cS\os s \cs_z^G$ will be in charge of marking the termination of the statement.
Note that we ignore the type $T$ since 
our model and analyses are concerned only with the flow of access permissions \red{and we assume that the source program is well typed.} \\

\noindent {\bf Block of statements. } In the block  $\{s_1 \cdots \ s_i \ s_j \cdots s_n\}$,
the process modeling $s_j$ runs in parallel with the other processes once $\cS\os s_{i} \cs^G_{z_i} $ adds the constraint $\syncp{z_{i}}$ to the store. 
Hence, what we observe is that the execution of   $s_j$
is delayed until the encoding of $s_{i}$  has successfully consumed the required permissions. After that, even if $s_{i}$ has not terminated, the encoding of $s_j$ can proceed.
Once $\syncp{z_n}$ can be deduced, 
constraint $\syncp{z}$ is added to the store to synchronize with the
rest of the program.  Moreover, the constraint $\pe{z}$ is added only when all the statements $s_1,...,s_n$ have finished their execution:
 \defrule{\rm R_{COMP}}{
 \cS\os \{s_1 \ ...\ s_i\ ...\ s_n\}\cs_z^G = \wrapp{P,G,z}
}
where $P$ is defined as:
\[
\scriptsize{
\begin{array}{lll}
 P & \defsymbol & \red{\callingp{z}\to \runningp{z}} ; \exists z_1,...z_n (\cS\os s_1\cs_{z_1}^G \parallel   \syncp{z_1}\to \cS\os s_2\cs_{z_2}^G \parallel ...  \parallel   \syncp{z_{n-1}}\to \cS \os s_n\cs_{z_n}^G \parallel  \\
 & &\qquad\qquad\qquad\qquad\qquad\   \syncp{z_n}\to \syncp{z} \parallel    (\red{\runningp{z} \otimes} \bigotimes\limits_{i\in1..n} \pe{z_i}) \to !\pe{z} ) \\
\end{array}
}
\]


\noindent{\bf Groups of permissions}. 
In order to define DGs, we add a constraint specifying that each of those groups has an atomic  DGAP. Recall that the constant $\nostatement$ indicates that the atomic permission is not attached to any particular statement in the program: 
\defrule{R_{NEWG}}{\cS\os \newgroup{g_1,...,g_n}  \cs_{z}^G = \bigotimes\limits_{i\in 1..n} \dgp{g_i,\atomicc,\nostatement}}

Similar to the creation of local variables, we do not mark termination of this statement since it can be considered as ``instantaneous''.   \\

\noindent{\bf Split}.  Let $G' = \{g_1,...,g_m\}$. We define the rule 
for $\texttt{split}$ as follows. 
\defrule{\rm R_{SPLIT}}{\cS \os \splitp{G'}{s_1 \cdots s_n} \cs_z^G =  \wrapp{P,G \setminus G',z}} 
where $P$  and definitions $\texttt{gainP}$, $\texttt{addP}$, $\texttt{exec}$ and $\texttt{restoreP}$ are in Figure \ref{fig:def-split}.
\begin{figure}
\[
\scriptsize{
\begin{array}{rll}
P &\defsymbol& \exists z_1,...,z_n,z'(\texttt{gainP}; \red{\callingp{z}\to} \texttt{addP}; \texttt{exec}; (\syncp{z'} \to\texttt{restoreP});  \red{\runningp{z} \to  !\pe{z} } )\\
\texttt{gainP} & \defsymbol &  \dgp{g_1,\concc,z}\to \envp{g_1,\concc,z} +  \dgp{g_1,\atomicc,\nostatement}\to \envp{g_1,\atomicc,\nostatement} \parallel ... \parallel \\
&&    \dgp{g_m,\concc,z}\to \envp{g_m,\concc,z} + \dgp{g_m,\atomicc,\nostatement}\to \envp{g_m,\atomicc,\nostatement}  \\
\texttt{addP} & \defsymbol & \red{\runningp{z} \otimes} \bigotimes\limits_{i\in 1..n}\bigotimes\limits_{j\in 1..m} \dgp{g_j,\concc,z_i}\\
\texttt{exec} & \defsymbol & \cS \os s_1 \cs_{z_1}^{G'} \parallel \syncp{z_1} \to \cS \os s_2 \cs_{z_2}^{G'} \parallel \cdots \parallel \syncp{z_n} \to \syncp{z'} \\
\texttt{restoreP} & \defsymbol & 
\bigotimes\limits_{i\in 1..n}\bigotimes\limits_{j\in 1..m} \dgp{g_j,\concc,z_i} \to \syncp{z} ; \\
&& 
\envp{g_1,\concc,z} \to \dgp{g_1,\concc,z} + \envp{g_1,\atomicc,\nostatement} \to \dgp{g_1,\atomicc,\nostatement} \parallel \cdots \parallel \\
&& \envp{g_m,\concc,z} \to \dgp{g_m,\concc,z} + \envp{g_m,\atomicc,\nostatement} \to \dgp{g_m,\atomicc,\nostatement} 
\end{array}
}
\]
\caption{Auxiliary definitions for Rule {$\rm R_{SPLIT}$}\label{fig:def-split}}
\end{figure}
Before explaining those definitions, consider the following code:
\begin{lstlisting}
split <g1,g2>{
  s1
  split <g2,g3>{
   s2} }
\end{lstlisting}

\noindent
and assume we are encoding the  $\texttt{split}$ statement in line 3. 
Then, we consider the process $\cS \os \splitp{G'}{s_2} \cs_z^G$ where  $G'=\{g_2,g_3\}$. The set $G=\{g_1,g_2\}$ corresponds to the  concurrent DGAPs assigned by the  external $\texttt{split}$ statement in line 1.  
The process $\texttt{gainP}$  consumes either atomic or concurrent permissions for each DG $g_i \in G'$.  Since such permissions  must be restored once the split command has been executed, we distinguish the case when the consumed permission is concurrent ($\concc$) or atomic ($\atomicc$). For that,   we use the auxiliary predicate symbol (constraint) $\envp{\cdot}$ that keeps information of the DGAP consumed. We note that the DGAP  $g_2 \in G \cap G'$ is consumed and then split again to be assigned to the statement $s_2$. 

Now consider the DG $g_1 \in G\setminus G'$.  Since  $g_1 \not\in G'$,  the DGAP on this group  must be consumed ant it must not be split to be assigned to $s_2$. Hence, the  consumption of any $g \in G \setminus G'$ is  handled by the $\wrapp{\cdot}$ process as in the encoding of other statements.

Once we have consumed the appropriate DGAPs,   we add, for each statement in the block, a concurrent  DGAP  for each of the data groups in $G'$ (definition $\texttt{addP}$). 

The process $\texttt{exec}$ is similar to  Rule $\rm R_{COMP}$ but it uses as parameter $G'$. In our example, this means that concurrent DGAPs on $g_2$ and $g_3$ (and not on $g_1$) are assigned to $s_2$. 
As we already saw in the definition of $\rm R_{COMP}$, the constraint 
 $\syncp{z'}$ is added to the store
 once all the statements in the block were able to consume the required APs. At this point, we wait for all the instructions to reestablish their assigned DGAPs  (definition $\texttt{restoreP}$). Recall that this happens  only when the statements terminate  \red{(see definition \texttt{wrap})}.
 
 Finally, with the help of the constraints 
 $\envp{\cdot}$, we  restore the DGAPs to the environment and we add the constraint $!\pe{z}$ to mark the ending of the block.
  \\

\noindent{\bf Method calls} and {\bf Object instantiation}.  In our encoding  we shall write methods and constructors using functional
notation rather than object-oriented notation. For instance,
$x.m(\widetilde{y})$ is written as $c\_m(x,\widetilde{y})$
when $x$ is an object of type $c$. Similarly, 
the expression $c\_c(x,\widetilde{y})$ corresponds to 
$x:= \texttt{new}\ c(\widetilde{y})$.
 As we shall see, for each method  $m(\widetilde{y})$ of the  class $c$, we shall generate a process definition $c\_m(x,\widetilde{y},z)\defsymboldelta P$. 
The extra argument $z$   is used to later add the constraint $\syncp{z}$ to synchronize with the rest of the program. If $x$ is of  type $c$,  the rule is defined as follows:
\defrule{R_{CALL}}{ \cS\os x.m(\widetilde{y}) \cs_z^G  =  \wrapp{c\_m(x,y_1,..,y_n,z),G,z}  }

The case of the call $x.a.m(\widetilde{y})$ can be 
obtained by using the constraint $\fieldp(\cdot)$ as we did  in Rule  
$\rm R_{ALIAS_F}$ for assignments between fields.

The model of an object initialization is defined similarly but we add also as a parameter the instances of the data groups:

\defrule{ R_{NEW}}{\cS\os x:= \texttt{new} \ {c\langle g_1,...,g_n\rangle(\widetilde{y})}\cs_z^G =   \wrapp{{c\_c}(x,\widetilde{y},z,g_1,...,g_n),G,z} 
}
 
\subsection{Modeling  Class Definitions.} \label{sec:defs}
In this section we  describe  function $\cD\os \cdot \cs$
interpreting method and constructors definitions as \lcc\ process definitions. \\

\noindent{\bf Method Definitions.}
Let $
\red{m (\widetilde{c_y\langle\widetilde g_y \rangle\ y})  \ \ p(\thisp),\widetilde{p(y)}\To p'(\thisp),\widetilde{p'(y)}\  \{ s \} }
$ 
be a  method in class $c\langle\widetilde g_x \rangle$. We define
\defrule{\rm R_{MDEF}}{
 \cD \os c.m\cs =   {c\_m}(x,\widetilde{y},z)\defsymboldelta \exists \widetilde{y'},x' ( Consume   ; \syncp{z} ; \red{\callingp{z}\to \runningp{z};Body})
}
where $n= |\widetilde{y}|=|\widetilde{y'}|$, 

\resizebox{\textwidth}{!}{
\red{$
\begin{array}{lll}
Consume  & \defsymbol &  \prod\limits_{i\in 1..n}\texttt{consume}(y_i,y_i',p_i, c)  \parallel  \texttt{consume}(x,x',p, c)\\
Body &  \defsymbol  & \exists z' (\cS \os\widehat{s} \cs_{z'} \parallel  (\syncp{z'}\otimes \pe{z'}) \to \\
&& \ \ \ \ \tabs\tabs\tabs\tabs (\restenv(x, p, x', p', c) \parallel 
\prod\limits_{i\in 1..n} \!\!\!\restenv(y_i, p_i,y_i', p_i', c))) \ ;\runningp{z}\to !\pe{z}  
\end{array}
$}}
and the auxiliary process definitions $\consumep(\cdot)$ and $\restenv(\cdot)$ 
are in Figure \ref{fig:con-res}. 

In the process definition $c\_m(x, \widetilde{y}, z)$, the first   parameter $x$ represents 
the object caller $\texttt{this}$ and the last parameter $z$ is used for synchronization. This definition  first 
declares the local variables $\widetilde{y'}$ and $x'$
to replace the formal parameters ($\widetilde{y}$) and the caller ($x$) by the actual parameters. Next, it 
consumes the required permissions from  $\widetilde{y}$ and  $x$,
and assigns them to the previously mentioned local variables. Finally, the constraint $\syncp{z}$ is added and the encoding of the method's body is executed. \red{In the following we explain the definitions $Consume$ and \emph{Body}. }

\begin{figure}
$\consumep(x,x',p, cname) \defsymbol$

\resizebox{\textwidth}{!}{
$
\begin{array}{lll}
&  &  \left\{ 
  \begin{array}{l}
\forall o (\pr{x,o,p,\nogroup} \ot  \countp{o,n} \to  \pr{x,o,p,\nogroup} \ot  \countp{o,s(n)}
\ot \pr{x',o,p,\nogroup})
\mbox{ if } p =\immc  \\\\
\forall g,o (\gparam{cname\_g,o,g} \to
\pr{x,o,p,g} \ot  \countp{o,n} \to  \pr{x,o,p,g} \ot  \countp{o,s(n)}) \ot \pr{x',o,p,g}  \mbox{ if } p = \shrc:g
 \\\\
\forall o(\pr{x,o,p,\nogroup}  \to  \pr{x',o,p,\nogroup})   \mbox{ if } p \in \{\unqc,\nonec\}
 \end{array}
\right.
\end{array}
$
}

$ \restenv(x,p, x',p', cname)\defsymbol$
$
\scriptsize{
\left\{
\begin{array}{lll}
  & \forall o',n(
  \pr{x',o',p',\nogroup}  \ot \countp{o',s(n)}  \to   \countp{o',n}) \mbox{ \ \ \ if } p,p' = \immc \\\\
   &   \forall o',n,g(\gparam{cname\_g,o',g} \to   \pr{x',o',p',g}  \ot \countp{o',s(n)}  \to  \countp{o',n}) \mbox{ \ \ \ if } p,p' = \shrc:g \\\\
   &\!\!\! \forall o'(
  \pr{x',o',p',\nogroup}\otimes \countp{o',s(\zero)}    \to   \pr{x,o',p',\nogroup} \otimes \countp{o',s(\zero)} \mbox{ \ \ \ if } p, p' = \unqc\\\\
  &\!\!\! \forall o'(
  \pr{x',o',p',\nogroup}  \to   \pr{x,o',p',\nogroup}  \mbox{ \ \ \ if } p =\nonec \\\\
   &\!\!\! \forall o,n,o'(
 (\pr{x,o,p,\nogroup}\ot \countp{o,s(n)}  \to \\
  & \tabs  \countp{o,n});  \pr{x',o',p',\nogroup} \to \pr{x,o',p',\nogroup}) \mbox{ if } p=\immc, p'\in \{\unqc,\nonec\}\\\\
    &\!\!\! \forall o,n,o',g(
 (\gparam{cname\_g,o,g}\ot\pr{x,o,p,g}\ot \countp{o,s(n)}  \to \\
  & \tabs  \countp{o,n}); \pr{x',o',p',\nogroup} \to \pr{x,o',p',\nogroup})  \mbox{ if } p=\shrc, p'\in \{\unqc,\nonec\}
 \end{array}
 \right.
 }
$
\caption{Auxiliary definitions for constructor and method declarations. \label{fig:con-res}}
\end{figure}

The definition of $\texttt{consume}(x,x',p, c)$ in Figure \ref{fig:con-res} can be read as  ``\emph{consume the permission $p$ on the variable $x$ and assign it to the variable $x'$}. 
If the required permission is share or immutable, the permission is split and restored  to allow concurrent executions in the environment that called the method. We recall that in  $p=\shrc:g$, $g$ must be a data group parameter in the class $c$. This explains the last parameter in 
$\texttt{consume}(\cdot)$. We then use the predicate 
$\bang\gparam{c\_g,o,g}$, added by   the encoding of constructors, as we shall see,   to establish the link between the DG parameter and the current DG. Finally,  unique and none permissions are consumed and transferred to the local variables. 

Now we focus on the definition $Body$ where  $\widehat{s}$ denotes $s$ after replacing $y_i$ by 
$y_i'$ and $x$ by $x'$.  Once    $\widehat{s}$ finishes (i.e., it adds $\pe{z'}$ to the store), the references and permissions of the local variables created to handle the parameters are consumed and restored to the environment according to $ \restenv(x,p, x',p', c)$ in Figure \ref{fig:con-res} (consume the permission $p$ on   $x$ and transforms it into a permission $p'$ to the variable $x'$).  Let us give some intuition about the cases considered in  this definition.  Recall that $\consumep$ \emph{replicates} the
$\shrc$ and $\immc$ permissions for the variables internal to the
method. Therefore, we only need to consume those permissions and decrease
the number of references pointing to object $o'$. When the input permissions are $\unqc$ or
$\nonec$,
$\consumep$ \emph{transfers} those
permissions to the local variables and \emph{consumes} the external
references. Then, $\restenv$ needs to restore the external reference
and consume the local one (the number of references pointing to $o'$
remains the same). When the method changes the input permission from
share or immutable into a unique or none, we need to \emph{consume}
first the external reference. Afterwards, we \emph{transfer} the internal
permission and reference to the external variable.


\noindent{\bf Constructor definitions. }
Let  $
c (\widetilde{c_y\lr{\widetilde{g_x}}\ y})  \ \ \pn{\thisp},\widetilde{p(y)}\To p'(\thisp),\widetilde{p'(y)}\  \{ s \} 
$  be a constructor of a \red{parameterized class $c \langle  pg_1,...,pg_k\rangle$}. We define 
\defrule{\rm R_{CDEF}}{
\red{
\!\!\!\begin{array}{lll}
  \cD\os C_D\cs  =  c(x,\widetilde{y},z,g_1,...,g_k) &  \defsymboldelta& \exists \widetilde{y'}, x', o_{new} (  \texttt{gparam-init} ; \consumep'   ; \\
  && \ \ \ \exists \widetilde{u}( \texttt{fields-init}  \ ; \syncp{z} ;\callingp{z}\to\runningp{z};\\
&& \ \ \ \ \   \exists z' (\cS\os  \widehat{s}\cs _{z'} \parallel (\syncp{z'} \ot \pe{z'}) \to\\
&& \ \ \ \ \ \ \ \   (\restenv(x, p,x',p' ,c ) \parallel  \prod\limits_{i\in 1..m}  \restenv(y_i, p_i, y_i', p_i', c)) )) ;\\
&& \ \  \ \ \ \ \ \ \ \runningp{z}\to !\pe{z})\\\\
  \end{array}
}}
 \red{where $n= |\widetilde{y}|=|\widetilde{y'}|$ and }

\resizebox{\textwidth}{!}{
\red{$
\begin{array}{lll}
 \consumep'&  \defsymbol  & \prod\limits_{i\in 1..m}\texttt{consume}(y_i,y_i',p_i, c)  \parallel \\
 &&  \pr{x,\nilp,\nonep,\nogroup}  \to  \pr{x',o_{new},\unqc,\nogroup} \otimes \countp{o_{new},s(\zeroc)}  \\
 \texttt{gparam-init} & \defsymbol & 
\bigotimes\limits_{i\in 1..k}\bang\gparam{c\_{pg_i},o_{new},g_i}\\
\texttt{fields-init} &\defsymbol&
\bangp\fieldp(u_1,o_{new},c \_a_1) \ot\pr{u1,\nilp,\nonec,\nogroup} \ot...\ot \\&&\bangp\fieldp(u_k,o_{new},c\_a_k) \ot   \pr{u_k,\nilp,\nonec,\nogroup}
\end{array}
$}}


The mechanisms for parameter passing, executing the body $\widehat{s}$ and  restoring permissions are the same as in method definitions. The definition $\texttt{consume'}$ is similar to  $\texttt{consume}$ in method definitions 
but, instead of using $consume(x,x',p,c)$, we consume the constraint $\pr{x,\nilp,\nonep,\nostatement}$,
 i.e., $x$ in the statement $x:= \texttt{new} \ c \lr{\widetilde{g}}(\widetilde{y})$ is
restricted to be a null reference. Moreover, the internal variable $x'$ points to the newly created object  $o_{new}$ with permission unique.

The definition $ \texttt{gparam-init}$ allows us to  establish the link between the new object $o_{new}$ and the group parameters. In the constraint $\gparam{c\_{pg_i},o_{new},g_i}$,  
 the constant symbol  $c\_pg_i$ corresponds to the name defined for the DG parameter $pg_i$ of the class $c \langle  pg_1,...,pg_k\rangle$ and  $g_i$ is the current DG passed as parameter.

The initialization of fields is controlled by  the definition $\texttt{fields-init}$. 
The added constraint $\fieldp(u_i,o_{new},c \_a_i)$ 
establishes the link between 
the field $o_{new}.a_i$ and  the null reference $u_i$.


Let  us present a couple of examples to 
show  the proposed model in action. 

\begin{figure}
\begin{lstlisting}[firstnumber=8]
main(){
   let collection c, stats s in
     c := new collection()
     s := new stats()
     c.compStats(s)
     c.compStats(s)
     c.removeDuplicates()
 end}
\end{lstlisting}
\caption{Main program for Example \ref{ex:1}. Class definitions are in Figure \ref{fig:ae-code} \label{fig-ex-AP}}
\end{figure}

\begin{example}[Access Permission Flow]\label{ex:1}
Assume the class definitions $stats$ and $collection$ in Figure \ref{fig:ae-code} and the main body in Figure \ref{fig-ex-AP}. 
The \lcc\ agent modeling the statement in line $10$ calls  $collection\_collection(c,z_{10})$, which triggers the execution of the body of the constructor   (see Rules $\rm R_{CDEF}$  and $\rm R_{CALL}$). Variable $z_{10}$
is the local variable used to synchronize with the rest of the program
(see Rule $\rm R_{COMP}$).  Once the agents modeling the statements in
lines $10$ and $11$ are executed, the following store is observed:
 \[\!\!\!
 \begin{array}{c}
 \exists c,s,o_c,o_s( \pr{c,o_c,\unqc,\nogroup} \!\ot\!\pr{s,o_s,\unqc,\nogroup} \!\ot\!  \countp{o_c,s(\zeroc)}\ot \countp{o_s,s(\zeroc)})
 \end{array}
 \]
 Hence, $c$ (resp. $s$) points to $o_c$ (resp. $o_s$) with a unique
 permission. In \lstinline{c.compStats()},   $c$ requires 
 an immutable permission to $o_c$. The axiom $downgrade_2$ is used to
 entail the guard of $\texttt{consume}$ in the definition of the
 method.  Let $c'$ be the representation of
 $c$ inside the method (see Rule $\rm R_{MDEF}$).
 We notice that when the body of the method is being executed, both
 $c$ and $c'$ have an immutable permission to $o_c$, i.e., the store contains the tokens
 \[
 \pr{c,o_c,\immc,\nogroup} \ot
  \pr{c',o_c,\immc,\nogroup} \ot
  \countp{o_c,s(s(\zeroc))}
 \]
Before executing
 the body of method $compStats$  constraint $\syncp{z_{12}}$ is added,
 so as to allow possible concurrent executions in the main body (see Rule $\rm R_{COMP}$). 
 Hence, the 
 agent modeling the statement in line $13$ can be 
 executed and we have a store with three references with immutable permission to object $o_c$, namely, $c$, $c'$ as before, and  $c''$, the representation of $c$ inside the method $print$.   Now, once  constraint $\syncp{z_{13}}$ is added by the definition of $print$, the process representing the statement in line 14 can be executed. 
 However, this call requires $c$ to have a unique permission
 to $o_c$ which is not possible since the axiom $upgrade_2$ requires
 that $c$ is the sole reference to $o_c$. Hence, the guard
 $\texttt{consume}$ for this call is delayed (synchronized) until the
 permissions on $c'$ and $c''$ are consumed and restored to the environment (see
  $\restenv$ in Rule $\rm R_{MDEF}$).  We then observe that 
  statements in lines 12 and 13 can be executed concurrently but the statement in line 14 is delayed until the termination of the previous ones. 
  \end{example}

  
  \begin{example}[Data Group Permissions Flow]\label{ex:dgp-flow}
  Now consider the program in Figure \ref{fig:s-o-DG}. Figure  \ref{tab:exemple-DG}  shows the stores  generated by the model of this program. We omit some tokens for the sake of readability.  
  \end{example}
  
  \begin{figure}
  \resizebox{.7\textwidth}{!}{
  \begin{tabular}{| l | p{3.9cm} | p{7.8cm}  |}
  \cline{1-3}
  {\bf Line} & {\bf Store} & {\bf Observations} \\
  \cline{1-3}
 7 & $\dgp{g,\atomicc,\nostatement}$& See Rule ${\rm R_{NEWG}}$.  \\
  \cline{1-3}
 8 & $\dgp{g,\atomicc,\nostatement}  \ot \pr{s,\nilp,\nonec,\nogroup} \ \  \ot \pr{o1,\nilp,\nonec,\nogroup}\ot \pr{o1,\nilp,\nonec,\nogroup} $
 &  $s$, $o1$ and $o2$ are null references (see Rule ${\rm R_{LOC}}$).  \\
  \cline{1-3}
 10& $\dgp{g,\concc,z_{10}}\ot \dgp{g,\concc,z_{11}}\ot  \pr{s,\nilp,\nonec,\nogroup} \ot  \cdots  $ & The $\atomicc$ DGAP on $g$ is consumed and split into  $\concc$ permission for statements in lines 11-12 (see Rule $\rm R_{SPLIT}$) \\
   \cline{1-3}
  Before 13(1) & $\dgp{g,\concc,z_{10}}\ot  \cdots \ot \pr{s,o_s,\shrc,g} \ot \pr{o1,oo_1,\unqc,\nogroup}\ot \pr{o2,oo_2,\unqc,\nogroup} $ & Variables $s$ and $obs$ are instantiated. The atomic DGAP has not been restored yet and then, statement in line 13 has to wait. \\
   \cline{1-3}
     Before 13(2) & $\dgp{g,\atomicc,\nostatement}\ot  \pr{s,o_s,\shrc,g} \ot \cdots  $ & Concurrent DGAPs are consumed and the atomic  permission on $g$ is restored (see Rule $\rm R_{SPLIT}$).  \\
   \cline{1-3}
   Before 16 & $\dgp{g,\concc,z_{14}}\ot \dgp{g,\concc,z_{15}} \ot \pr{s,o_s,\shrc,g} \ot \pr{s',o_s,\shrc,g} \ot 
   \pr{s'',o_s,\shrc,g} \ot \pr{o1,oo_1,\unqc,\nogroup}\ot \pr{o2,oo_2,\unqc,\nogroup} $ & There are 3 references to $o_s$: $s$,   $s'$ and  $s''$. The last two   correspond to the internal representation of $s$ in the calls to method  $update$ (see  Rule Rule $\rm R_{MDEF}$). Then, such methods can be executed concurrently. We also see that the $\atomicc$ DGAP was split into $\concc$ DGAP for statements 14 and 15. \\
   \cline{1-3}
   16 &$\dgp{g,\atomicc,\nostatement}\ot \pr{s,o_s,\shrc,g}  \ot \pr{o1,oo_1,\unqc,\nogroup}\ot \pr{o2,oo_2,\unqc,\nogroup} $  
   &  In the end, $s$ is the sole reference to $o_s$ (see $\texttt{r\_env}$ in Rule ${\rm R_{MDEF}}$)   and the atomic DGAP on $g$ is reestablished. \\
   \cline{1-3}
  \end{tabular}
  }
  \caption{Constraints added by the processes in Example \ref{ex:dgp-flow} \label{tab:exemple-DG} (AP code in Figure \ref{fig:s-o-DG})}
  \end{figure}
 


\begin{example}[Deadlocks]\label{blind_call}
Let us consider the following implementation   for the method $compStats$ in  the class $collection$  (see Figure \ref{fig:ae-code})

\begin{lstlisting}
compStats(s)  imm(this), unq(s) => imm(this), unq(s) { 
   ...  
   c.sort() 
   ... }
\end{lstlisting}

Consider the call  \lstinline{c.compStats(s)}  and suppose that, in the \lcc\ model,   variable $c$ points to the object $o_c$. 
When the $compStats$ method  is invoked, the immutable permission is divided between the external reference $c$ and the internal reference $c'$. For this reason, inside the method, reference $c'$ cannot acquire a unique permission for the  invocation of method $sort$ which then blocks. 
Our analysis will thus inform that there is a deadlock, unless, e.g., 
the program includes the statement  $c\lr{g}:=\nilp$ to discard  the permission of 
 $c$  to $o_c$. 

Consider now the following definition of the same method:
\begin{lstlisting}
compStats(s)  unq(this), unq(s) => unq(this), unq(s) {
  ... 
  c.sort() 
 ...}
\end{lstlisting}

When  $compStats$   is invoked, the unique permission is transferred from reference $c$ to (the internal) reference $c'$. The invocation of method $sort$ has thus the right permissions to be executed  and it does not block. 
\end{example}
%
%

\subsection{The  Model as a Runnable Specification}\label{sec:tool}
Models based on the \ccp\ paradigm can be regarded as runnable
specifications, and so we can observe how permissions evolve
during program execution by running the underlying \lcc\
model. We implemented an interpreter of \lcc\ in Java and  used Antlr (\url{http://www.antlr.org}) to generate a parser from  AP programs  into \lcc\ processes following our encoding. The resulting  \lcc\ process is then executed  and    a program trace is output. The interpreter and
the parser have been integrated into Alcove  (Access Permission Linear COnstraints VErifier) Animator, a web application freely available
at \url{http://subsell.logic.at/alcove2/}. The URL further
includes all the examples presented in this section. In the following we explain some outputs of the tool. 



\begin{example}[Trace of Access Permissions]\label{ex:parallel} The program in Figure \ref{fig:ae-code} generates the trace depicted in Figure \ref{fig:trace1}.
For verification purposes, 
the implementation extends the predicates $\callingp{\cdot}$, $\runningp{\cdot}$ and $\donep{\cdot}$ to include also the variable that called the method, the name of the method and the number of line of the call.  Note for instance that the call to $print$ (line 9 in Fig. \ref{fig:trace1}) was marked while the method $sort$ was running (line 7). Nevertheless, the execution of $print$  (line 11) must wait until $sort$ terminates (line 10).  In this trace, the constructor $stats$ (line 5) runs in parallel with $sort$ (line 7).  Finally, the execution of $removeDuplicates$ (line 17) is delayed until the methods $print$ (line 13) and $compStats$ (line 16) terminate. 
Lines 20 and 21  show that both $c$ and $s$  end with a unique  permission to objects \verb|o_4774|
and \verb|o_79106|, respectively (the numbers that follow the variable names
  are generated each time a local variable is created to avoid clash
  of names). 
\end{example}

\begin{figure}
\begin{lstlisting}
act(C_628,collection_collection,line 10 (Z_PAR_814))
run(C_628,collection_collection,line 10 (Z_PAR_814))
act(S_729,stats_stats,line 11 (Z_PAR_915))
end(C_628,collection_collection,line 10 (Z_PAR_814))
run(S_729,stats_stats,line 11 (Z_PAR_915))
act(C_628,collection_sort,line_12 (Z_PAR_1016))
run(C_628,collection_sort,line_12 (Z_PAR_1016))
end(S_729,stats_stats,line 11 (Z_PAR_915))
act(C_628,collection_print,line_13 (Z_PAR_1117))
end(C_628,collection_sort,line_12 (Z_PAR_1016))
run(C_628,collection_print,line_13 (Z_PAR_1117))
act(C_628,collection_compStats,line_14 (Z_PAR_1218))
end(C_628,collection_print,line_13 (Z_PAR_1117))
run(C_628,collection_compStats,line_14 (Z_PAR_1218))
act(C_628,collection_removeDuplicates,line_15 (Z_PAR_1319))
end(C_628,collection_compStats,line_14 (Z_PAR_1218))
run(C_628,collection_removeDuplicates,line_15 (Z_PAR_1319))
end(C_628,collection_removeDuplicates,line_15 (Z_PAR_1319))
  
[ref(C_628,O_4774,unq,ng), ct(O_4774,1)]
[ref(S_729,O_79106,unq,ng), ct(O_79106,1)]
ok()
567 processes Created 
\end{lstlisting}

\caption{Trace generated by the program in Figure \ref{fig:ae-code}
\label{fig:trace1} (Example \ref{ex:parallel})}
\end{figure}

\begin{example}[Deadlock Detection]\label{ex:deadlock} 
  Let  us assume now the class definitions in Figure  \ref{fig:ae-code} and the  
  following \verb|main|:
  \begin{lstlisting}[firstnumber=8]
main(){ 
 group<g>
 let collection c, stats s, stats svar in
    c := new collection() 
    s := new stats()
    svar<g> := s
    c.compStats(s) 
    c.compStats(svar)
end}
\end{lstlisting}

   The assignment in line 13 aliases  $svar$ and $s$ so they share the same permission
  afterwards.
  Therefore, $s$ cannot recover the unique permission to  execute the statement  in line 14, thus leading to a permission deadlock. This bug is detected by \thetool\ as depicted in Figure \ref{fig:deadlock} (line 13). Observe in the trace that  $compstats$ is called  (line 7 in the trace)  but  not executed. Furthermore, both $s$ and $svar$ have a share permission on the same pointed object (lines 17 and 18). Moreover, both   $c$ (\verb|c_644|) and its internal representation inside   $compStats$ (\verb|inner_136172|) have an immutable permission on object $\verb|o_6491|$ (lines 16 and 19). 
  Lines 8-11 show   the suspended \lcc\ processes in the end of the computation that were killed by the scheduler. Particularly, line 10 shows that there is an ask agent trying to consume a unique permission on object \verb|O_142| pointed by \verb|S_745|. 
\end{example}


\begin{figure}  
\begin{lstlisting}
act(C_644,collection_collection,line 10 (Z_PAR_928))
run(C_644,collection_collection,line 10 (Z_PAR_928))
act(S_745,stats_stats,line 11 (Z_PAR_1029))
run(S_745,stats_stats,line 11 (Z_PAR_1029))
end(C_644,collection_collection,line 10 (Z_PAR_928))
end(S_745,stats_stats,line 11 (Z_PAR_1029))
act(C_644,collection_compStats,line_13 (Z_PAR_1231))
[Killed] ask endc(line_13 (Z_PAR_1231)) then ...
[Killed] ask sync(line_14 (Z_PAR_1332)) then ...
[Killed] ask ref(S_745,O_142,unq,ng) then ... + ask 
...
404 processes Created
[FAIL] Token ok not found. End of the program not reached.

VARIABLES
C_644 -> O_6491. imm:ng
S_745 -> O_96123. shr:GRP_461
SVAR_846 -> O_96123. shr:GRP_461
INNER_136172 -> O_6491. imm:ng
\end{lstlisting}
\caption{Trace generated by the program in Example \ref{ex:deadlock}
\label{fig:deadlock}}
\end{figure}

\subsection{Adequacy of the Encoding}
\red{In this section  we present some invariant properties of the encoding and prove it correct. There are three key arguments in our proofs:}

\begin{observation}[Ask agents]\label{obs:ask}
\red{ (1) the ask agents controlling both the APs  (Proposition \ref{obs-inv}) and the state of statements (Proposition \ref{obs-state}) are of the form $c\to P$ where $P$ is a tell agent (and not, e.g., a parallel composition). Hence, in one single transition, the encoding  consumes and produces the needed tokens to maintain the invariants (ruling out intermediate states where the property might not hold). Moreover, (2) such ask agents are preceded  by the sequential composition operator ``;''. This means that, before consuming the needed constraints, some action must have been finished.
In particular, (3)  the ask agent $\callingp{z} \to \runningp{z} $
is executed only when the needed permissions are consumed  and  the ask agent $ \runningp{z}\to \donep{z} $ is executed only after restoring the consumed permissions (Rules ${\rm R_{ALIAS}}$, ${\rm R_ {CDEF}}$ and ${\rm R_ {MDEF}}$).
}
 \end{observation}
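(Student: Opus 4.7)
The plan is to establish each of the three clauses by a systematic syntactic inspection of the encoding rules presented in Section~\ref{sec:encoding}, since Observation~\ref{obs:ask} is really a structural property of the translation $\cS\os\cdot\cs_z^G$ and of the auxiliary process definitions (\texttt{wrap}, \texttt{assg}, \texttt{consume}, $\restenv$, \texttt{gainP}, \texttt{addP}, \texttt{restoreP}). The proof is therefore by case analysis on the encoding rules, complemented by unfolding the sequential composition macro $P;Q$ from Figure~\ref{fig:seq-comp}, which is the sole source of how ask agents get wired together with preceding computations.

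First I would enumerate every occurrence of a guarded (ask) agent that mentions $\actp(\cdot)$, $\runningpn(\cdot)$, $\donepn(\cdot)$, or any AP/DGAP constraint in the image of $\cS\os\cdot\cs$. Inspecting \texttt{wrap}, \texttt{assg}, the body of Rules ${\rm R_{COMP}}$, ${\rm R_{SPLIT}}$, ${\rm R_{MDEF}}$ and ${\rm R_{CDEF}}$, together with \texttt{gainP}, \texttt{addP}, \texttt{restoreP}, \texttt{consume} and \texttt{r\_env}, one observes that in every case the right-hand side of ``$\to$'' is either a single constraint (a tell), a tensor of constraints, or a \emph{sum} of such. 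In particular, no right-hand side is itself a parallel composition, a process call, or a further ask. This justifies clause~(1); the argument is essentially a bookkeeping exercise that can be recorded in a single table indexed by the rules of Figures~\ref{fig-syntax} and \ref{fig:def-split}.

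For clause~(2) I would trace, for each such ask agent, the immediate syntactic context in which it is embedded, and verify that it is always the continuation of a ``$;$''-composition (for instance, in \texttt{assg} the ask $\actp(z)\to\runningpn(z)$ follows $\dropp(x);\gainp(x,y,gt);\cdot$, and in \texttt{wrap} the ask on the group permissions precedes the body $P$ via ``$;$''). This is again a straightforward check, but one has to be careful with the unfolding of $P;Q$: using the clauses of Figure~\ref{fig:seq-comp}, every sequential composition is translated into a $\sync{\cdot}$-guarded ask attached to a fresh variable, so the property ``appears after a \,;\,'' must be read modulo this macro expansion. The subtle case is ${\rm R_{COMP}}$ and ${\rm R_{SPLIT}}$, where several $\sync{\cdot}$ asks are composed in parallel with $\texttt{wrap}$; here I would show that the only asks that can consume the AP constraints sit behind the $\actp(z)\to\runningpn(z)$ guard, so they are always preceded by the termination witness of the previous phase.

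Finally, clause~(3) follows from combining (1) and (2) with the specific placement of the two ask agents $\actp(z)\to\runningpn(z)$ and $\runningpn(z)\to\donepn(z)$ inside ${\rm R_{ALIAS}}$, ${\rm R_{MDEF}}$, ${\rm R_{CDEF}}$ (and the derived rules via \texttt{wrap}). For the first one I would observe that $\actp(z)$ is added by \texttt{wrap} only after the DG guards $\dgp{g_i,\concc,z}$ have been consumed, and the permission-acquiring code (either \texttt{consume}, \texttt{gainP}, or $\dropp;\gainp$) is already sequenced before the ask; for the second one, the ask is placed after $\restenv(\cdot)$/\texttt{restoreP}, so the restoration has finished by the time $\donepn(z)$ becomes derivable. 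The main obstacle I anticipate is purely bureaucratic: keeping the case split exhaustive across all encoding rules and the many macro unfoldings, in particular tracking how $\cC\os\cdot\cs_z$ distributes through parallel compositions in ${\rm R_{COMP}}$ and ${\rm R_{SPLIT}}$, so that no ask on an AP/DGAP constraint is inadvertently left outside a ``$;$''-context.
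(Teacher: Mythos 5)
Your proposal is correct and follows essentially the same route as the paper: the Observation is stated there without a separate proof and is justified (inside the proofs of Propositions~\ref{obs-inv} and~\ref{obs-state}) by exactly the kind of exhaustive syntactic inspection of the encoding rules and macro unfoldings that you describe. One small caution: your universal claim that no guarded body is a parallel composition is too strong if you also enumerate the restoration trigger $(\syncp{z'}\otimes \pe{z'}) \to (\restenv(\cdots) \parallel \cdots)$ in Rules ${\rm R_{MDEF}}$/${\rm R_{CDEF}}$, whose guard mentions $\donepn$ and whose body is a parallel composition; this does not break the Observation, since that ask is not one of those it is about --- the actual manipulation of the $\pr{\cdot}$/$\countp{\cdot}$ tokens is delegated to the inner asks of $\restenv$, which do have tell bodies --- but your case table should exclude it explicitly.
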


The following invariants show that the \lcc\ model correctly keeps track of the variables and their corresponding pointed objects. 
\begin{proposition}[Invariants on References]\label{obs-inv}
Let $S$ be an AP program and $\mathcal{D}.P$ its corresponding translation into \lcc. Assume that $(\emptyset ; P ; \one) \redi^*  (X ; \Gamma ; c)$. The following holds: 
\begin{enumerate}
 \item If $c \entails \pr{x, o , \unqc, \nogroup}$ then $c \entails \countp{o,s(\zero)}$. 
 \item If $c \entails \pr{x, o ,p, g}$ and $p \in \{\shrc, \immc\}$ then, there exists $n>0$ s.t.  $c \entails \countp{o,n}$. 
 \item If $c \entails \pr{x, o ,p, g}$  and  $c  \entails \pr{x, o' ,p', g'}$ then  $o'=o$, $p' = p$  and $g' = g$.
 \item  If $c \entails \pr{x, \nilp ,p, g}$ then $p=\nonep$ and $g=\nogroup$. 
 \item \red{($\bf{counting}$) 
if $c \entails \countp{o, n}$ then:}
\begin{itemize}
\item[(a)] for all $m\leq n$,  $c \entails \exists x_1,p_1,g_1...,x_m,p_m,g_m  
\bigotimes\limits_{i\in 1..m} \pr{x_i,o,p_i,g_i}$; and
\item[(b)] for all $m> n$, $c \not\entails \exists x_1,p_1,g_1...,x_m,p_m,g_m  
\bigotimes\limits_{i\in 1..m} \pr{x_i,o,p_i,g_i}$
\end{itemize}  
\end{enumerate}
\end{proposition}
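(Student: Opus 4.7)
\medskip

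\noindent\textbf{Proof proposal.} The plan is to proceed by induction on the length $k$ of the reduction sequence $(\emptyset; P; \one) \redi^* (X; \Gamma; c)$. For the base case ($k=0$), the store is $\one$, so no atomic constraint of the form $\pr{\cdot}$ or $\countp{\cdot}$ is entailed and all five properties hold vacuously.

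For the inductive step, I would examine which transitions can modify the store and verify that each preserves the invariants. By Observation~\ref{obs:ask}, all ask agents that produce or consume $\pr{\cdot}$ and $\countp{\cdot}$ tokens are of the form $c \to P$ where $P$ is a tell agent, so the consumption and production happen in a single $\rChoice$ step with no intermediate state. Together with the sequential composition discipline, this means it suffices to verify the invariants locally at each tell step. The relevant cases are: (a) object creation in Rule ${\rm R_{CDEF}}$ via $\consumep'$, which introduces a fresh $o_{new}$ together with $\pr{x',o_{new},\unqc,\nogroup} \otimes \countp{o_{new},s(\zero)}$ and fresh field references with $\nonec$/$\nogroup$; (b) assignment in Rule ${\rm R_{ALIAS}}$ via $\dropp$ and $\gainp$, where $\dropp$ decrements the count by exactly one while consuming one $\pr{\cdot}$ token, and each branch of $\gainp$ either keeps the count (for $\unqc \to \shrc$ where one reference becomes two and count rises by one) or increments by one when producing an extra $\shrc$/$\immc$ reference; (c) parameter passing in $\consumep$/$\restenv$, which behaves symmetrically: for $\shrc$ and $\immc$, one new reference is created and count is incremented, and the symmetric $\restenv$ consumes one reference and decrements; for $\unqc$ and $\nonec$, the reference is merely transferred and the count is unchanged; (d) the axioms $\texttt{upgrade}_i$ and $\texttt{downgrade}_i$, which appear inside entailment checks and whose left- and right-hand sides carry the same $\countp{o,s(\zero)}$ token and exactly one $\pr{x,o,\cdot,\cdot}$ token, so they preserve (1)--(5) automatically.

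Properties (3) and (4) follow from a simple syntactic inspection of all tell actions: a new $\pr{x,\ldots}$ is produced only after (or together with) the consumption of an existing $\pr{x,\ldots}$ on the same variable, so two distinct $\pr{x,\cdot,\cdot,\cdot}$ tokens for the same $x$ can never coexist in the store; moreover, every production of $\pr{x,\nilp,\cdot,\cdot}$ uses the constants $\nonec$ and $\nogroup$. Property~(1) is maintained because $\unqc$ tokens are only created in contexts (constructor, $\texttt{upgrade}_i$, $\restenv$ cases) where the matching $\countp{o,s(\zero)}$ is simultaneously present, and (2) follows because every production of a $\shrc$ or $\immc$ token is accompanied by a count that is strictly positive (either $s(n)$ for some $n$, or the transformation of a $\unqc$ token whose count is $s(\zero)$).

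The main obstacle will be the counting invariant~(5), since it requires a precise global bookkeeping: one must show that the multiset of $\pr{x,o,\cdot,\cdot}$ tokens has size exactly equal to the successor-depth of the unique $\countp{o,n}$ token for each live object $o$. The strategy is to strengthen the statement to a purely local bijection --- every tell that produces or consumes a $\pr{\cdot,o,\cdot,\cdot}$ token either creates/destroys $o$ together with its count (constructor and last-reference $\restenv$/$\dropp$ cases) or adjusts $\countp{o,\cdot}$ by exactly $\pm 1$ in the same atomic step --- and then to read (5)(a) and (5)(b) as consequences of this bijection by induction on $n$, using the multiplicative structure of $\otimes$ in the entailment check.
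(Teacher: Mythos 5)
Your proposal is correct and follows essentially the same route as the paper: an induction over the reduction sequence whose inductive step is an inspection of the (few) encoding rules and definitions that produce or consume $\pr{\cdot}$ and $\countp{\cdot}$ tokens, relying on Observation~\ref{obs:ask} to ensure each such update is atomic, and strengthening item~(5) to the invariant that the number of $\pr{\cdot,o,\cdot,\cdot}$ tokens matches the count for $o$. Your version is in fact more explicit than the paper's sketch, notably in checking that the $\texttt{upgrade}/\texttt{downgrade}$ axioms (through which entailment is computed) also preserve the invariants.
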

\begin{proof}
\red{
An inspection of the encoding reveals  that  the rules ${\rm R_{ALIAS}}$ and ${\rm R_{LOC}}$ and the definitions $\texttt{consume}$, $\restenv$ and $\texttt{fields-init}$
are the only ones that consume/produce $\pr{\cdot}$ and $\countp{\cdot}$ constraints. 
For any newly created variable, 
${\rm R_{LOC}}$  and  $\texttt{fields-init}$ add the needed $\pr{\cdot}$
token adhering to item 4. Moreover, the ask agents in the above rules/definitions adhere to the conditions in Observation \ref{obs:ask}. Therefore,   
if the agent $c\to P$  consumes   a constraint of the form $\pr{x,o,p,g}$, the {\bf tell} process $P$  adds  the needed constraints to maintain  correct the counting of references to $o$. 
 }
\end{proof}

The next proposition shows that the encoding correctly captures the state of statements. 
\begin{proposition}[States]\label{obs-state}
\red{Let $State=\{\callingpn, \runningpn, \donepn\}$, $S$ be an AP program and  $\mathcal{D}.P$ its corresponding \lcc\ translation. Consider an arbitrary execution starting at $P$:}
\[
( \emptyset ; P  ; 1 ) \redi 
( X_1 ; \Gamma_1  ; c_1 ) \redi
( X_2 ;  \Gamma_2  ; c_2 ) \redi \cdots \redi
( X_n ;  \Gamma_n  ; c_n )
\]
\red{Let $z\in X_n$,  $st\in State$, $x\in 1\ldots n$ and assume that $c_x \entails st(z)$. 
Then, }
\begin{enumerate}
 \item \red{({\bf no confusion})  for all $st'\in State \setminus \{st\}$, $c_x \notentails st'(z)$.  }
 
 \item\red{ ({\bf state ordering}) there exists $i\in 1..x$ such that}
\begin{enumerate}
\item \red{({\bf init}) for all $k \in [1,  i)$ and $st'\in State$,  $c_k \notentails st'(z)$. }
\item \red{({\bf continuity})  for all $k \in [i,n]$, $c_k \entails st'(z)$ for some $st'\in State$.}
\item \red{ ({\bf act}) if  $c_n \entails \callingp{z}$  then  for all $k\in [i,n]$, $c_k \entails \callingp{z}$. }
 \item  \red{({\bf act until run}) If $c_n \entails \runningp{z}$ then, there exist  
 two non-empty intervals $A= [i...j_r)$ and $R=[j_r,n]$ s.t. 
 for all $k\in A$, $c_k \entails \callingp{k}$ and
 for all $k\in R$, $c_k \entails \runningp{k}$. }
 \item  \red{({\bf run until end}) If $c_n \entails \donep{z}$ then there are 3 non-empty intervals 
$A=[i..j_r)$, $R=[j_r,...,j_e)$,  $E=[j_e,...,j_n]$ s.t. $A$ and $R$ are as above  and   for all $k\in E$, $c_k \entails \donep{z}$. }
 \end{enumerate}
\end{enumerate}
\end{proposition}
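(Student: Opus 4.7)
The plan is to proceed by induction on the length of the derivation, leveraging the structural regularities of the encoding pointed out in Observation~\ref{obs:ask}. First, I would enumerate the syntactic sites where state predicates for a given identifier $z$ can be produced or consumed. An inspection of the encoding shows that these occur only inside $\wrapp{\cdot,\cdot,z}$ (which posts $\callingp{z}$ and later consumes $\runningp{z}$ to produce $!\pe{z}$), inside $\texttt{assg}$, and in Rules ${\rm R_{COMP}}$, ${\rm R_{SPLIT}}$, ${\rm R_{MDEF}}$, ${\rm R_{CDEF}}$. In every site, the transitions between states follow the same canonical pattern $\callingp{z} \to \runningp{z}$ and $\runningp{z} \to \donep{z}$, with both ask agents being of the shape $c \to P$ where $P$ is a tell, and hence atomic. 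Crucially, the $z$ used here is existentially bound (freshly generated) so no other process in the global configuration can mention it.

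Property~1 (\textbf{no confusion}) then follows from the linearity of $\callingp{\cdot}$ and $\runningp{\cdot}$ together with Observation~\ref{obs:ask}: each transition consumes the previous token and posts the next in one reduction step, leaving no intermediate configuration where two state tokens coexist. Persistence of $\donep{z}$ (stored as $!\pe{z}$) does not cause confusion because it can only be posted after $\runningp{z}$ has already been consumed, so when $\donep{z}$ appears neither $\callingp{z}$ nor $\runningp{z}$ is in the store.

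For Property~2, I take $i$ to be the reduction index at which $\callingp{z}$ is first posted (by $\wrapp{\cdot}$). Part~(a) (\textbf{init}) holds because prior to reaching the $\wrapp{\cdot}$ call no rule mentions $z$ in a state predicate. Part~(b) (\textbf{continuity}) is a simple invariant: the only way to remove a state token is the atomic swap above, which immediately installs the successor token. Parts~(c), (d), (e) are obtained by looking at which atomic swaps have fired by step $n$. If only the first swap has fired, we are in the running phase throughout $[j_r, n]$; if both swaps have fired, we are in the done phase throughout $[j_e, n]$, with $\donep{z}$ persisting by the bang. The key point to verify is that the second swap $\runningp{z} \to \donep{z}$ cannot fire before the body has produced its termination token — this is ensured by the sequential composition ``$;$'' in the encoding: the agent $\runningp{z} \to \donep{z}$ is sequentialised after the body, so it is only enabled after the body has posted its synchronisation constraint (Rules ${\rm R_{CDEF}}$, ${\rm R_{MDEF}}$, ${\rm R_{COMP}}$, ${\rm R_{SPLIT}}$, and $\texttt{assg}$).

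The main obstacle will be bookkeeping: one must verify uniformly across all encoding rules that the three-phase pattern is the only pattern, and that fresh $z$-binding prevents cross-contamination from concurrent copies of the same statement encoding (e.g.\ recursive method calls generating many fresh $z$'s). Once the case analysis over the finitely many encoding rules is tabulated, items~(a)--(e) follow mechanically from the atomicity of tell actions and the semantics of ``$;$'' established in Section~\ref{subsec:lcc-def}.
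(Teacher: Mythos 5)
Your proposal is correct and follows essentially the same route as the paper's proof: both rest on Observation~\ref{obs:ask} (the state-controlling asks are atomic $c\to P$ swaps sequenced by ``;''), on the freshness of $z$ ensuring at most one such ask per state token in any $\Gamma_x$, and on the resulting one-step consume-and-replace discipline that forces the $\callingpn\to\runningpn\to\donepn$ ordering. The paper's argument is just a terser sketch of the same case analysis you tabulate.
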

\begin{proof}
\red{Note that the token $\callingpn(z)$ is added when the encoding of a statement is activated (\texttt{wrap}). An inspection of the encoding shows that the ask agents controlling the state of statements  adhere to  conditions in Observation \ref{obs:ask}. 
Since each executed statement uses a freshly created variable $z$ (see $\rm R_{COMP}$), we can show that,  for any $z$ and multiset  $\Gamma_x$, 
$\Gamma_x$ can contain at most one of each of such ask agents (using $z$). 
 Hence, for all $st\in State$,  if $s(z)$ is consumed 
 from the store $c_x$, the store $c_{x+1}$ must contain the next state $st'(z)$. 
This guarantees the correct ordering of states. }
\end{proof}

\red{We conclude by showing that the encoding enforces the execution of statements  according to the AP specification. More precisely, 
the activation of a statement $s$ is delayed until  
its (lexical) predecessor has successfully consumed the needed permissions;  the execution of $s$  is delayed until its required permissions are available (and consumed); signalling the termination of $s$ is delayed until all the consumed permissions are restored. }

\begin{theorem}[Adequacy]\label{adequacy1}
Let $S$ be an AP program and  $\mathcal{D}.P$ its corresponding \lcc\ translation. Let $s_i$ and $s_j$ be two  sentences 
that occur in the same block and $s_j$ is lexically after $s_i$. Then, 
\begin{enumerate}
 \item (safety)  $s_i$ and $s_j$ are in conflict iff for any reachable configuration $(X ;  \Gamma ; c)$ 
\red{from $(X ; P ; 1)$},   $c \entails \runningpn(z_{s_j})$ implies $c \entails \donep{z_{s_i}}$.
\item (concurrency)   $s_i$ is not in conflict with $s_j$ iff  there exists a reachable configuration $(X ;  \Gamma ; c)$ \red{from $(X ; P ; 1)$} s.t. $c \entails \runningpn(z_{s_i})$ and  $c \entails \runningpn(z_{s_j})$. 
\end{enumerate}

\end{theorem}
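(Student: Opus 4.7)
The plan is to prove both biconditionals by combining the syntactic pipeline imposed by rule ${\rm R_{COMP}}$ with the semantic invariants of Propositions~4.1 and~4.2. First, I would record the baseline fact that, in a block $\{s_1\ldots s_n\}$, rule ${\rm R_{COMP}}$ makes $\cS\os s_j\cs_{z_j}^G$ accessible to reduction only after $\syncp{z_{j-1}}$ is posted, which by inspection of ${\rm R_{ALIAS}}$, ${\rm R_{MDEF}}$, ${\rm R_{CDEF}}$ (and ${\rm R_{SPLIT}}$) happens exactly in the single transition that fires the ask agent $\callingp{z_{j-1}}\to \runningp{z_{j-1}}$. Combined with Observation~4.1 and Proposition~4.2 (no confusion and state ordering), this yields the invariant that $s_j$ cannot reach state $\runningp{z_{s_j}}$ before every earlier $s_k$ in the same block has at least reached $\runningp{z_{s_k}}$, and by part~(2e) cannot drop back out once it is in $\donep{}$.

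For the ``$\Rightarrow$'' direction of safety, I assume $s_i$ and $s_j$ are in conflict as per Definition~4.1 and split on the three conflict cases. In case~(1) (a unique permission is required on a common object $o$), the guard produced by $\texttt{consume}$ in ${\rm R_{MDEF}}$/${\rm R_{CDEF}}$ can fire only via axioms $\texttt{upgrade}_1$ or $\texttt{upgrade}_2$, which demand $\countp{o,s(\zero)}$; Proposition~4.1(1)(3)(5) shows that as long as $s_i$ is in $\runningp{z_{s_i}}$, its internal reference $x'$ (introduced by $\texttt{consume}$) still holds a token $\pr{x',o,\cdot,\cdot}$, so the count on $o$ is at least two and the guard is blocked. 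Case~(2) (a shared vs.\ immutable conflict on $o$) is analogous, using that the only way to move between $\shrc$ and $\immc$ passes through $\unqc$ and again requires a single reference. Case~(3) (shared permissions on distinct data groups) is ruled out because the last parameter of $\pr{\cdot}$ is fixed at the moment of acquisition and the consuming guard in $\texttt{consume}$ for $\shrc\!:\!g'$ demands the matching $\gparam{\cdot,\cdot,g'}$, which coexists with $\gparam{\cdot,\cdot,g}$ only after $s_i$ relinquishes its share in $\restenv$. Finally, $\restenv$ emits the tokens that enable the ask agent $\runningp{z_{s_i}}\to \donep{z_{s_i}}$ before the ask agent $\callingp{z_{s_j}}\to \runningp{z_{s_j}}$ can possibly fire, so whenever $\runningp{z_{s_j}}$ holds, $\donep{z_{s_i}}$ also does.

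For the concurrency claim, I would build a witness trace. Starting from the initial configuration, execute the tell/ask agents of $\cS\os s_i\cs_{z_i}^G$ step by step up through $\callingp{z_{s_i}}\to \runningp{z_{s_i}}$; this is feasible because the program is assumed well-typed with respect to permissions and the $Consume$ phase of the method/constructor definition of $s_i$ admits a proof of its guard from the current store. Once $\syncp{z_{s_i}}$ has been posted, ${\rm R_{COMP}}$ releases $\cS\os s_j\cs_{z_j}^G$; assuming $s_i$ and $s_j$ are not in conflict, Proposition~4.1(5) guarantees that the tokens demanded by $s_j$'s $\texttt{consume}$ coexist with those retained by $s_i$ (two immutable permissions on the same $o$ raise the counter, two shared permissions on the same $g$ do the same, and permissions on distinct objects are independent), so the same reduction sequence leads to a configuration where both $\runningp{z_{s_i}}$ and $\runningp{z_{s_j}}$ hold. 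The ``only-if'' directions in both statements are then the contrapositives: if the configuration with both running existed for a conflicting pair, the case analysis of the previous paragraph would be contradicted; if no configuration with both running existed for a non-conflicting pair, Proposition~4.2(2e) forces one to finish before the other starts, which the witness construction refutes.

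The main obstacle I expect is the interaction between split blocks in rule ${\rm R_{SPLIT}}$ and data group conflicts: two statements in nested or sibling splits that share a data group can reshuffle atomic and concurrent DGAPs through the $\envp{\cdot}$ bookkeeping, so the ``single reference $\Rightarrow$ blocked upgrade'' argument must be lifted to ``no atomic DGAP reachable while a sibling holds a concurrent one''. This amounts to an additional invariant, provable by the same ask-agent inspection used for Propositions~4.1 and~4.2, stating that $\dgp{g,\atomicc,\nostatement}$ and any $\dgp{g,\concc,z}$ never coexist in a reachable store; once that invariant is in hand, the three conflict cases above extend uniformly.
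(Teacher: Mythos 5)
Your proposal follows essentially the same route as the paper's proof: both rest on the sequencing imposed by rule ${\rm R_{COMP}}$, the ask-agent discipline of Observation~\ref{obs:ask}, and the invariants of Propositions~\ref{obs-inv} and~\ref{obs-state}, with a case analysis on conflict types for the forward direction of safety, an explicit witness trace for concurrency, and contrapositive arguments for the converses. If anything, you are more detailed than the paper, which dispatches the non-unique conflict cases as ``similar''; your observation that the data-group case needs an extra invariant (that $\dgp{g,\atomicc,\nostatement}$ and $\dgp{g,\concc,z}$ never coexist) is a legitimate refinement of the same argument rather than a different approach.
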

\begin{proof}
\red{
The execution of assignments, the call to methods/constructors
and the beginning of blocks are the statements we have to synchronize in the encoding. Note that rules 
 ${\rm R_{ALIAS}}$, ${\rm R_ {CDEF}}$,  ${\rm R_ {MDEF}}$, ${\rm R_{SPLIT}}$ and ${\rm R_{COMP}}$ adhere to conditions in Observation \ref{obs:ask}. In particular, condition (3) shows that the changes of states are controlled by acquiring / releasing permissions. 
}

\red{($\Rightarrow$) 1. 
Assume that $s_i$ and $s_j$ both require a unique permission on the same object (the other kind of conflicts are similar). 
From rule ${\rm R_{COMP}}$, we know that $s_i$ first consumes its permissions (before enabling $s_j$). From Propositions \ref{obs-inv} and \ref{obs-state}  we can show that $s_j$ cannot 
move to the state $\runningpn$  until $s_i$ moves to state $\donepn$. }

\red{2. If there are no conflicting resources, then both processes may successfully consume the needed permissions from the store. Consider the  following trace: the encoding of $s_i$ consumes the needed permissions, adds $\runningp{z_{s_i}}$ and the $\syncp{z_{s_i}}$ token. Then, the encoding of $s_2$ can start its execution (consuming $\syncp{z_{s_i}}$), consumes the needed permissions and adds   $\runningp{z_{s_j}}$ to the store.  }

\red{$(\Leftarrow)$ For (1), assume that  in any reachable configuration $(X ; \Gamma ; c)$,  $c \entails \runningpn( z_{s_j})$ implies $c \entails \donep{z_{s_i}}$. By 
Proposition \ref{obs-state} we know that 
$c\notentails \runningp{z_{s_i}}$. Since the encoding maintains correct the number of references (in the sense of Proposition \ref{obs-inv}), there is no reachable store able to entail the permissions needed for both $s_i$ and $s_j$. Hence,   there is a conflicting access in $s_i$ and $s_j$.  The case (2) follows from a  similar argument. }
\end{proof}

%
%
%
%
%
 

\section{Logical Meaning of Access Permissions}\label{sec:verif}
\label{sec:verification}
Besides playing the role of executable specifications,   \ccp-based models can be declaratively interpreted  as formulas in logic
\cite{cp-book,DBLP:journals/toplas/BoerGMP97,NPV02,fages01linear,TCS16}. This section provides additional mechanisms and
tools for verifying properties of AP based
programs. More concretely, we take the \lcc\ agents generated from the AP program  and translate them as an  
intuitionistic linear logic (ILL)
formula. Then, a property specified in ILL is verified with the
\thetool\ LL Prover, a  theorem prover   implemented on top of
Teyjus (\url{http://teyjus.cs.umn.edu}), an implementation of 
$\lambda$-Prolog~\cite{DBLP:conf/iclp/NadathurM88,lProlog}. 

\red{Our analyses are based on reachability properties, i.e., we  verify  the existence of reachable \lcc\ configurations satisfying some conditions.
It turns out that  this is enough for verifying interesting properties of AP programs. For instance,  we can check whether a program is dead-lock free or whether two statements can be executed concurrently. 
 }

\subsection{Agents as Formulas}\label{ag-as-for}
The logical interpretation of \lcc\ agents as formulas
in intuitionistic linear logic ILL 
is defined with the aid of a function $\cL\os \cdot \cs$ defined in Figure \ref{fig:lcc-to-ill} 
\cite{fages01linear}.
As expected, parallel composition is identified with multiplicative conjunction and  ask processes correspond to linear implications. 
Moreover, process definitions are (universally quantified) implications to allow the unfolding of its body. 

%


\begin{figure}
$$
\begin{array}{lll l lll}
\cL\os c \cs& = & c &\quad & 
\cL\os P\parallel Q \cs& = & \cL \os P \cs \ot \cL \os Q \cs\\
\cL\os\sum\limits_{i\in I} \forall \vx_i(c_i \to P_i)\cs& = &
\with_{i\in I} (\forall \vx_i(c_i \lolli \cL\os P \cs)) & \quad &
\cL\os \exists x(P) \cs& = & \exists x.( \cL\os P \cs)\\
\cL \os p(\widetilde{x}) \defsymboldelta P \cs & = &
\forall \widetilde{x}.p(\widetilde{x})\lolli  \cL \os P \cs & \quad &
\cL\os p(\vx)\cs &=& p(\vx)
\end{array}
$$%
\caption{Interpretation of \lcc\ processes as ILL formulas.  \label{fig:lcc-to-ill}}
\end{figure}

%

In what follows, we will show how to use
logic in order to have a better
control of the operational flow and, therefore,
be able to verify properties
of AP programs. 

The first step
consists of interpreting the \lcc\ model  in Section \ref{sec:encoding} as 
ILL formulas via  $\cL \os \cdot \cs$. 
We shall call {\em definition clauses}  to the encoding of   process definitions of the form  $p(\widetilde{x}) \defsymboldelta P$  (i.e., assignment and constructor and method definitions in our encoding)
 and we shall include them 
in a theory
$\Delta$,
together 
with the  axioms of
 upgrade and downgrade in Figure \ref{fig:signature}. 
%
The next example illustrates this translation. For the sake of readability, we shall omit  empty synchronizations such as  $\syncp{z} \otimes (\syncp{z} \to \one)$. 

\begin{figure}
\resizebox{.75\textwidth}{!}{
$
\scriptscriptstyle
\red{
\begin{array}{lll}
\multicolumn{3}{l}{assg(x,y,z,gt)\lolli}\\
 &&\exists z_1.( \forall o,n,g. (\pr{x,o,\nonec,\nogroup}\lolli \one\ot \syncp{z_1}  \with\\
 && \pr{x,o,\unqc,\nogroup}\ot \countp{o,s(n)} \lolli \countp{o,n}\ot\syncp{z_1}  \with \\
&& \pr{x,o,\shrc,g} \ot \countp{o,s(n)} \lolli \countp{o,n}\ot\syncp{z_1}  \with \\ 
&& \pr{x,o,\immc,\nogroup} \ot \countp{o,s(n)} \lolli \countp{o,n}\ot\syncp{z_1})  \ot \\\\
&\syncp{z_1} \lolli&\exists z_2.(\pr{y,\nilp,\nonec,\nogroup} \lolli \\
&&\ \  \pr{x,\nilp,\nonec,\nogroup} \ot \pr{y,\nilp,\nonec,\nogroup} \ot \syncp{z_2}  \\
&&\with\  \forall o,n . (\pr{y,o,\unqc,\nogroup} \ot \countp{o,s(\zeroc)}\lolli\\
&&\tabs  \pr{y,o,\shrc,gt} \ot  \pr{x,o,\shrc,gt}\ot\countp{o,s(s(\zeroc))} \ot \syncp{z_2}) \\
&&\with\ \forall o,n. (\pr{y,o,\shrc,gt} \ot \countp{o,n}\lolli\\
&& \tabs \pr{y,o,\shrc,gt}\ot  \pr{x,o,\shrc,gt}\ot\countp{o,s(n)} \ot \syncp{z_2}) \\
&& \with\ \forall o,n. (\pr{y,o,\immc,\nogroup} \ot \countp{o,n}\lolli \\
&& \tabs \pr{y,o,\immc,\nogroup}\ot  \pr{x,o,\immc,\nogroup}\ot\countp{o,s(n)} \ot \syncp{z_2})\ot\\
&\syncp{z_2} \lolli&\exists z_3.(\callingpn(z)\lolli \runningpn(z)\ot\syncp{z_3}\ot\\
&\syncp{z_3}\lolli &\exists z_4.(\runningpn(z)\lolli \syncp{z_4}\otimes \bang\pe{z})))) .
\end{array}}
$
}

\resizebox{.75\textwidth}{!}{
$
\begin{array}{ll}
\cL\os P\cs =&
\!\!\!\exists c,s,svar,z,z_1,z_2,z_3,z_4,z_5.(
\pr{c,\nilp,\nonec,\nogroup}\ot\\
& \pr{s,\nilp,\nonec,\nogroup}\ot\pr{svar,\nilp,\nonec,\nogroup}\ot\syncp{z_1} \ot \bang\pe{z_1}\ot  \\
& \tabs \syncp{z_1} \lolli collection\_collection(c,z_2) \ot \\
& \tabs \syncp{z_2} \lolli stats\_stats(s,z_3) \ot \\
  & \tabs\syncp{z_3} \lolli assig(svar,s,z_4)\ot\\
  & \tabs \syncp{z_4} \lolli collection\_compStats(c,s,z_5) \ot\\
  & \syncp{z_5}\lolli  \syncp{z})\otimes (\bigotimes\limits_{i\in1..5}\pe{z_i}) \lolli \bang \pe{z} 
\end{array}
$
}

\caption{Encoding of $\texttt{assg}$ definition 
and the $\texttt{main}$ body in Example \ref{ex:deadlock}\label{fig:formula-assg}}
\end{figure}

\begin{example}[Agents as formulas]\label{ex:lcc-2-ll}
Consider the following \lcc\ process definition resulting from the encoding of the constructor of class collection in Figure \ref{fig:ae-code}:

\resizebox{\textwidth}{!}{
$\red{
\begin{array}{lll}
collection\_collection(x, z) &\defsymboldelta& 
 \exists x', o_{new} (1; 
 \pr{x,nil,\nonep,\nogroup}\to \pr{x',o_{new}, \unqc,\nogroup} \otimes \countp{o_{new},s(\zero)} ; 
 1 ; 
 \syncp{z}; \\
 & & \qquad
 \callingp{z} \to \runningp{z} ;
  \exists z'(\syncp{z'}\otimes \bang \donep{z'} \parallel (\syncp{z'} \otimes \donep{z'}) \to \\
  && \qquad\qquad
  \forall o'(\pr{x',o',\unqc,\nogroup} \otimes \countp{o' , s(\zero)} \to 
  \pr{x,o',\unqc,\nogroup} \otimes \countp{o' , s(\zero)}
  ) ; \\
 &&  \qquad\qquad\qquad\runningp{z} \to \bang \donep{z} ))
\end{array}}
$}

\noindent
\red{where the first $1$ corresponds to the  empty parallel composition in  $\texttt{gparam-init}$. From now on, for the sake of readability, we will identify $A\equiv A\ot \one$.}
This process definition gives rise to the following (universally quantified) definition
clause:

\resizebox{\textwidth}{!}{
$
\red{
\begin{array}{lll}
\multicolumn{3}{l}{
\underbrace{\!\!collection\_collection(x, z)\lolli}_{1} \exists x', o_ {new}, w_1.(\underbrace{\pr{x,nil,\nonec,\nogroup}\lolli }_{2}
(\pr{x'\!,\!o_{new} , \unqc,\nogroup} \ot
\countp{o_{new},s(\zeroc)} \ot  \syncp{w_1} \ot}\\ 
&\underbrace{\syncp{w_1}\lolli}_{3}&\exists w_2.\syncp{z}  \ot \syncp{w_2} \ot\\
 &\underbrace{\syncp{w_2}\lolli}_{4}&\exists w_3.\underbrace{\callingpn(z)\lolli}_{5} (\runningpn(z)\ot\syncp{w_3})\ot\\
   &  \ \ \ \ \ \underbrace{\syncp{w_3} \lolli}_{6}&  \exists z',w_4.\syncp{z'}\ot \bang\pe{z'}\otimes  \underbrace{ (\syncp{z'}\ot \pe{z'}\lolli }_{7}\\
&& \forall o'.(\pr{x',\!o',\!\unqc,\nogroup}\ot \countp{o',s(\zeroc)} \lolli \pr{x,o',\unqc,\nogroup} \ot  \countp{o',s(\zeroc)}  \ot\syncp{w_4}))  \ot \\
 & \ \ \ \ \ \ \underbrace{\syncp{w_4} \lolli}_{8}& \runningpn(z)\lolli \bang\pe{z}  ) ).
\end{array}}
$
}

The underlying brackets will be used in Section \ref{complexity} for determining the complexity of decomposing this formula 
The theory $\Delta$ contains the 
definition clause above  and the definition clauses for the other methods and constructors in Figure \ref{fig:ae-code} (i.e., $collection\_sort$, $collection\_print$, etc). $\Delta$ also contains the 
axioms for upgrading and downgrading permissions and the
definition clause resulting from the process definition $\texttt{assg}$ in Figure \ref{ref:rule-assg}. 
In Figure  \ref{fig:formula-assg} we show the encoding for $\texttt{assg}$ as well as the encoding $\cL\os P\cs$ of the main program in 
 Example \ref{ex:deadlock}.  
\end{example}

\subsection{Focusing and adequacy}\label{focusing}
In this section we   show that the translations presented in the last section are neat, in the sense that one computational step  corresponds to one focused  phase in proofs \cite{DBLP:journals/logcom/Andreoli92}. This will not only guarantee that our encodings are {\em adequate} (in the sense that logical proofs mimics {\em exactly} computations), but also it will provide an elegant
way of measuring the complexity of computations via complexity of derivations (see Section~\ref{complexity}).

The approach for this section will be intuitive. The reader interested in the formalization of focusing and various levels of adequacy between ILL and $\lcc$ can check the details in~\cite{TCS16}.

Let us start by analyzing the following two right rules in ILL (for the additive and multiplicative conjunctions): 
$$
\infer[\with_R]{\Gamma\lra F\with G}{\Gamma\lra F &\Gamma\ra G}\qquad
\infer[\otimes_R]{\Gamma_1,\Gamma_2\lra F\otimes G}{\Gamma_1\lra F &\Gamma_2\ra G}
$$
Reading these rules bottom-up, while the first copies the contexts, the second
involves a choice of which formulas should go to  left or right premises.  
Computationally, these behaviors are completely different: while the price to pay on applying $\with_R$  is just the duplication of memory  needed to store formulas in the context, in $\otimes_R$ one has to decide on how to split the context, and this has exponential cost.
These rules are very different from the proof theoretical point of view as well: the first rule turns out to be {\em invertible} in ILL, while the second is not. This implies that the  rule $\with_R$ can be applied {\em anywhere} in the proof, and this will not affect provability. On the other hand, $\otimes_R$ is not invertible and its application may involve backtracking.

The same analysis can be done to all other rules in ILL, giving rise to two disjoint classes of rules: the invertible ones, that can be applied eagerly,  $\{\top_R, 1_L, \otimes_L, \with_R, \lolli_R,\oplus_L,\exists_L,\forall_R, C\}$ and 
the non invertible ones $\{1_R, \otimes_R, \with_L, \lolli_L,\oplus_R,\exists_R,\forall_L, W, D,\prom\}$.

This separation induces a two phase 
proof construction: a {\em negative}, where {\em no
backtracking} on the selection of inference rules is necessary,
and a {\em positive}, where choices
within inference rules can lead to failures for which one may need to
backtrack.

%

An intuitive notion of focusing can be then stated as:  a proof is {\em focused} if, seen bottom-up,
it is a sequence of alternations between maximal
negative and positive phases.

Focusing is enough for assuring that the encoding presented in Section~\ref{ag-as-for} is, indeed, adequate.

\begin{theorem}[Adequacy ~\red{\cite{TCS16}}]\label{adequacy2}
Let $P$ be a process, 
$\Psi$ be a set of process definitions and $\Delta$ be a set of non-logical axioms. 
 Then, for any constraint $c$,
\red{$(\emptyset;P; 1) \redirex(X;\Gamma;d) \mbox{ with }\exists X. d \entails c$} 
iff  there is a proof of  the sequent 
$\bang\cL\os\Psi,\Delta\cs, \cL\os P \cs \lra  c\otimes\top
$
 in  focused ILL (ILLF).  Moreover, 
one focused logical phase corresponds exactly 
to one operational step.
\end{theorem}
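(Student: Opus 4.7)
The plan is to exhibit a bisimulation between $\lcc$ configurations and sequents in focused ILL, and to read off adequacy from the shape of focused phases in the image of $\cL\os\cdot\cs$. Concretely, I would associate to every reachable configuration $(X;\Gamma;d)$ the sequent $\bang\cL\os\Psi,\Delta\cs,\cL\os\Gamma\cs \lra d\otimes \top$ (existentially prefixing the $X$), and show that operational reductions and maximal positive/negative phase alternations are in one-to-one correspondence.

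The first preparatory step is to inspect the polarities in the image of $\cL$. Constraints—built only from $\otimes,1,\exists,\bang$—are positive; ask agents $\sum_{i\in I}\forall\vx_i(c_i\to P_i)$ translate to $\with_{i\in I}\forall\vx_i(c_i\lolli\cL\os P_i\cs)$, whose top connective is additive and whose internal structure is an implication, hence decomposed by a positive phase that must first focus on a branch and a witness and then discharge the guard; parallel composition is $\otimes$, and process definitions in $\bang\cL\os\Psi\cs$ are released by a \emph{decide} on the atomic head $p(\vx)$. A key shape invariant, essentially guaranteed by Observation~\ref{obs:ask}, is that the synchronization tokens ($\syncp{\cdot}$, $\callingpn(\cdot),\runningpn(\cdot),\donepn(\cdot)$) only appear in positions that make each phase self-contained: no half-finished step can be observed between two complete focused phases.

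For the forward direction I would induct on the length of $\redirex$ and case-split on the last rule of Figure~\ref{fig:sos}. The rule $\rTell$ is realised by a positive phase that splits the right tensor and discharges the added constraint by $\init$. The rule $\rCall$ corresponds to a decide on the appropriate clause of $\bang\cL\os\Psi\cs$ followed by an instantiation of $\forall$ and $\lolli_L$ that unfolds the body. The rule $\rLocal$ is exactly an $\exists_L$ in the negative phase. The delicate case is $\rChoice$: after deciding on the encoded sum, one chooses a branch via $\with_L$, instantiates $\forall_L$ with $\vt$, and then the positive phase for $\lolli_L$ has to derive $c_j[\vt/\vx]$ from $d$; this is where the most general choice condition $mgc(d',\vt)$ in Definition~\ref{def:mgc} is essential, since it guarantees that the positive phase does not leak resources (in particular, does not weaken $\bang$-marked constraints), making the residual store match $d'$ exactly.

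The converse direction inducts on the height of the focused proof. Up to permutation of independent negative-phase rules, the topmost focused phase is uniquely determined by the syntactic shape of $\cL\os\Gamma\cs$, and the earlier polarity analysis shows that this phase is either a tell discharge, a decide on a definition clause, an $\exists_L$ for a local, or a decide on a sum; in each case one recovers exactly the operational step prescribed by the corresponding rule of Figure~\ref{fig:sos}, and the residual sequent is the encoding of the next configuration. The closing step notes that the right-hand side $c\otimes\top$ lets $\top_R$ absorb any leftover constraints, yielding $\exists X.d\entails c$ precisely when the positive phase proving $c$ succeeds. The main obstacle throughout is the lockstep claim: that a single operational step corresponds to one \emph{maximal} phase alternation, neither more nor less. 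This is where focusing is indispensable—without it, partial rule applications from distinct steps could interleave—and where the fragment targeted by $\cL$ has been tailored so that every positive phase terminates exactly at the boundary where a new operational rule would fire.
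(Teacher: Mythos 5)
First, a point of comparison: the paper does not actually prove Theorem~\ref{adequacy2} --- it is imported wholesale from \cite{TCS16}, and the surrounding text explicitly defers the formalization of focusing and the adequacy levels to that reference. So there is no in-paper proof to match your argument against; what you have written is a sketch of the standard \lcc/ILLF adequacy argument, and on that level the overall architecture (polarity analysis of the image of $\cL\os\cdot\cs$, forward induction on the reduction sequence with a case split on the rules of Figure~\ref{fig:sos}, backward induction on the focused proof using uniqueness of the next bipole up to permutation of negative rules, and the role of $mgc$ in pinning down the residual store) is the right one.

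There is, however, a concrete flaw in your stated invariant that would make the induction fail as written. You associate to a configuration $(X;\Gamma;d)$ the sequent $\bang\cL\os\Psi,\Delta\cs,\cL\os\Gamma\cs \lra d\otimes\top$, i.e.\ you put the store on the \emph{right}. But the goal of the theorem is a fixed constraint $c$ entailed by the \emph{final} store, and the ask agents are encoded as left implications $\forall\vx(c_i\lolli\cL\os P_i\cs)$ whose positive phase must \emph{consume} the store as a resource from the left context. With $d$ on the right there is nothing for $\lolli_L$ to consume, and the right-hand side changes at every step, so the induction does not compose. The invariant has to be $\bang\cL\os\Psi,\Delta\cs,\cL\os\Gamma\cs, d \lra c\otimes\top$ (with the variables in $X$ treated as eigenvariables introduced by $\exists_L$), the goal $c\otimes\top$ staying fixed throughout; a tell then corresponds to the constraint migrating into the left context, and only the terminal step uses $\otimes_R$/$\top_R$ to discharge $c\otimes\top$ against a store with $\exists X. d\entails c$. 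Relatedly, your treatment of $\rChoice$ glosses over the fact that the side condition $d\entails \exists\vy(d'\otimes c_j[\vt/\vx])$ is itself defined as ILL provability modulo the non-logical axioms in $\Delta$ (Definition~\ref{def:csystem}); in the focused proof this entailment is realized by additional decides on the (negative) axiom clauses of $\bang\Delta$, so the ``one step $=$ one phase'' accounting must either bundle those axiom phases into the ask bipole or count them separately, as the paper's own complexity analysis in Section~\ref{complexity} does. Neither repair is difficult, but both are needed before the lockstep claim goes through.
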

This result, together with Theorem~\ref{adequacy1}, shows that AP can be adequately encoded in ILL in a natural way. In the present work we are more interested in using logic in order to verify properties of the computation, as clarified in the next example.


%
%
%
%
%
%
%

\begin{example}[Traces, proofs and focusing] \label{key}
\red{Let $A_1 =  a \to  b\to (a\otimes b) $, $A_2 = b\to a\to \okc$ and $P = a\otimes b\parallel A_1 \parallel A_2$. }
The operational semantics of \lcc\ dictates that 
there are two possible transitions leading to the store $\okc$. Both of such transitions 
start with the tell action  
 $a\otimes b$:
 \[\scriptsize
 \begin{array}{llll}
 \texttt{Derivation 1:}& 
 \conf{\emptyset}{P}{\one} &\redi^*&  \conf{\emptyset}{A_1 \parallel A_2}{a \otimes b} \redi^*  \conf{\emptyset}{b \to (a \otimes b)\parallel A_2}{ b}  \\
& &\redi^*& \conf{\emptyset}{(a \otimes b)\parallel A_2}{ \one} \redi^* 
 \conf{\emptyset}{A_2}{ a \otimes b}  \redi^*
 \conf{\emptyset}{\cdot}{\okc}\not\redi \\
 \texttt{Derivation 2:} & \conf{\emptyset}{P}{\one} &\redi^*&
  \conf{\emptyset}{A_1 \parallel A_2}{a \otimes b} \redi^* 
  \conf{\emptyset}{A_1 \parallel a\to \okc}{ b} \\
 & &\redi^*& 
  \conf{\emptyset}{A_1 \parallel \okc}{ \one} \redi^*
   \conf{\emptyset}{A_1}{\okc} \not\redi
 \end{array}
 \]
Each of these  transitions  corresponds exactly to
a focused proof of the sequent $\cL\os P\cs\lra \okc\otimes \top$: one focusing first on $\cL\os A_1\cs$
and the other focusing first on $\cL\os A_2\cs$.

On the other hand, 
there is also an interleaved execution of $A_1$ and $A_2$  that does not lead to the final store $\okc$:
$$\scriptsize\begin{array}{llcl}
 \texttt{Detivation 3:} & \langle \emptyset;P;\one \rangle&\redi^*&
\langle \emptyset;A_1 \parallel 
A_2 ;a\otimes b \rangle \redi^*
\langle \emptyset;b\to (a\otimes b)\parallel 
A_2; b \rangle\\\
& &\redi^*&\conf{\emptyset}{b\to (a\otimes b)\parallel a \to \okc}{\one} \not\redi
\end{array}$$
This  trace does not have any correspondent derivation in focused ILL (see~\cite{TCS16} for details). 
\end{example}
This example is a good witness of a need for Alcove's {\em
verifier}, other than just having an {\em animator}: an
animator exhibits traces of possible executions without
any pre-defined scheduling policy. One of such traces
may not lead to the expected final store (as the  $\texttt{ok}$ above).  On the other hand, the verifier would either fail (if a property 
is not provable) or succeed. In this last case, the proof produced
by the prover corresponds {\em  exactly} to a valid trace from the operational point of view. 

Let us show an example of how focusing can control executions on a sequential composition.

\begin{example}[Focusing on a Sequential Composition]
Consider the ILL interpretation of the sequential composition 
 $P;Q$:
\[
\cL\os P;Q\cs=
\exists z ((\cL\os(\cC\os P \cs_z)\cs) \otimes (\syncp{z} \lolli \cL\os Q\cs))
\]
This is a positive formula which will be on the left
side of the sequent and  $\exists$ and $\otimes$
will be decomposed in a negative phase. Once $P$ is executed, we observe the  invertible action of  adding the atom
$ \syncp{z}$ to the context. Then, 
one could change to a positive phase 
and focus on the negative formula
$\syncp{z} \lolli \cL\os Q\cs$. This positive action  
needs to be synchronized with the context,
consuming $ \syncp{z}$ in order to produce $\cL\os  Q\cs$. 
\end{example}

\red{In the following sections, we shall show that a focused ILL prover
is a complete decision procedure for reachability properties of the \lcc\ agents 
resulting from our encodings. This will be useful to verify properties of the encoded AP program. }

\subsection{Linear Logic as a Framework for Verifying AP  Properties}\label{sec:foc}

Let $P$ be an agent and $\cL\os
P\cs$ its translation into ILL,  
producing 
a formula $F$ together with a theory $\Delta$.  
In order to verify a certain property
$\mathcal{G}$, specified by an ILL formula $G$, we test if the
sequent $\bang\Delta,F \lra  G$ is provable.

First of all, observe that the fragment of ILL needed for encoding 
access permissions is given by the following 
grammar for guards/goals $G$ and processes $P$:
$$\begin{array}{lcl}
G & := & a \mid G\ot G \mid  \exists x.G\\
P & := & a \mid ! a\mid 1\mid  P\ot P\mid P\with P\mid
\forall x. G\lolli P\mid\exists x. P\mid  !\forall \widetilde{x}.(p(\widetilde{x})\lolli  P).
\end{array}
$$%
where $a$ is an atomic formula.
Observe that  guards $G$ do not consider banged formulas, i.e.,
agents are not allowed to \emph{ask} banged constraints. A simple inspection on the encoding of Section \ref{sec:encoding} shows that processes in our case indeed belong to such fragment.
We note also  that 
formulas generated from this grammar exhibit the following properties: 
\begin{enumerate}
 \item the left context in the sequent $\bang\Delta,F\lra G$ will be formed by
$P$ formulas;
\item  the right context will have only $G$ formulas;
\item implications on the left 
can only
introduce guards on the right side of a sequent. In fact,
on examining a proof 
bottom-up, decomposing the implication on the sequent $\Gamma_1,\Gamma_2,B\lolli 
C\lra D$ 
will produce the premises $\Gamma_1,C\lra D$ and $\Gamma_2\lra  B$. 
Hence it is important to guarantee 
that $B$ (a guard) is a $G$ (goal) formula.
\end{enumerate}

Finally, notice that the fragment described above is undecidable
in general, due to the presence of processes declarations~\cite{lmss}. 
\red{However, since we are considering AP programs adhering to the condition in Remark \ref{rem-rec-def}, our  base language does not 
lead to cyclic recursive definitions. In next section, we  determine an upper bound for  the complexity of proofs in
Alcove's verifier.   Therefore,  we can show  that 
provability in the resulting ILL 
translation is 
decidable (see Theorem~\ref{decidability}). }


\subsection{Complexity analysis} \label{complexity}
Note that, when searching for proofs in the focused system, the only non-deterministic step 
is the one 
choosing the {\em focus formula} in a positive phase. This
determines completely the complexity of a proof in ILLF and it justifies the next definition.

\begin{definition}[Proof Depth]
Let $\pi$ be a proof in ILLF.  The {\em depth} of $\pi$ is the
maximum number of positive phases along any
path in $\pi$ from the root.
\end{definition} 
\begin{example}[Complexity of Formulas]\label{ex:3}
Consider the formulas in Example \ref{ex:lcc-2-ll}.
\red{The  depth of decomposing
the definition clause  $collection\_collection(x, z)$  
into its literal or purely positive subformulas  is $8$. To see that, note that  focusing in such a negative formula on the left 
will produce 7 more  
nested positive phases in one of the branches of the proof: each one of these phases is signaled in the formula with an underlying bracket containing the respective number of the focused phase.
The same holds when decomposing the clauses for 
$stats\_stats(s, z)$ and $collection\_compStats(c,s,z)$. As we will see later, decomposing 
$assig(svar,s,z)$ has a fixed depth equal to 7. Hence the depth of a derivation
for decomposing the formula $F$  (the model of the $\texttt{main}$ program)   is $8+8+7+8=31$.} 
\end{example}

We will now proceed with a careful complexity analysis of decomposing
the formulas produced by the specification of AP programs. These will be placed
on the left of the sequent. This is done by counting 
the changes 
of
nested
polarities, as in the  example above. The complexity of
decomposing a process $P$ will be denoted by $comp(\cL\os
P\cs)$. 

\begin{itemize}[-]
\item {\bf Base cases.} We will start by presenting the complexity for
decomposing the different kinds of \lcc\ processes:
$$
\scriptsize
\begin{array}{lcl}
comp(\cL\os c\cs)& = & 0 \\
comp(\cL\os   p(x)\cs)& = & 
1+comp(\cL\os P\cs) \quad\mbox{if}  \quad\forall\widetilde{x}.p(\widetilde{x})\defsymboldelta P\\
comp( \cL\os P\parallel Q\cs ) & = &
comp( \cL\os P\cs)+comp( \cL \os Q\cs ) \\
comp(\cL\os\sum\limits_{i\in I} \forall \vx_i(c_i \to P_i)\cs)& = &
1+ max_{i\in I}\{comp( \cL\os P_i \cs)\}\\
comp(\cL\os \exists x(P) \cs)& = & comp( \cL\os P \cs). 
\end{array}
$$
\item {\bf Sequential composition.}
Recall that the process  $P;Q$ was defined in Figure~\ref{fig:seq-comp} with the aid of the function $\cC\os \cdot\cs$.  The complexity of decomposing $\cL\os P;Q\cs$ 
will be given with the help of the auxiliary 
function $comp_{sc}$, that differs from $comp$ only in the
case of the parallel composition:
$$
\scriptsize
\begin{array}{lll}
comp(\cL\os P;Q \cs )&=& 1+ comp_{sc}(\cL\os\cC\os P\cs_z\cs ) +comp(\cL\os Q\cs)\\
comp_{sc}(\cL\os\cC\os P_1 \parallel \ldots\parallel P_n\cs_z \cs)&=& 1+ \sum\limits_{i\in 1..n}comp_{sc}(\cL\os \cC\os P_i \cs_{w_i}\cs)  \\
comp_{sc}(\cC\os P\cs_{z})&=&comp(\cL\os\cC\os P\cs_{z}\cs)\quad
\mbox{in any other case}
\end{array}
$$

In the definition of
$P;Q$,  the constraint
$\syncp{z}$ will always be produced before executing $Q$. 
As already said, these are negative actions and hence do not interfere with the
proof's complexity. However,  if $P$ is  
a parallel composition $P=P_1\parallel \ldots\parallel,P_n$,
then each process $P_i$ will produce  its own
synchronization token, and  all of them will be consumed 
at once in order
to produce the constraint $\syncp{z}$. Hence, the complexity
of decomposing $P;Q$  takes into account 
nested parallel compositions inside $P$. \\

\item {\bf Wrap.} The complexity of decomposing the sub formula
$\wrapp{P,\{g_1,\cdots,g_n\},z}$ is 

\red{
\noindent
\resizebox{.95\textwidth}{!}{
$\begin{array}{lcl}comp(\cL\os\wrapp{P,\{g_1,\cdots,g_n\},z}\cs)&=&
1+comp(\bigotimes\limits_{i\in 1..n} \dgp{g_i,\concc,z} \lolli \true )\\
&&+ comp ( \cL\os P\cs \otimes ( \pe{z} \lolli \bigotimes\limits_{i\in 1..n} \dgp{g_i,\concc,z})) +1\\
&=&n+3+comp(\cL\os P\cs)
\end{array}$
}}

\item {\bf Assignment.}
It is immediate to see that: 
\[
\red{
\begin{array}{lcl}
 comp(\cL\os\gainp(x,y,gt)\cs)
&=& comp(\cL\os\dropp(x)\cs)=1
\\
comp(\cL\os\texttt{assg}(x,y,z,gt)\cs)&=&7
\end{array}}
\]
\red{Hence,
$comp(assign)=comp( \cL\os\wrapp{\texttt{assg}(x,y,z,gt),G,z}\cs)=
7+n+3=n+10
$ 
where $n$ is the number of elements in $G$. Observe that,
when there are no group permissions, the wrap is not
necessary and the complexity is the same as for decomposing
$\cL\os\texttt{assg}(x,y,z,gt)\cs$, which is 7.}\\

\item {\bf Axioms.} The upgrade and downgrade axioms  are negative  formulas. Decomposing them has depth 1.\\

\item {\bf Method definition.} 
Let $m$ 
be the number of parameters of a method and
suppose that, when consuming access permissions, 
one has to {\em upgrade} or {\em downgrade} 
$r$
 of them.
Then, $comp(\consumep)=r+(m+1)+1$, and
\red{
$$
comp(\cL\os {c\_m}(x,\widetilde{y},z)\defsymboldelta P_M\cs)
= r+ m+4+comp(\cL\os Body\cs)
$$}
where $Body$ is
the body of the method (see Rule $\rm R_{MDEF}$). 
On the other hand,
\red{
$$
comp(\cL\os Body\cs)
= comp(\cL \os \cS\os  \widehat{s}\cs _{z} \cs)+m+r+4
$$}
\item {\bf Constructor.}
With $r$ and $m$ 
as before, we have
$$
comp(\cL\os c(x,\widetilde{y},z,g_1,...,g_k)   \defsymboldelta P_C\cs)=2r+2m+comp(\cL \os \cS\os  \widehat{s}\cs _{z} \cs)+11
$$
\end{itemize} 
\begin{theorem}[Complexity]\label{decidability}
Let $\Delta$ be a theory containing the definition clauses for method and constructor definitions, the definition of $\texttt{assg}$ and the upgrade and downgrade axioms. Let  
$F$ be the  formula interpreting the main program
and $G$ be a formula interpreting a property to be proven.  
It is decidable whether or not the sequent
$\bang\Delta,F\lra G$ is provable. 
In fact, if
such a sequent is provable, then its proof
is bounded in ILLF by the depth $comp(F)+1$.
\end{theorem}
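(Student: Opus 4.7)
The plan is to leverage the focusing discipline of ILLF together with the adequacy Theorem~3.1. In focused proof search, the only nondeterminism lies in the choice of a focus formula at the start of a positive phase; the negative phases are invertible and apply eagerly. Hence the depth of a focused proof, defined as the maximum number of positive phases along a branch from the root, is the right measure of proof-search complexity, and it matches the function $comp$ defined above.

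First I would verify that the sequent $\bang\Delta, F \lra G$ lives inside the $P/G$ fragment identified in Section 5.3. This property is invariant under the focused rules: whenever a $P$-formula $\forall x.G_0 \lolli P_0$ is focused on the left, the premise containing $G_0$ appears on the right and stays within $G$-formulas; the premise containing $P_0$ remains a $P$-formula on the left. Bangs appear only around definition clauses, and dereliction produces $P$-formulas again. Consequently, throughout the entire search, the left context is a finite multiset of $P$-formulas plus $\bang$-clauses, and the right side carries a single $G$-formula.

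Next, by structural induction on the process $P$ interpreted by $F$, I would prove that $comp(\cL\os P\cs)$ equals the number of positive phases needed to fully decompose $\cL\os P\cs$ once focused on the left. For a tell $c$ (a $G$-atom seen on the left), decomposition is entirely negative, giving $0$. For a choice, focusing on the additive $\with$ selects a branch, then decomposes the $\forall$ and $\lolli$ negatively, and recurses on the chosen $P_i$, contributing the $1 + \max$ of the recursive values. For a parallel composition, the left-invertible $\otimes_L$ splits the formula without introducing a new positive phase, so the complexities add. For a process call $p(\widetilde{x})$ the unique definition clause $\bang\forall\widetilde{y}.(p(\widetilde{y}) \lolli \cL\os P_0\cs)$ must be dereferenced, which consumes one positive phase (dereliction plus $\forall_L$ and $\lolli_L$) and then recurses on $\cL\os P_0\cs$. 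The sequential composition, wrap, assignment, method, and constructor cases reproduce the recursive equations displayed in Section~5.4.

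The decisive ingredient for termination, and hence for decidability, is Remark~1.1: the source language forbids cyclic recursive definitions. Therefore the call graph of method, constructor, and auxiliary definitions is a directed acyclic graph of finite height, and the unfolding rule $comp(\cL\os p(x)\cs) = 1 + comp(\cL\os P_0\cs)$ terminates with a finite value $comp(F)$. Since focusing reduces nondeterminism to the choice of a focus formula within each positive phase, and the context has bounded size at each step, the set of candidate focused proofs of $\bang\Delta, F \lra G$ of depth at most $comp(F)+1$ is finite and effectively enumerable, yielding decidability.

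The main obstacle lies precisely at this last point: the presence of $\bang\Delta$ together with contraction can in principle cause unbounded reuse of the definition clauses. The argument must show that, in any focused proof, each dereliction on a definition clause is already accounted for by a corresponding subprocess call occurring in $F$, and is therefore bounded by the acyclic syntactic structure of $F$ captured by $comp$. Once this bookkeeping is settled, the extra $+1$ is absorbed by the single final positive phase used to decompose the goal $G$ (itself a positive $G$-formula whose decomposition fits in one positive phase), giving the announced bound $comp(F)+1$ and the decidability of $\bang\Delta, F \lra G$.
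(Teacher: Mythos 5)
Your proposal follows essentially the same route as the paper's proof: both rely on the focusing discipline to reduce proof search to counting positive phases, use the absence of circular recursive definitions to guarantee that $comp(F)$ is finite and that each use of a definition clause from $\bang\Delta$ is matched by a syntactic call already accounted for in the recursive computation of $comp$, and absorb the goal $G$ into one final positive phase, giving the bound $comp(F)+1$. You are somewhat more explicit than the paper about the invariance of the $P$/$G$ fragment under the focused rules and about the danger of contraction on $\bang\Delta$ reusing clauses unboundedly, but the resolution you sketch is exactly the one the paper asserts.
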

\begin{proof} 
\red{First of all, note that, since there are no circular recursive definitions (see Remark~\ref{rem-rec-def}),  methods are simply unfolded.
Moreover, as carefully described above, the complexity of such method calls is taken into account in the complexity of 
the outer method definition (see $\cL \os \cS\os  \widehat{s}\cs _{z} \cs$). This means that, whenever a method, constructor or an axiom is called in $\Delta$ via $F$, its complexity is already computed in the complexity procedure we
have just described.}
Due to the focusing discipline, proving a sequent in AP is equivalent to decomposing its formulas
completely. Therefore,  the complexity of 
the proof of the sequent $\bang\Delta,F\lra G$ is 
completely determined
by the complexity of decomposing $F$ plus the final focusing
in $G$, which is a purely positive formula.
\end{proof}

\subsection{Alcove Prover and Verification of Properties}
In the following, we explain our verification technique for three different kind of properties: deadlock detection; 
the ability of methods to run concurrently; and 
 correctness (whether programs adhere to their specifications or not). 
\red{ Recall, from Example \ref{ex:parallel}, that we have added 
to the predicates $\callingp{\cdot}, \runningp{\cdot}, \donep{\cdot}$
extra parameters to signalize the variable that called the method, the name of the method and the number of line of the source program. Then, for instance, 
in Example \ref{ex:lcc-2-ll}, the definition of the constructor looks like
}
\\

\red{
\resizebox{\textwidth}{!}{
$
\begin{array}{lll}
collection\_collection(x, z, l) &\defsymboldelta& 
 \exists x ...  \callingp{x,'collection\_collection',l,z} \to \runningp{x,'collection\_collection',l,z} ;  \\
 & &  .... \runningp{x,'collection\_collection',l,z} \to \bang \donep{x,'collection\_collection',l,z} 
\end{array}
$}
}

\red{and the encoding of, e.g, line 10 in Figure \ref{fig:ae-code}, is  $collection\_collection(x, z, 10)$. 
}
%

\noindent{\bf Deadlock Detection}.
Consider Example~\ref{ex:deadlock}. We already showed that this code
leads to a deadlock since   $s$ cannot upgrade its unique permission to
execute  \lstinline{c.compStats(s)}. We are then interested in providing a proof to
the programmer showing that the code leads to a deadlock. For doing
this, let $\cD \os De{f}\cs$ be the process definitions for 
the methods and constructors of the example  plus the definition of assignment. Let  $st$ be  the $\texttt{main}$ 
program  
 and consider the $\lcc$ program 
\red{$\cD\os De{f}\cs . \cS\os st \cs_z $ }. 
According to the definition of $\cS\os\cdot\cs$ and $\cD \os \cdot \cs$, 
we know that, for some $z$ and $c$,  $\donep{c,'collection\_compStats', 15,z}$ will be added to the store only when the statement 
\lstinline{ c.compStats(svar)} (in line 15)   is  successfully executed.
The translation of this program  will give rise to the theory
$\Delta$ and the formula $F$ described in 
Example~\ref{ex:lcc-2-ll}. 
The verification technique consists in showing that the sequent
\red{$\bang\Delta, F\lra \exists z,c. {\donep{c,'collection\_compStats',15,z}} \otimes \top$} is not provable.
This verification is done automatically by using Alcove-Prover, a  theorem prover for
ILLF developed in Teyjus and integrated to the tool described
in Section \ref{sec:tool}.  Basically, we look for proofs with depth
less or equal to \red{38}, 
given by the depth of $F$.  
In this case, the  prover  fails, thus showing that  \red{ the process $\cS\os st\cs_z$ cannot reach a store  entailing the constraint   $\exists z,c. {\donep{c,'collection\_compStats',15,z}}$.  }

The URL of the \thetool\ tool includes the output of the theorem prover and the \lcc\ interpreter for this example. It is worth noticing that the \lcc\ interpreter only computes a possible trace of the program while the theorem prover is able to check \red{all the reachable configurations for the same program}. The Alcove prover is completely faithful to the
ILLF fragment  presented in Section~\ref{sec:foc}.

The use of ``animators''  and provers is complementary. Existing formal models for system construction, such as the \emph{Rodin} (\cite{Abrial:2010:ROT:1895404.1895406}) tool for the event B modeling language, usually include both. The idea is that by using the animator the user gain  a global understanding of the behavior of the program before attempting the proof of more precise desirable properties. This usually avoids frustrations in trying to figure out corrections of the model to discharge unproved properties.\\

\noindent {\bf Concurrency Analysis.}
Consider the following \lcc\ agents
\[
\scriptsize
\begin{array}{lll}
P & = &  
\callingp{z_1} \parallel
\callingp{z_1} \to (\runningp{z_1} \otimes \syncp{z_1}) \parallel   \runningp{z_1} \to \bang\donep{z_1}
 \\
Q & = & \syncp{z_1} \to \callingp{z_2} \parallel 
	\callingp{z_2} \to  \runningp{z_2}
	\parallel  \runningp{z_2} \to \bang\donep{z_2}
\end{array}
\]
These processes represent an abstraction of the encoding of two statements $s_1$ and $s_2$ such that $s_2$ must wait until $s_1$ 
releases the program control  by adding $\syncp{z_1}$. 
It is easy to see that from the initial configuration $\gamma = \conf{\emptyset}{ P\parallel Q}{\one}$ we  always end up in the final configuration $\gamma' = \conf{\emptyset}{ \emptyset}{ \bang\donep{z_1} \ot \bang\donep{z_2}}\rangle$ showing that both $s_1$ and $s_2$ were successfully executed. Nevertheless, 
depending on the scheduler, we may observe different intermediate configurations.  For instance, if all the processes in $P$ are first selected for execution, we shall observe the derivation:
\[
\scriptsize
\begin{array}{lll}
\gamma &\redi^* &
\conf{\emptyset}{\runningp{z_1} \otimes \syncp{z_1} \parallel \runningp{z_1} \to \bang\donep{z_1}\parallel Q}{\one} \\
&\redi^* & 
\conf{\emptyset}{\runningp{z_1} \to \bang\donep{z_1}\parallel Q}{ \runningp{z_1} \ot \syncp{z_1}}\\
&\redi^*&
\conf{\emptyset}{Q}{\bang\donep{z_1}}  \redi^*   \gamma'
\end{array}
\]
On the other side, an interleaved execution of $P$ and $Q$  may be

\[
\scriptsize
\begin{array}{lll}
\gamma &\redi^* &
\conf{\emptyset}{\runningp{z_1} \otimes \syncp{z_1} \parallel \runningp{z_1} \to \bang\donep{z_1}\parallel Q}{\one} \\ 
&\redi^*&  \conf{\emptyset}{ P' \parallel Q'}{\runningp{z_1} \otimes \callingp{z_2} } \\
&\redi^* &
\conf{\emptyset}{  P' \parallel Q''}{\runningp{z_1} \otimes \runningp{z_2}}  \redi^*  \gamma'
\end{array}
\]

where $P'=\runningp{z_1} \to \bang\donep{z_1}$,
$Q'= \callingp{z_2} \to (\runningp{z_2} \otimes \syncp{z_2} \parallel Q'')$ and 
 $Q''= \runningp{z_2} \to \bang\donep{z_2}$.
Unlike the first derivation, in the second one we were able to observe the store $\runningp{z_1} \otimes\runningp{z_2}$ representing the fact that both $s_1$ and $s_2$ were executed concurrently. 

From the point of view of  the \lcc\ interpreter, 
the two derivations above are admissible. This means that 
the fact of
not observing in a trace the concurrent execution of two statements does not imply that  they have to be sequentialized due to the AP dependencies. 

We can rely on the logical view of processes to verify whether it is possible for two statements to run concurrently. For instance, consider  the Example \ref{ex:parallel} and let $F$ be the resulting ILL formula.  The following sequent turns out to be provable:
 \[
 \begin{array}{lll}
 \bang\Delta, F \lra &\exists z_1,z_2,c,s(
 \runningp{c,'collection\_print',13, z_1} \otimes \\
 & \tabs\tabs\tabs\ \ \ \ \ \ \ \  \runningp{c,'collection\_compStats',14, z_2}
  )\ot \top
  \end{array}
\]
while the following one is not:
 \[
 \begin{array}{lll}
 \bang\Delta, F \lra &\exists z_1,z_2,c,s(
 \runningp{c,'collection\_compStats',14,z_1}
 \otimes \\
 & \tabs\tabs\tabs\ \ \  \ \ \  \ \ 
  \runningp{c,'collection\_removeDuplicates',15,z_2}
  )\ot \top
  \end{array}
\]

i.e., regardless the scheduling policy, the program will not generate a   trace where $compStats$ and $removeDuplicates$   run concurrently. \\

\noindent{\bf Verifying a Method Specification.}
Finally, assume that class $collection$ 
has a field $a$ and we  define the following method
\begin{lstlisting}[numbers=none]
mistake() unq(this)=>unq(this){
   this.a<g>:=this}
\end{lstlisting}

This method requires that the unique  permission to the caller  must be restored to the
 environment. Nevertheless, the implementation of the method splits the  unique permission into two share permissions, one for the field $a$ and another for the caller (Rule $\rm R_{ALIAS}$). Then, the axiom $upgrade_1$ cannot be used to
 recover the unique permission and the ask agent in definition
 $\restenv$ remains blocked. An analysis similar to that of deadlocks will warn the  programmer about this. 
\red{In general, what we need is to prove sequents of the shape 
$\bang \Delta, \Gamma \lra \exists c,z,l . \donep{c, 'method',l,z} \otimes \top$
where $\Gamma$ contains an  atomic formula needed to start the execution of the method (i.e., a formula of the shape $c\_method(x,\cdots)$) and also the atomic formulas guaranteeing that the method can be executed  ($\pr{x,o,\unqc,\nogroup}$,  $\countp{o,s(0)}$ for the method $mistake$). This can be done, for instance,  by 
 letting $\Gamma = \cL\os \cS\os st\cs  \cs $ where $st$ is  a dummy main program that   creates an instance of \texttt{collection} and then calls the method \emph{mistake}. In this case,  the prover answers negatively to the query  $\bang \Delta, \cL\os \cS\os st\cs  \cs \lra \exists c,z,l . \donep{c,'collection\_mistake',l,z} \otimes \top$, showing that, even satisfying the preconditions of the method \emph{mistake} , it cannot finish its execution. } 

\section{Applications}\label{sec:app}
In this section we present two compelling examples of the use of our verification techniques. One is the well-known mutual exclusion problem where two (or more) processes compete for access to a critical section. In our example there are two critical sections with exclusive access. The other models a producer and a consumer processes concurrently updating a data structure.
\subsection{Two Critical Zones Management System}\label{sec:critical-zone}
Assume the class definitions for a two critical zones management system in Figure  \ref{fig:cz-example}.
\begin{figure}
\begin{lstlisting}
class lock <g> {
  lock() none(this) => unq(this) {}
  enter(process b) unq(this), shr : g(b) => unq(this), shr : g(b){}  }
class process <g>{
  attr  lock<g> lock1,  lock<g> lock2, cs<g> cs1, cs<g> cs2
  process() none(this) => unq(this) {}  }
class cs <g>{ 
  attr  lock<g> mylock
  cs() none(this) => unq(this) {
     this.mylock := new lock<g>()}
  acq1(process b,lock l)unq(this),shr:g(b),none(l)=>shr:g(this),shr:g(b),unq(l){
        l <g>:= this.mylock 
        b.cs1 <g> := this
        this.mylock <g>:= null }
  acq2(process b,lock l)unq(this),shr:g(b),none(l)=>shr:g(this),shr:g(b),unq(l){
        l<g> := this.mylock
        b.cs2 <g>:= this
        this.mylock <g>:= null  }
  release1(lock a,process b)shr:g(this),unq(a),shr:g(b) => unq(this),none(a),shr:g(b){
        this.mylock <g>:= a 
        b.cs1<g> := null 
        a <g>:= null }
  release2(lock a, process b) shr:g(this),unq(a),shr:g(b) => unq(this),none(a),shr:g(b){
        this.mylock <g>:= a     
        b.cs2 <g>:= null 
        a <g>:= null } }
    \end{lstlisting}
  \caption{Class definitions for a two critical zones management system. \label{fig:cz-example}}
\end{figure}
There are three classes, $\texttt{lock}$ (line 1), $\texttt{process}$ (line 4) and $\texttt{cs}$ (line 7). Each critical section has a private lock managed by an object of the class $\texttt{cs}$. When a process wants to enter  the  critical section $i\in \{1,2\}$, it tries first to invoke the method $\texttt{acq}i$  (lines 11 and 15) of the $\texttt{cs}$ manager. If successful, the process obtains a lock (i.e. an object of class $\texttt{lock}$) that it uses then to enter that critical zone (lines 12 and 16). When the process wants to leave the critical zone, it invokes the method $\texttt{release}i$ (lines 19 and 23). This releases ownership of the critical section lock.

Method $\texttt{acq}i$ has three parameters: $\texttt{this}$ (i.e., the $\texttt{cs}$ manager), $\texttt{b}$ the process wanting to enter the critical zone and $\texttt{l}$, a field of $\texttt{b}$ that will hold the lock of the $\texttt{cs}$ supplied by the manager. Since $\texttt{this}$ has unique permission, only one reference to the manager object can exist for $\texttt{acq}i$ to be invoked. The body of method $\texttt{acq}i$ stores the lock in $\texttt{l}$ and a reference to the manager in field $\texttt{cs1}$ or $\texttt{cs2}$ of $\texttt{b}$, depending on whether the lock for $\texttt{cs1}$ or for $\texttt{cs2}$ is requested. Storing this reference to the manager implies that it cannot longer have unique permission, so the output permission for $\texttt{this}$ becomes shared. Moreover, $\texttt{l}$ holds now the only reference to the private lock of the manager, so its output permission becomes unique. The effect is that field $\texttt{lock1}$ or $\texttt{lock2}$ of object $\texttt{b}$ uniquely acquires the section lock.
The method $\texttt{enter}$ (line 3) requires a unique permission on the lock. This ensures that only one process has a reference to the lock at any given time when entering the critical section. The method $\texttt{release}i$ 
restores  conditions as they were before invocation to $\texttt{acq}i$, i.e. the manager regains the unique permission and stores a unique reference to its private lock. Process object fields loose the lock and the reference to the manager.

\begin{figure}
\begin{tabular}{c p{.4cm}  p{.4cm}c}
\begin{lstlisting}
main () {
 group<g>
 let cs x, cs w, 
      process y, process z in
       cs1:= new cs<g>()
       cs2 := new cs<g>()
       p1 := new process<g>()
       p2 := new process<g>()
       cs1.acq1(p1, p1.lock1)
       p1.lock1.enter(p1)
       cs2.acq2(p2, p2.lock2)
       p2.lock2.enter(p2)
       cs1.acq1(p2, p2.lock1)
       p2.lock1.enter(p2)
       cs2.acq2(p1, p1.lock2)
       p1.lock2.enter(p1) 
     end }
\end{lstlisting}
& & & 
\begin{lstlisting}
   ... // constructors
   cs1.acq1(p1, p1.lock1)
   p1.lock1.enter(p1)
   cs2.acq2(p2, p2.lock2)
   p2.lock2.enter(p2)
   cs1.release1(p1.lock1, p1)
   cs1.acq1(p2, p2.lock1)
   p2.lock1.enter(p2)
   cs2.release2(p2.lock2, p2)
   cs2.acq2(p1, p1.lock2)
   p1.lock2.enter(p1)
   cs1.release1(p2.lock1, p2)
   cs2.release2(p1.lock2, p1) 
  end }
\end{lstlisting}
\\\\
(a) Deadlock code & & &  (b) Deadlock free code
\end{tabular}

\caption{Main codes  for the critical zone management system  \label{fig:prod-con-main}}
\end{figure}

Assume now the main code in Figure \ref{fig:prod-con-main} (a) where there are two section manager objects  $\texttt{cs1}$ and $\texttt{cs2}$. There are also two processes, $\texttt{p1}$ and $\texttt{p2}$.  Consider the situation where  $\texttt{p1}$ acquires the lock from  $\texttt{cs1}$ (line 9)  and enters (line 10). Then   $\texttt{p2}$ acquires the lock from 
  $\texttt{cs2}$ and enters (line 11-12). Now, $\texttt{p2}$ tries to acquire the lock from \texttt{cs2} (line 13), but this is not possible because $\texttt{cs1}$ has no longer a unique permission and execution blocks. Alcove reports this situation: 
\begin{lstlisting}
...
calling(X_6136,cs_acq1,line_71 (Z_PAR_18116))
...
[FAIL] Token ok not found. End of the program not reached.
\end{lstlisting}


Consider now the program in Figure \ref{fig:prod-con-main}    (b) where  processes  leave the critical section before attempting to acquire another lock.
In this case, all invocations run without blockage 
 and Alcove successfully finishes the analysis:
 \begin{lstlisting}
...
ended(X_6152,cs_release1,line_113 (Z_PAR_24136))
ended(W_7153,cs_release2,line_114 (Z_PAR_25137))
ok()
3517 processes Created

[OK] Token ok found. End of the program reached.
\end{lstlisting}


\subsection{Concurrent Producer-Consumer System}
\begin{figure}
{
\begin{lstlisting}
class buffer<g>{
   buffer() none(this) => unq(this) { }
   read()  shr : g(this) => shr : g(this) { }
   write()   shr : g(this) => shr : g(this) { }
   dispose() unq(this)  => unq(this) { }}
class lock { 
    lock() none(this)  => unq(this) {} } 
class cs <g>{  
     cs() none(this) => unq(this) {  } 
     acq(lock l) shr : g(this), none(l) => shr : g(this),  unq(l) { 
            l := new lock()  } 
     release(lock l) shr:g(this), unq(l) => shr:g(this),  none(l) { 
             l<g> := null  }  }
class producerConsumer<g>  {
  attr  lock l, cs<g> c 
  producerconsumer() none(this) => unq(this) {
    this.c := new cs<g>() // Initializing the critical section
  }
  produce(buffer<g> B) shr:g(this),shr:g(B) => shr:g(this),  shr:g(B) {
      this.c.acq(this.l)  // Getting a lock on the data structure
      B.write() 
      this.c.release(this.l) //releasing the lock
       }
  consume(buffer<g> B)shr:g(this),shr:g(B)=>shr:g(this),  shr:g(B) {
      this.c.acq(this.l) // Getting a lock on the data structure
      B.read()
      this.c.release(this.l)//releasing the lock
       }
  }
main{
  group <g>
  let producerConsumer<g> PC, buffer<g> B  in {
  B  := new buffer <g>()
  PC := new producerConsumer<g>()
  // produce and consume running in parallel
  split<g>{
     PC.produce(B)
     PC.consume(B)
  } }}
 \end{lstlisting}
    }
  \caption{AP Program for a concurrent producer-consumer system \label{fig:pc-example}}
\end{figure}

 Figure \ref{fig:pc-example}  shows the class definitions for a producer-consumer system working concurrently over a buffer. Class ${\it buffer}$ (line 1) represents the data structure with  operations for reading  (line 3), writing (line 4) and removing the content of the buffer (line 5). Class $producerConsummer$ (line 14) provides methods for adding ($produce$) --line 19-- and remove ($consume$) --line 24-- elements from the data structure. Since these could be invoked concurrently, the class implements a critical section (line 8) representing access to the element  of the data structure the consumer or producer is working on. That is, producing or consuming could in principle be simultaneous over different elements of the structure. To keep the example simple, we assume a single critical section over the whole data structure.
 
 Class $producerConsumer$ defines a group $g$ for processes operating over the data structure (line 14). The group is used to  manage permissions of all processes invoking methods of the class. Since callers of $produce$ and $consume$ both have share group permissions on $g$,  they can be invoked concurrently. This can be seen in the $\texttt{main}$ program (line 30). Variable $PC$ has unique permission over the $producerConsumer$ object. This unique permission is split (line 36) into share permissions for the group to allow  producer and consumer calls to run in parallel. Note, however, that simultaneous access to the buffer is precluded by the need for each process to acquire the lock before (lines 20 and 25). 
 
 As shown, in the excerpt of the Alcove's output in  Figure \ref{fig:output-PC},
 the call to $consume$ (line 4) is done while $produce$ is  still running (line 2). Note also  that before executing $write$ (line 11), the method $produce$ has to acquire the lock on the data structure (lines 5 and 9). Similarly, the execution of $read$ (line 23) (called by the consumer  in line 20) has to wait until the lock is released by the consumer (line 17) and acquired by the producer (lines 18 and 21). 
 
 AP based languages like \aeminium~\cite{CBD:2009} provides abstractions to simplify the (concurrent) access to share objects. For instance, in the example above, we locked the buffer before executing the methods $write$ and $read$ (lines 21 and 26 in  Figure \ref{fig:pc-example}). In \aeminium, it suffices  to wrap the call to these methods into a atomic block of the form:
\begin{lstlisting}[numbers=none]
atomic<g>{
 B.write()
}
\end{lstlisting}
The  \aeminium\ runtime system guarantees that 
the execution of  $write$ on the object pointed by $B$ is isolated, i.e., other methods invoked on the same object must wait until the termination of $write$. We note that the behavior of  $\texttt{atomic}$ blocks relies completely on the runtime system. Since we are interested in the static analysis of AP programs,  we did not considered atomic blocks in the grammar of Figure \ref{fig-syntax}. Note also that what we can analyze statically is whether   methods $produce$ (line 19) and $consume$ (line 24) can acquire a share permission on the buffer $B$. 

 \begin{figure}
\begin{lstlisting}[firstnumber=1]
[...] calling(PC_660,producerconsumer_produce,line_58 (z_2347))
running(PC_660,producerconsumer_produce,line_58 (z_2347))
calling(u_117119,cs_acq,line_41 (z_par_219246))
calling(PC_660,producerconsumer_consume,line_59 (z_2448))
running(u_117119,cs_acq,line_41 (z_par_219246))
calling(inner_294344,lock_lock,line 24 (z_par_307308))
calling(inner_200289,buffer_write,line_42 (z_par_220247))
running(PC_660,producerconsumer_consume,line_59 (z_2448))
running(inner_294344,lock_lock,line 24 (z_par_307308))
calling(u_117119,cs_acq,line_46 (z_par_368395))
running(inner_200289,buffer_write,line_42 (z_par_220247))
ended(inner_294344,lock_lock,line 24 (z_par_307308))
calling(u_117119,cs_release,line_43 (z_par_221248))
ended(inner_200289,buffer_write,line_42 (z_par_220247))
running(u_117119,cs_release,line_43 (z_par_221248))
ended(u_117119,cs_acq,line_41 (z_par_219246))
ended(u_117119,cs_release,line_43 (z_par_221248))
running(u_117119,cs_acq,line_46 (z_par_368395))
calling(inner_516566,lock_lock,line 24 (z_par_529530))
calling(inner_349438,buffer_read,line_47 (z_par_369396))
running(inner_516566,lock_lock,line 24 (z_par_529530))
ended(PC_660,producerconsumer_produce,line_58 (z_2347))
running(inner_349438,buffer_read,line_47 (z_par_369396)) [...]
\end{lstlisting}
 \caption{Excerpt of Alcove's output for the Producer-Consumer Program. \label{fig:output-PC}}
 \end{figure} 

\section{Concluding Remarks}\label{sec:con}
\label{sec:conclusion}

We presented an approach based on \lcc\ for specifying and verifying programs 
annotated with access permissions.  Program statements are modeled as
\lcc\ agents that faithfully represent the statement permissions flow. The
declarative reading of \lcc\ agents as formulas in
intuitionistic linear logic
permits  verifying properties such as
deadlocks, the admissibility of parallel executions, and whether methods are
correct w.r.t. their AP specifications. Central to
our verification approach is the synchronization mechanism based on
constraints, combined with the logical interpretation of \lcc\
into the focused system ILLF. 

A good strategy for
understanding the behavior of a concurrent program is running a
simulator able to observe the evolution of 
its processes, hence having a better glance of the
 global program behavior. Then, 
a prover able to verify formally various properties can be executed. 
For this reason, we have  automated our specification and 
verification approach as the Alcove tool. Using this tool we were able, for instance,
 to
verify the  critical zone
management system and the producer-consumer system presented in Section \ref{sec:app}. The reader can find these and other examples at the
Alcove tool web-site.
The results and techniques  presented here are certainly a
novel application for \ccp, and they will open a new window for the automatic verification
of (object-oriented) concurrent programs.\\

\noindent{\bf Related and Future work.}
 \ccp-based calculi have been
extensively used  to reason about  concurrent systems
in different scenarios such as system biology, security protocols, multimedia interaction systems, 
just to name a few. The reader may find in  \cite{olarte-constraints} a survey of models and  applications of \ccp.  A work related to ours is
\cite{DBLP:conf/ppdp/JagadeesanM05}, where the authors propose  a timed-\ccp\ model for
role-based access control in distributed systems. The authors combine
constraint reasoning and temporal logic model-checking to verify 
when a resource (e.g. a directory in a file system) can be accessed. 
We should also mention the work in ~\cite{DBLP:conf/lics/Nigam12} where
linear authorization logics 
are used to specify 
access control policies that may mention the
affirmations, possessions and knowledge of principals. 
In the above mentioned works,  access policies are used to control and restrict the use of resources in a distributed environment  \red{but   they do not deal with the verification of a (concurrent) programming language. }


Languages like \aeminium~\cite{CBD:2009} and Plaid
 \cite{DBLP:conf/oopsla/SunshineNSAT11} offer a series of guarantees
 such as (1) absence of AP usage protocol violation at run time; (2)
 when a program has deterministic results and (3) whether programs are
 free of race conditions on the abstract state of the objects  \cite{Typestate:Aliased:2007,DBLP:conf/sas/Boyland03}. 
Roughly, type-checking rules generate the needed information to build the  graph of dependencies among the statements in the program. Such annotations are  then used by the runtime environment to determine the pieces of code that can be executed in parallel \cite{DBLP:journals/toplas/StorkNSMFMA14}. 
Well typed  programs are free of race condition 
 by either enforcing synchronization when accessing shared data or by correctly computing dependencies. However, well typed  programs are not  necessarily deadlock free. Hence, our developments are  complementary to those works   and provide additional reasoning techniques for  AP programs.

  Somewhat surprisingly, even though in \cite{DBLP:journals/toplas/StorkNSMFMA14} it is  mentioned that ``\emph{access permissions follow the rules of linear logic}'', the authors did not go further on  this idea.  Our
linear logic encodings can be seen as the  first logic semantics for AP. As showed in this paper, such declarative reading of AP allows to perform interesting static analyses on AP based programs. 
 
%


The constraint system we propose to model the downgrade and upgrade of
axioms was inspired by the work of \emph{fractional} permissions in
\cite{DBLP:conf/sas/Boyland03} (see also
\cite{Typestate:Aliased:2007}).  \emph{Fractional} in this setting
means that an AP can be split into several more \emph{relaxed}
permissions and then joined back to form a more \emph{restrictive}
permission. For instance, a unique permission can be split into two
share permissions of weight $k/2$. Therefore, to recover a unique
permission, it is necessary to have two $k/2$-share permissions. The
constraint system described in this paper keeps explicitly the
information about the fractions by using the predicate
$\countp{\cdot}$.

Chalice \cite{DBLP:conf/vmcai/Leino10} is a program verifier for OO
concurrent programs that uses permissions to control memory
accesses. Unlike \aeminium\ and Plaid, concurrency in Chalice is
explicitly stated by the user by means of execution threads.

 \red{The language Rust (\url{https://www.rust-lang.org/}) provides mechanisms  to avoid data races. These do not use APs but rely on types. Type $mut$ (mutable) works similarly as a $unq$ permission. A data structure defined with type $mut$ is claimed ownership by the first thread using it, so it cannot be taken concurrently by another thread. The compiler checks this statically. Type $Arc$ allows the data structure to be shared among threads, but then it cannot be a mutable structure. This is then similar to $imm$ permissions. Type $Arc$ can be combined with $mutex$ to have a mutable structure that can be shared. A lock mechanism is available for the user to control simultaneous accesses. As opposed to permissions, however, there is no upgrading/downgrading of types. Hence, AP and DGAP provide, in principle,  more flexible mechanisms to express concurrent behaviors. In \cite{ullrich16masterarbeit} a translation of Rust programs into the $Lean$ prover to verify program correctness is described.
As far as we know, however, no verification of the kind we presented here is available for Rust. }

AP annotations in concurrent-by-default OO languages can be enhanced
with the notion of \emph{typestates}
\cite{Typestate:Aliased:2007,TypeState:AtomBlock:2008}. Typestates
describe abstract states in the form of state-machines, thus defining
a usage protocol (or \emph{contract}) of objects. For instance, consider the class
\emph{File} with states $\texttt{opened}$ and $\texttt{closed}$. The
signature of the method \emph{open} can be specified as the agent 
$
\pu{\thisp} \ot \texttt{closed}(\texttt{this}) \to \pu{\thisp} \ot \texttt{opened}(\texttt{this})
$.
The general idea is to verify whether a program follows correctly the
usage protocol defined by the class. For example, calling the method
$read$ on a $\texttt{closed}$ file leads to an error. Typestates then
impose certain order in which methods can be
called.  The approach our paper defines can be 
extended to deal with typestates annotations, thus widening its
applicability.

The work in 
\cite{DBLP:conf/popl/NadenBAB12} defines more specific systems and
rules for access permissions to provide for  \emph{borrowing permissions}. This   
approach  aims at dealing more effectively
with local variable aliasing, and  with how permissions flow from the
environment to method formal parameters. Considering these systems in
Alcove amounts to refine our model of permissions in Section
\ref{sec:encoding}. Verification techniques should remain the same.


%

\paragraph{Acknowledgments}
This work has been partially supported by CNPq and CAPES/Colciencias/INRIA's project STIC AmSud. We thank the anonymous reviewers for their valuable comments on an earlier draft of this paper.
\balance

\bibliographystyle{acmtrans}



\end{document}